\PassOptionsToPackage{table}{xcolor}
\newif\iffull\fullfalse
\fulltrue
\documentclass[acmsmall]{acmart} 

\settopmatter{printfolios=true,printccs=true,printacmref=true}
\title[]{\safetree: Expressive Tree Policies for Microservices}
\bibliographystyle{ACM-Reference-Format}
\usepackage[normalem]{ulem} 
\usepackage[safe]{tipa}
\usepackage{forest}
\usepackage{booktabs}   
\usepackage{subcaption} 
\newcommand{\node}[0]{\textsf{node}}
\usepackage{tcolorbox}
\usepackage{syntax}
\usepackage{listings}
\usepackage{comment}
\lstset{
language=C,
basicstyle=\small\sffamily,
numbers=left,
numberstyle=\tiny,
frame=tb,
columns=fullflexible,
showstringspaces=false
}
\makeatletter
\@addtoreset{equation}{enumi}
\makeatother
\usepackage{mathpartir}

\makeatletter
\let\old@lstKV@SwitchCases\lstKV@SwitchCases
\def\lstKV@SwitchCases#1#2#3{}
\makeatother
\usepackage{lstlinebgrd}
\makeatletter
\let\lstKV@SwitchCases\old@lstKV@SwitchCases

\lst@Key{numbers}{none}{%
    \def\lst@PlaceNumber{\lst@linebgrd}%
    \lstKV@SwitchCases{#1}%
    {none:\\%
     left:\def\lst@PlaceNumber{\llap{\normalfont
                \lst@numberstyle{\thelstnumber}\kern\lst@numbersep}\lst@linebgrd}\\%
     right:\def\lst@PlaceNumber{\rlap{\normalfont
                \kern\linewidth \kern\lst@numbersep
                \lst@numberstyle{\thelstnumber}}\lst@linebgrd}%
    }{\PackageError{Listings}{Numbers #1 unknown}\@ehc}}
\makeatother
\usepackage{lstlinebgrd}
\usepackage{hyperref}
\usepackage{cleveref}
\Crefname{figure}{Fig.}{Figs.}

\newcommand{\kupdates}[1]{}
\newcommand{\kcomments}[1]{{}}

\usepackage{titletoc}
\usepackage{multirow}
\usepackage{makecell}
\makeatletter
\newcommand{\bigplus}{%
  \DOTSB\mathop{\mathpalette\mattos@bigplus\relax}\slimits@
}
\newcommand\mattos@bigplus[2]{%
  \vcenter{\hbox{%
    \sbox\z@{$#1\sum$}%
    \resizebox{!}{0.9\dimexpr\ht\z@+\dp\z@}{\raisebox{\depth}{$\m@th#1+$}}%
  }}%
  \vphantom{\sum}%
}

\usepackage{custommacro}
\setcopyright{cc}
\setcctype{by}
\acmDOI{10.1145/3763127}
\acmYear{2025}
\acmJournal{PACMPL}
\acmVolume{9}
\acmNumber{OOPSLA2}
\acmArticle{349}
\acmMonth{10}
\received{2025-03-26}
\received[accepted]{2025-08-12}
\begin{document}
\title{SafeTree: Expressive Tree Policies for Microservices}
\iffull

\else
\titlenote{This is the conference version of the paper. We defer technical
details and proofs to the full version of the
paper.}
\fi
\author{Karuna Grewal}
\orcid{0009-0008-1280-6892}
\affiliation{%
  \institution{Cornell University}
  \city{Ithaca}
  \country{USA}
}
\email{kgrewal@cs.cornell.edu}
\author{Brighten Godfrey}
\orcid{0009-0003-2930-1982}
\affiliation{%
  \institution{University of Illinois at Urbana-Champaign}
  \city{Urbana-Champaign}
  \country{USA}
}
\email{pbg@illinois.edu}

\author{Justin Hsu}
\orcid{0000-0002-8953-7060}
\affiliation{%
  \institution{Cornell University}
  \city{Ithaca}
  \country{USA}
}
\email{justin@cs.cornell.edu}

\begin{abstract}

A microservice-based application is composed of multiple self-contained components called \textit{microservices}, and controlling inter-service communication is important for enforcing safety properties. Presently, inter-service communication is configured using microservice deployment tools. However, such tools only support a limited class of  \textit{single-hop} policies, which can be overly permissive because they ignore the rich \textit{service tree} structure of microservice calls. Policies that can express the service tree structure can offer development and security teams more fine-grained control over communication patterns. 

To this end, we design an expressive policy language to specify service tree structures, and we develop a \textit{visibly pushdown automata}-based dynamic enforcement mechanism to enforce \textit{service tree} policies. Our technique is non-invasive: it does not require any changes to service implementations, and does not require access to microservice code. To realize our method, we build a runtime monitor on top of a \textit{service mesh}, an emerging network infrastructure layer that can control inter-service communication during deployment. In particular, we employ the programmable network traffic filtering capabilities of Istio, a popular service mesh implementation, to implement an online and distributed monitor.  Our experiments show that our monitor can enforce rich safety properties while adding minimal latency overhead on the order of milliseconds.

\end{abstract}
\begin{CCSXML}
<ccs2012>
   <concept>
       <concept_id>10003752.10003766</concept_id>
       <concept_desc>Theory of computation~Formal languages and automata theory</concept_desc>
       <concept_significance>500</concept_significance>
       </concept>
   <concept>
       <concept_id>10003033.10003099.10003100</concept_id>
       <concept_desc>Networks~Cloud computing</concept_desc>
       <concept_significance>500</concept_significance>
       </concept>
   <concept>
       <concept_id>10003033.10003099.10003105</concept_id>
       <concept_desc>Networks~Network monitoring</concept_desc>
       <concept_significance>500</concept_significance>
       </concept>
   <concept>
       <concept_id>10002978.10002986.10002990</concept_id>
       <concept_desc>Security and privacy~Logic and verification</concept_desc>
       <concept_significance>500</concept_significance>
       </concept>
 </ccs2012>
\end{CCSXML}

\ccsdesc[500]{Theory of computation~Formal languages and automata theory}
\ccsdesc[500]{Networks~Cloud computing}
\ccsdesc[500]{Networks~Network monitoring}
\ccsdesc[500]{Security and privacy~Logic and verification}
\vspace{-0.1in}
\keywords{Microservices; Servicemesh; Visibly Pushdown Automata.}
\maketitle

\section{Introduction} \label{sec:introduction}
Large-scale cloud-based applications are often implemented using the
\textit{microservice} design paradigm, where the application is decomposed into
multiple microservices---that is, services that are loosely coupled and individually have narrowly-defined roles. This design offers separation of concern between the services: individual
services can be developed by independent teams, exposing their service's
functionality over well-defined API interfaces. Each service runs in isolation
in its own runtime-environment called a \textit{container}, listening for incoming traffic on a dedicated port and IP address. Inter-service communication happens over a communication protocol like HTTP or gRPC.

Controlling inter-service communications is important for enforcing safety properties in microservice applications.  For example, a service deployment team may want to split the flow of requests to two versions of their service for A/B testing. In a security-critical setting, a team might want to restrict communication between certain services to enforce security guidelines of their company. Similarly, an audit or a data-compliance team may want to enforce data-protection regulations, like GDPR \cite{gdpr} and  HIPAA \cite{hipaa}, by controlling inter-service exchange of information.

\subsection*{Challenges}
\paragraph*{Limited expressiveness of existing policies}
Today, developers can control inter-service communications with two primary methods: coarse-grained control, and fine-grained control. The first method is exemplified by Kubernetes, the most widely used container orchestration framework, which uses a container network interface (CNI) to control which containers (i.e., services) can communicate with each other. While effective, this kind of policy offers very limited expressivity: communication between services is either fully unrestricted, or entirely prohibited.

The need to offer more fine-grained control of inter-service communications has motivated the use of \textit{service meshes} (e.g., Istio \cite{istio}), which handle request-level communications on behalf of services. The data plane of the service mesh (often realized as the Envoy~\cite{envoydoc} proxy, with one instance of Envoy paired with each service instance) essentially forms a layer between the application and transport protocols, handling incoming and outgoing requests to provide API-level access control, encryption, visibility, load balancing, etc.  Service mesh can be used for monitoring inter-service communication at the API call granularity and enforcing policies, for instance, allowing or rejecting a request based on some HTTP source header value.

Both the above methods only support \emph{single-hop} policies, specifying if a pair of endpoints can communicate (e.g., if service $A$ is allows to call a certain API of another service $B$). 
However, in a microservice application,  a single initial request results in a \emph{service tree} of requests---services make API calls to multiple other services, which in turn call other services. 
Expressing this service tree structure can offer even more fine-grained control
over the communication patterns.

For example, consider a hospital management application where a request for
medical test involves interactions between three services: (1) \test, ~which
receives the request; (2) \obfuscate, which de-identifies patient information;
and (3) \lab, which sends the patient records to an external lab. HIPAA's
data-protection regulations mandate that to preserve the privacy of a  patient,
unnecessary personal health information must be de-identified \citep{hhsdeid}.
Therefore, a compliance team might require that the test functionality calls an
\obfuscate service before \lab.
However, this property cannot be specified as a single-hop policy between the
\test~ and the \lab~ services---it requires reasoning about intermediate service
interactions between these services. In particular, we need \obfuscate~ and
\lab~ services to be invoked by \test, and in that order. Such applications
necessitate policies that can specify the structure of a service tree.

\vspace{-0.08in}
\paragraph*{Enforcement requirements}
Not only are there challenges in expressing rich policies,
enforcing policies in this setting is also challenging. For example,
safety properties are often specified and maintained by teams, like compliance or
deployment, that do not have access to the service code. Therefore, services in a microservice
application will often appear as blackboxes to teams, and a policy enforcement
mechanism should be \textit{non-invasive}, \textit{i.e.,} not require code changes, and \textit{blackbox}, \textit{i.e.,} not require access to code.
Furthermore, the inter-service communication patterns of an
application can change due to dynamic updates of service code or due to to
elastic scaling of the application, where service containers can spin up or down
in response to the load on the application. Thus, the enforcement mechanism
should be able to cope with microservice applications that change dynamically,
rather than being fixed from the outset.

\subsection*{Our approach}
In this work, we consider the question: \textit{``How can we specify and enforce
policies over the rich service tree structure at runtime without invasive
changes to the blackbox service implementation?''} 

Our solution consists of three parts: a policy language, an automata-based enforcement mechanism, and a distributed runtime monitor. The black-box and non-invasive aspects of our solution are crucial for usability. For instance, our policies can be fully decoupled from the application code, enabling them to be written and maintained by teams that do not have access to the service code (e.g., deployment or compliance teams). Accordingly, our solution supports polyglot applications. 

\paragraph*{Expressive policy language for service trees}
First, we design a policy language called \safetree for specifying allowed service tree
structure. Our language offers constructs to specify constraints on the
children, siblings, and subtrees of a service in the tree. \safetree also allows
multi-hop policy over a linear sequence of API calls, without
any reference to the tree structure.

\paragraph*{An automata-based enforcement mechanism}
Second, we design an automata-based distributed runtime monitor for \safetree
policies.  The key idea is that a service tree can be represented as a
\emph{nested word} \cite{nestedwords}, so that policies correspond to sets of
nested words.  Accordingly, we define a compilation procedure from a policy into
a \textit{visibly pushdown automaton} (VPA) \cite{vpa} that accepts the set of
valid nested-words described by the policy. Our VPA-based monitor can be
implemented in a fully distributed manner with no need for a centralized
authority, by carrying the VPA state in a custom configuration header along with
the requests and simulating the VPA transitions locally at the services.

\paragraph*{A distributed runtime monitor}
Lastly, to implement our policy checking mechanism, we develop a prototype
implementation of our monitor on top of the Istio service mesh. In Istio, each
service container is paired with a \textit{sidecar} container running an
Envoy proxy that can be programmed to specify custom traffic
filtering logic involving operations on HTTP headers. We exploit this
customizability to implement a runtime monitor locally at services, simulating
the VPA in a distributed fashion.


\kcomments{To the best of our knowledge, there is no tool that compiles a specification in a specification language to an output automaton that is independent of the application code to be monitored. The closest work is PAL \citep{pal} which instruments a C program with a monitor that follows the semantics of an NWA, but it does not generate an NWA output independent of the program.}

\paragraph*{Outline}
We first motivate a policy language for service trees
(\cref{sec:motivation}) and introduce background on nested words
(\cref{sec:refresher}). We then introduce our primary technical contributions: 
\begin{enumerate}
    \item a policy language with constructs to model the service tree structure and a nested-word semantics of the policies (\cref{sec:policy});
    \item a translation of our policies into VPA and a translation of a VPA into
      a distributed monitor, along with a proof of soundness (\cref{sec:semantics});
    \item a broad range of case studies demonstrating real-world policies that can be
      expressed in our policy language (\cref{sec:case-studies});
    \item and an implementation and evaluation of an efficient and distributed
      runtime monitor over service mesh
      (\cref{sec:implementation,sec:evaluation}).
\end{enumerate}
Finally, we survey related work (\cref{sec:related}) and conclude with future directions (\cref{sec:conclusion}).

\section{A Tour of Service Tree Policies}\label{sec:motivation}
This section motivates service tree policies in the context of a hospital management application.

\subsection{Running example: hospital management application}
Consider a microservice-based hospital management application offering functionalities to request (a) a payment of bills, and (b) a medical test from an external lab. These functionalities are implemented using the following services: \textbf{\front}, \symfront: requests for an appointment or a medical test; \textbf{\test}, \symtest: requests for a lab test; \textbf{\obfuscate}, \symobfuscate: de-identifies personal health information; \textbf{\lab}, \symlab: sends patient records to an external lab; \textbf{\payment}, \sympayment: charges a patient for a service; \textbf{\database}, \symdb: writes to a payment database; \textbf{\eventlogging}, \symlogging~: logs database access events.

In a microservice setting, services communicate using API calls over some
communication protocol, like HTTP or gRPC, etc. In this paper, we focus on
\emph{synchronous} HTTP-based APIs, which follow a request/response pattern.
An API request initiates calls to a fleet of services, which may themselves call further services.  The runtime execution trace of the application when serving a single request to a service  can be viewed as an ordered \emph{service tree} rooted at a node corresponding to the requested API. Each API invoked in a single execution trace is represented as a node in the tree; an edge from a parent to a child node implies the parent endpoint called the child endpoint; the children of a node are ordered according to the order in which they are called by the parent.
Since this runtime behavior can depend on arguments to the call,
responses from child calls, local state at the microservice, etc., a call to a
given API might lead to multiple possible service trees.

To illustrate, let us consider possible service trees for two \front~ functionalities: requesting a medical test, or requesting a payment.
\begin{enumerate}
\item \textit{\textbf{Medical test functionality}}: the execution trace when
  \front sends a request to  \test is represented as a tree in
  \cref{fig:svc-test}. First, the \front~ invokes \test: this is represented as
  the edge \circleblack{1} between a node labeled with \front and \test. Further
  API calls invoked by \test while serving this incoming request are represented
  as outgoing edges from this \test node. In this case, \test first calls
  \circleblack{2} \obfuscate~and then \test calls \circleblack{3} \lab. The
  left-to-right ordering of the children of \test in this tree reflects the
  order of children calls.

\item \textit{\textbf{Payment functionality}}: the service tree in
  \cref{fig:svc-payment} describes the execution trace when \frontend~ requests
  \payment to charge for two bills: first, \circleblack{1} \front~  invokes
  \payment. This service then invokes \circleblack{2} \database~, which then
  invokes \circleblack{3} \eventlogging~. Then \payment~ invokes \circleblack{4}
  \database~ for the second time, before \database~ invokes \circleblack{5} \eventlogging~. 
\end{enumerate}

\begin{figure}
\begin{subfigure}{0.3 \textwidth}
\centering
\begin{tikzpicture}[font=\scriptsize,
			baseline=1ex,shorten >=.1pt,node distance=15mm,on grid,
			semithick,auto,
			every state/.style={fill=white,draw=black,circular
					drop shadow,inner sep=.10mm,text=black,minimum size=0.5cm},
			accepting/.style={fill=gray,text=white}]
		\node (F) {$F$};
		\node (T) [below = of F] {$T$};
        \node (O) [below left = of T] {$D$};
		\node (L) [below right = of T] {$L$};
		\path [-stealth, thick]
		(F) edge [left] node {\circleblack{1}}(T)
		(T) edge [above left] node {\circleblack{2}}(O)
		(T) edge [] node {\circleblack{3}}(L)
		;
	\end{tikzpicture}
 \caption{}
 \label{fig:svc-test}
\end{subfigure}
\hfill
\begin{subfigure}{0.3\textwidth}  
\centering
\begin{tikzpicture}[font=\scriptsize,
			baseline=1ex,shorten >=.1pt,node distance=15mm,on grid,
			semithick,auto,
			every state/.style={fill=white,draw=black,circular
					drop shadow,inner sep=.10mm,text=black,minimum size=0.5cm},
			accepting/.style={fill=gray,text=white}]
		\node (F) [] {$F$};
        \node (A) [below = of F] {$P$};
		\node (P1) [below left = of A] {$D$};
		\node (E1) [below = of P1] {$E$};
  		\node (P2) [below right = of A] {$D$};
		\node (E2) [below = of P2] {$E$};
		\path [-stealth, thick]
        (F) edge [left] node{\circleblack{1}}(A)
		(A) edge [above left] node {\circleblack{2}}(P1)
		(P1) edge [left] node {\circleblack{3}}(E1)
  		(A) edge [] node {\circleblack{4}}(P2)
		(P2) edge [] node {\circleblack{5}}(E2)
		;
	\end{tikzpicture}
 \caption{}
  \label{fig:svc-payment}
\end{subfigure}
\hfill
\begin{subfigure}{0.3\textwidth}
\centering
\begin{tikzpicture}[font=\scriptsize,
			baseline=1ex,shorten >=.1pt,node distance=15mm,on grid,
			semithick,auto,
			every state/.style={fill=white,draw=black,circular
					drop shadow,inner sep=.10mm,text=black,minimum size=0.5cm},
			accepting/.style={fill=gray,text=white}]
		\node (F) [] {$F$};
        \node (A) [below = of F] {$P$};
		\node (P1) [below left = of A] {$D$};
		\node (E1) [below = of P1] {$E$};
  		\node (P2) [below right = of A] {$D$};
		\node (E2) [below = of P2] {$E$};
		\node (Any) [below = of A] {\parbox{5pt}{\center{$E$\\ \circleblack{3}}}};
		\path [-stealth, thick]
        (F) edge [] node{}(A)
		(A) edge [] node {}(P1)
		(P1) edge [] node {\circleblack{1}}(E1)
  		(A) edge [] node {}(P2)
		(P2) edge [] node {\circleblack{2}}(E2)
		(A) edge [] node {}(Any)
		;
		\begin{scope}[on background layer]
    \node[fit=(A) (P1) (E1) (P2) (E2), draw, rectangle, rounded corners, inner sep=10pt, fill=gray!20] {};
    \node[fit=(P1) (E1) , draw, rectangle, rounded corners, inner sep=10pt, fill=blue!20] {};
    \node[fit=(P2) (E2) , draw, rectangle, rounded corners, inner sep=10pt, fill=blue!20] {};
        \node[fit=(Any), draw, rectangle, rounded corners, inner sep=5pt, fill=blue!20] {};
  \end{scope}
	\end{tikzpicture}
    \caption{}
    \label{fig:path}
\end{subfigure}
\caption{Service tree for: (a) ``lab test'' request; (b) ``appointment'' request; (c) tree with annotated paths.}
\end{figure}
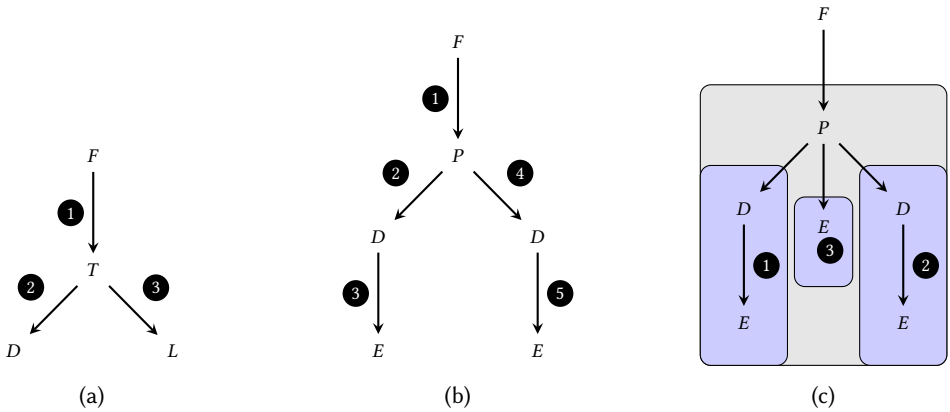
\subsection{Safety Properties}
In our example application, we can imagine several safety policies motivated by
regulatory, business, and security concerns.
\begin{enumerate}
\item \textit{\textbf{Deployment team: A/B testing.}}
  Suppose there are two versions of \payment~and \eventlogging~services, \textsf{v1} and
  \textsf{v2}. For A/B testing, a deployment team may want \payment~ requests from beta
  testers to be served by \textsf{v2} of the \payment ~service and any direct or indirect \eventlogging~ invoked by \textsf{v2} of \payment~should also be of version \textsf{v2}. The labeling of beta users is specified in the \front service, while the services undergoing A/B testing can be multiple hops deeper in the call tree.

\item  \label{list:payment} \textit{\textbf{Security team: Log database access.}}
  If \payment~service accesses the \database ~(as shown in \cref{fig:svc-payment}) to update patient's payment details, this sensitive update should be logged by \database ~ calling \eventlogging~.
\item \label{list:compliance} \textit{\textbf{Data compliance team: HIPAA compliance.}} 
In compliance with the HIPAA guidelines to protect patient privacy from an external lab, the hospital's data-compliance team might want to ensure that a
request to \test~ service (as shown in \cref{fig:svc-test}) should be processed by first calling \obfuscate and then calling the external \lab~.
\end{enumerate}

These safety properties are often specified and maintained by teams that do not
have direct access to the service code.  Therefore, services in the application
will often appear as blackboxes to teams, who may only have visibility into an
application's execution by observing inter-service communication patterns. Since
examining application code is outside the scope of service meshes, we can only
specify policies at the granularity of inter-service communication.  For
example, the data compliance policy above may not be able to enforce the
specific data that is passed from De-identify to Lab.

Existing microservice frameworks allow control over inter-service communication
via \emph{single-hop} constraints, which control calls between single pairs of
services. However, the above policies cannot be directly expressed in terms of
such policies since they involve constraints on the service tree structure,
e.g., descendants, subtrees, etc.
Thus, we need a policy language and enforcement mechanism that is more
expressive than current CNI and service mesh policies.

\subsection{Service tree policies}
To address the above issue, we design \safetree---a policy language for specifying service trees to increase the expressiveness of the policies that can be enforced  strictly using the inter-service communication information available at the service mesh layer. To get a flavor of our language, let us specify the payment database logging policy
(\ref{list:payment}), which requires that any requests from \payment~ to \database~
must call \eventlogging~. In
our language, this policy can be specified as: 
\[\underbrace{\redbox{\match ~\payment}}_{\textbf{1. root}} \allpath \underbrace{(\redbox{\database~~ \eventlogging~~ \kleenestar) ~+~(\notservice{\database}~)^*}}_{\textbf{2. path regex}} \label{lbl:spec}\]
For brevity, we defer some syntactic details to \cref{sec:policy}. For now, we will step through the above specification to understand how different parts of the policy have been expressed:

\begin{itemize}
\item Since the property is described with respect to a   subtree rooted at \payment~ (as marked in the tree in \cref{fig:path}), we need to express the root of the subtree. This is specified as the expression  ``\match ~\payment''  on the left.
\item The condition that \database~invoked by \payment~should always invoke
  \eventlogging~ is described in terms of a regular expression over the
  application endpoints. This is specified as the right-hand side expression
  $(\database~~ \eventlogging~~ \kleenestar) ~+~(\notservice{\database}~)^*$.
  Here, $\kleenestar$ denotes any sequence of the application endpoints and
  $\notservice{\textsf{Database}}$ denotes any endpoint besides \database~. This regular expression describes the valid sequence of API calls in a path  from \payment's children  to any leaf node. For instance, in the tree in \cref{fig:path}, this regular expression  matches the paths labeled with \circleblack{1}, \circleblack{2}, and \circleblack{3}.
\end{itemize}

\subsection{Policy Enforcement}
 The blackbox treatment of services calls for a non-invasive enforcement mechanism that does not require code modification. In our work, the service tree policies are enforced using a distributed runtime monitor based
on \textit{visibly  pushdown automata} (VPA) \citep{vpa}. 

\paragraph*{\textbf{Overview: visibly pushdown automata}}
A VPA  is a restricted type of pushdown automaton where the stack operation is
determined by the symbol being read. For instance, the VPA in
\cref{fig:VPA-payment} has transitions with two kinds of labels: (a) ``\call~
\service/ \push'' and (b) ``\ret~ \service, \pop'', where \service~ is the name
of the service endpoint and \push~ and \pop~ are values to be pushed and popped
on the stack respectively. A \service~ symbol tagged with ``\call'' corresponds
to an HTTP request to \service, and similarly ``\ret'' tags HTTP responses from
\service. Intuitively, the VPA reads a string over \call/\ret ~ tagged symbols,
say ``$\call~\sympayment; \call~\symdb;\ldots;\ret~ \sympayment$'', from left to
right. On a ``\call'' symbol, the VPA pushes a value on the stack wand transitioning to the next state; on a \ret~ symbol, the VPA pops the top of the stack and moves to the next state. For instance, in \cref{fig:VPA-payment}, the edge labeled ``\call~\sympayment / $\qp$'' between \init ~and $\qp$ represents that upon reading $\call~ \sympayment$ at \init~state, $\qp$ is pushed on stack and the VPA transitions to state $\qp$. 
Similarly, the transition at state $\qp$ on symbol ``\ret~ \sympayment''  in \cref{fig:VPA-payment} indicates that the next state will be \init~if top of the stack  is $\qp$ while making that transition.
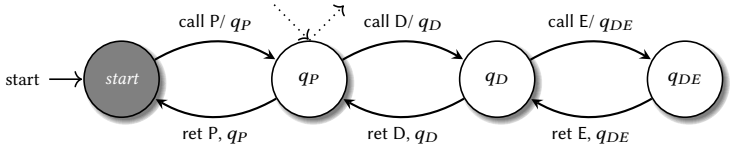
\begin{figure}
\centering
\begin{tikzpicture}[font=\scriptsize,
			baseline=1ex,shorten >=.4pt,node distance=25mm,on grid,
			semithick,auto,
			every state/.style={fill=white,draw=black,circular
					drop shadow,inner sep=.15mm,text=black,minimum size=1cm},
			accepting/.style={fill=gray,text=white}]
		\node (q0) [state, initial, accepting] {$\init$};
		\node (q1) [state, right = of q0] {$\qp$};
        \node (q2) [state, right = of q1] {$\qd$};
		\node (q3) [state, right = of q2] {$\qdl$};
		\path [-stealth, thick]
		(q0) edge [bend left] node {\call~ \sympayment/ $\qp$}(q1)
		(q1) edge [bend left] node {\call~ \symdb/ $\qd$}(q2)
		(q2) edge [bend left] node {\call ~\symlogging~/ $\qdl$}(q3)
        (q3) edge [bend left] node {\ret~\symlogging~, $\qdl$}(q2)
        (q2) edge [bend left] node {\ret~\symdb, $\qd$}(q1)
        (q1) edge [bend left] node {\ret~ \sympayment, $\qp$}(q0)
		;
		\draw[dotted, ->, inner sep=0pt, bend left] (q1.north) +(45:2em) node (tmp) {}
  (q1.north) -- (tmp);
  		\draw[dotted, ->, inner sep=0pt, bend left] (q1.north) ++(135:2em) node (tmp2) {}
  (tmp2) -- (q1.north);
	\end{tikzpicture}
 \caption{Visibly pushdown automaton, which accepts a tree  where \payment~ (\sympayment) invokes \database~(\symdb) as its child, and this \symdb~ invokes ~\eventlogging~(\symlogging~) as its children. Omitted transitions are marked in dotted. The initial and final state is $\init$.\vspace{-0.1in}}
 \label{fig:VPA-payment}
\end{figure}

\paragraph*{\textbf{Service tree---a nested-word view}}
By viewing each node in the service tree as a call and return to/from the
labeled endpoint, the service tree can be modeled as a sequence of nested calls
and returns forming \emph{nested word} \cite{nestedwords}.  For instance, each
node in \cref{fig:tree-payment} corresponds to a \call~and \ret~symbol in its
nested-word view  described in \cref{fig:nw-payment}. The \call ~and \ret~symbol
of a child call is nested between its parent's \call~ and \ret~symbols because a
synchronous API waits for the response from any API that it invokes before
resuming its execution. For instance, in \cref{fig:nw-payment}, the \call~and
\ret~ of the two \symdb~ in \cref{fig:tree-payment} are nested between the
\call~and \ret ~symbols of their parent \sympayment.

By viewing service trees as nested words, we can view policies as sets of
allowed nested words. By carefully designing our policy language, we can ensure
that our policies correspond to languages accepted by VPA, which can check if a
given service tree is permitted. For instance, the service tree in
\cref{fig:tree-payment} is accepted by the VPA in \cref{fig:VPA-payment} because
the VPA accepts the nested word view (described in \cref{fig:nw-payment}) of the
service tree. After starting the VPA run at the initial state, \init, the VPA
arrives at the final state \init, which is an accepting state.
\begin{figure}[h!]
\centering
\begin{subfigure}{0.3\textwidth}
\centering
\begin{tikzpicture}[font=\scriptsize,
			baseline=1ex,shorten >=.1pt,node distance=15mm,on grid,
			semithick,auto,
			every state/.style={circle, fill=white,draw=black,circular
					drop shadow,inner sep=.10mm,text=black,minimum size=0.5cm},
			accepting/.style={fill=gray,text=white}]
    \node (A) at (2, 0) [] {P};
    \node (P1) at (1, -1) [] {D};
    \node  (E1) at (1, -2) [] {E};
    \node  (P2) at (3, -1) [] {D};   
    \node  (E2) at (3, -2) [] {E};   
            
		\path [-stealth, thick]
		(A) edge [] node {}(P1)
		(P1) edge [] node {}(E1)
  		(A) edge [] node {}(P2)
		(P2) edge [] node {}(E2)
		;
	\end{tikzpicture}
 \caption{}
  \label{fig:tree-payment}
\end{subfigure}
\hfill
\begin{subfigure}{0.6\textwidth}
    \centering
    \begin{tikzpicture}[
        node/.style={circle, fill=black, minimum size=6pt, inner sep=0pt, font=\small, text=black}
    ]

    \node[node] (1) at (0, 0) [label=left:call P] {};
    \node[node] (2) at (0, -1) [label=left:call D] {};
    \node[node] (3) at (0, -2) [label=below:call E] {};
    \node[node] (4) at (2, -2) [label=below:ret E] {};
    \node[node] (5) at (2, -1) [label=above:ret D] {};
    \node[node] (6) at (4, -1) [label=above:call D] {};
    \node[node] (7) at (4, -2) [label=below:call E] {};
    \node[node] (8) at (6, -2) [label=below:ret E] {};
    \node[node] (9) at (6, -1) [label=right:ret D] {};
    \node[node] (10) at (6, 0) [label=right:ret P] {};
    \draw[->] (1) -- (2);
    \draw[->] (2) -- (3);
    \draw[->] (3) -- (4);
    \draw[->] (4) -- (5);
    \draw[->] (5) -- (6);
    \draw[->] (6) -- (7);
    \draw[->] (8) -- (9);
    \draw[->] (9) -- (10);

    \draw[->, dotted] (2) -- (5);
    \draw[->, dotted] (6) -- (9);
    \draw[->, dotted] (1) -- (10);
    \draw[->] (7) -- (8);

    \draw[->] (8) -- (9);
    
    \path [-stealth]
    (3) edge [->, bend right, dotted] node {}(4);
    \path [-stealth]
    (7) edge [->, bend right, dotted] node {}(8);
    \end{tikzpicture}
    \caption{}
    \label{fig:nw-payment}
\end{subfigure}
\caption{Service tree for an appointment request: (a) tree view; (b) nested-word view.}
\end{figure}
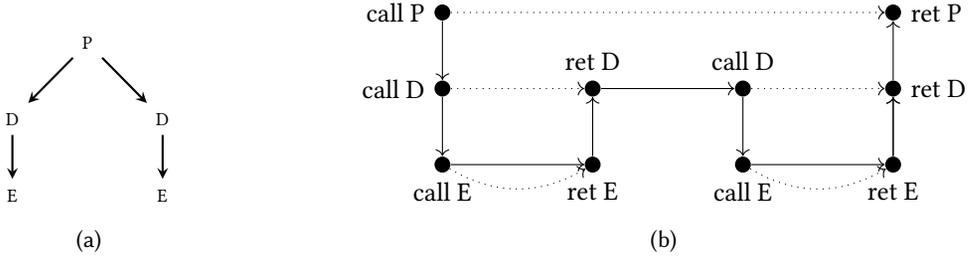
 
\paragraph*{\textbf{Runtime policy checking}}
To check service tree policies,
we develop a compiler to translate \safetree policies into a VPA that accepts
the valid set of nested words, and then we extract a distributed runtime monitor from this VPA. The motivation behind a  distributed design   is the lack of a global view of service trees at services. \Cref{sec:semantics} and \cref{sec:implementation} will detail the design principle behind the distributed monitoring, but the key is that instead of a centralized  monitor, we employ local sub-monitors at services to  simulate VPA transitions on their respective HTTP requests/responses. To facilitate local VPA simulation while giving the illusion of having a centralized VPA, the current VPA configuration is carried in a custom HTTP header. 

To implement this distributed monitor without invasive changes to the service
implementation, we build on top of Istio, a popular service mesh
implementation \cite{istio}. Service mesh is a network infrastructure layer that
abstracts away inter-service communication logic from the service implementation
and offloads it into a sidecar container. Istio pairs each service with a sidecar to manage the service's traffic using the Envoy proxy \citep{envoydoc} running in the container. The proxy intercepts all incoming/outgoing HTTP traffic into the service container. The Envoy proxy can be programmed with custom logic to perform reads and writes to HTTP headers and control access to the service. \Cref{sec:implementation} will detail how we use this programmability to simulate VPA transitions in Envoy.

\section{Background: Service Tree as a Nested-Word}\label{sec:refresher}
We give a refresher on the standard definition of nested words \citep{nestedwords} and introduce some custom notations that will be useful in later sections for encoding service trees as these nested words.

\subsection*{Nested Words Refresher}

We start by recalling basic concepts and notation for nested words.

\begin{definition}[Call-Return Augmented Alphabet]
$\Sigma = \Sigma_c \cup \Sigma_r$ is a \textit{call-return augmented alphabet} for some $\tilde{\Sigma}$ if $\Sigma_c = \{\cs ~\vert~s \in \tilde{\Sigma}\}$, and $\Sigma_r = \{ \rs ~\vert~s \in \tilde{\Sigma}\}$.
\end{definition}

Intuitively, $\cs$ and $\rs$ correspond to an HTTP request or call  to an endpoint $s$ and an HTTP response from an endpoint $s$ respectively.

\begin{definition}[Indexed Symbols]
An \textit{indexed symbol} $a = (e, i)$ is a pair of some symbol $e \in \Sigma$
and an index $i = \idx(a) \in \mathbb{N}^+$, where $\idx$ is the index
projection. Here, $\Sigma$ is a call-return augmented alphabet for some
$\tilde{\Sigma}$. The symbol $a$ is said to be a \textit{call symbol} if $e \in
\Sigma_c$, and a \textit{return symbol} if $e \in \Sigma_r$. We write $C(a)$
when $a$ is a call symbol and $R(a)$ when $a$ is a return symbol. 
\end{definition}

\begin{definition}[Nested Word]\label[definition]{def:nw}
A nested word $n = (w, \nu)$ over $\Sigma = \Sigma_c \cup \Sigma_r$ (a call-return augmented alphabet for some $\tilde{\Sigma} \ni s$) is a pair of: a word $w = a_1 \ldots a_l = (a_i)_{1 \le i \le l}$ such that $\idx (a_1) = k$ for some $k \in \mathbb{N}^+$ and $\idx (a_i) +1 = \idx(a_{i+1})$ for any $1 \le i < l$; and a \textit{matching relation} $\nu \subseteq \{-\infty, k, \ldots, k+l-1\} \times \{k, \ldots, k+l-1, +\infty\}$ that associates call symbols with corresponding return symbols and satisfies:

\begin{enumerate}
\item \textit{Each symbol occurs in only one pair.} For any $a_c = (\cs, i) \in w$, there exists a unique $j \in \{ k, \ldots, k+l-1, +\infty\}$ such that $\mrel{i}{j}$ and if $j \neq +\infty$ then there exists a return symbol $a_r \in w$ with index $\iota(a_r) = j$. For any $a_r = (\rs, j) \in w$, there is a  unique $i \in \{-\infty, k, \ldots, k+l-1\}$ such that $\mrel{i}{j}$ and if $i \neq -\infty$ then there exists a call symbol $a_c \in w$ with index $\iota(a_c) = i$. 
\item \textit{Edges go forward.} If $\mrel{i}{j}$ then $i < j$. 

\item \textit{Edges do not cross.} If $\mrel{i}{j}$,  $\mrel{i'}{j'}$ and $i < i'$ then either the two edges are well-nested (\textit{i.e., } $j' < j$) or disjoint (\textit{i.e.,} $j < i'$).
\end{enumerate}
Let $c \in \Sigma_c$ and $r \in \Sigma_r$. A nested word is \textit{well-matched} if for any $a_c = (c, i) \in w$, there exists an $a_r = (r, j) \in w$ such that $\mrel{i}{j}$; and for any $a_r = (r, j) \in w$, there exists an $a_c = (c, i) \in w$ such that $\mrel{i}{j}$. Here, $a_i$ is a call whose \textit{matched return} is $a_j$.
\end{definition}
\begin{example}
\Cref{fig:formal-nw-example} formally describes the well-matched nested word
representation  for the service tree in \cref{fig:tree-payment}. Here,
$\tilde{\Sigma} = \{\sympayment, \symdb, \ldots\}$ is the set of all endpoint
names, and the call-return augmented alphabet for $\tilde{\Sigma}$ is  $\Sigma =
\{\langle\sympayment, ~\langle\symdb, ~\ldots\} \cup \{\sympayment\rangle,
\symdb\rangle, ~\ldots\}$. In the nested word representation, each API call
in the service tree  will be unfolded into a matched call and return pair. In
this case, the nested word is $n = (a_1 a_2 a_3 a_4 a_6 a_7 a_8 a_9 a_{10}, \nu)$, where $a_1, \ldots, a_{10}$ are indexed symbols; we will sometimes
call the first symbol $a_1$ the \emph{root} of the nested word, since it is root
of the service tree.

The \relmatch $\nu$ is described using the dotted lines. For instance, the first
call to the \payment service or $a_2$ is matched with the return symbol $a_5$.
Note that the first and the last symbol in the nested word, \textit{i.e.,} $a_1$
and $a_{10}$ respectively form a matched call-return pair; due to the inherent
request/response protocol, nested words corresponding to service trees in a
microservice application will always be of this form.
\end{example}
A nested word $n$ can also be sliced into smaller nested words corresponding to
various subtrees of the service tree.

\begin{definition}[A sub-nested word]\label[definition]{def:subnested}
Given $n = (a_1 \ldots a_l,  \nu)$ and $ \idx(a_1) \le i < i' \le \idx(a_l)$,
let $a_j, a_{j'} \in n$ be such that $\idx(a_j) = i$, and $\idx(a_{j'}) = i'$.
We define a sub-nested word $n[i, i'] = (a_j \ldots a_{j'}, \nu[i, i'])$. Here,
$\nu[i, i']$ is the restricted matching sub-relation:
\begin{align}
    \nu[i, i'] &= \{\mrel{p}{q}~\vert~i \le p, q \le i'\}  \nonumber \cup \{(p, +\infty) ~|~ \mrel{p}{q}, ~i \le p \le i', ~ ~q > i'\} \nonumber\\
      &\;\; \cup \{(-\infty, q) ~|~ \mrel{p}{q}, ~i \le q \le i', ~p < i\} \nonumber
\end{align}
\end{definition}

\begin{example}
Consider the service tree in \cref{fig:tree-payment}. Its first subtree rooted
at $\symdb$ consists of the first $\symdb$ and first $\symlogging~$. In
terms of the nested word (in \cref{fig:formal-nw-example}) for this service
tree, the subtree under consideration is given by the sub-nested word $n[2, 5] =
(a_2 a_3 a_4 a_5, \nu')$, where $\nu' = \{(2, 5), ~(3, 4)\}$. 
\end{example}

\begin{figure}[h!]
    \centering
    \begin{tikzpicture}[
        node/.style={circle, fill=black, minimum size=3pt, inner sep=0pt, font=\small, text=black}
    ]
    \node at (3, 0.5) {\textbf{\small $n = (a_1 a_2 a_3 a_4 a_6 a_7 a_8 a_9 a_{10}, \nu)$}};
    \node[node] (1) at (0, 0) [label=left:{\small $a_1 = \labelcallnw{\sympayment}{1}$}, font=\small] {};
    \node[node] (2) at (0, -1) [label=left:{\small $a_2 = \labelcallnw{\symdb}{2}$}, font=\small] {};
    \node[node] (3) at (0, -2) [label=below:{\small $a_3 = \labelcallnw{\symlogging~}{3}$}]{};
    \node[node] (4) at (2, -2) [label=below:{\small $a_4 = \labelretnw{\symlogging~}{4}$}] {};
    \node[node] (5) at (2, -1) [label=above:{\small $a_5 = \labelretnw{\symdb}{5}$}] {};
    \node[node] (6) at (4, -1) [label=above:{\small $a_6 = \labelcallnw{\symdb}{6}$}] {};
    \node[node] (7) at (4, -2) [label=below:{\small $a_7 = \labelcallnw{\symlogging~}{7}$}] {};
    \node[node] (8) at (6, -2) [label=below:{\small $a_8 = \labelretnw{\symlogging~}{8}$}] {};
    \node[node] (9) at (6, -1) [label=right:{\small $a_9 = \labelretnw{\symdb}{9}$}] {};
    \node[node] (10) at (6, 0) [label=right:{\small $a_{10} = \labelretnw{\sympayment}{10}$}] {};
    \draw[->] (1) -- (2);
    \draw[->] (2) -- (3);
    \draw[->] (3) -- (4);
    \draw[->] (4) -- (5);
    \draw[->] (5) -- (6);
    \draw[->] (6) -- (7);
    \draw[->] (8) -- (9);
    \draw[->] (9) -- (10);

    \draw[->, dotted] (2) -- (5);
    \draw[->, dotted] (6) -- (9);
    \draw[->, dotted] (1) -- (10);
    \draw[->] (7) -- (8);

    \draw[->] (8) -- (9);
    
    \path [-stealth]
    (3) edge [->, bend left, dotted] node {}(4);
    \path [-stealth]
    (7) edge [->, bend left, dotted] node {}(8);
    \end{tikzpicture}
\caption{Well-matched nested word for the service tree in \cref{fig:tree-payment}. The dotted line from a node $a_i$ to $a_j$ represents the match relation $\mrel{\idx(a_i)}{\idx(a_j)}$. The projection on the second element of any indexed symbol $a_i$ is $\idx(a_i)$.}
    \label{fig:formal-nw-example}
\end{figure}
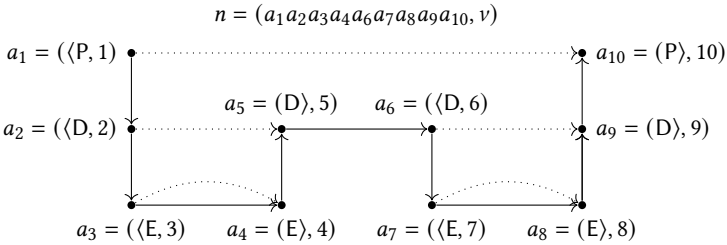

\subsection*{Service Tree Concepts}\label{sec:treeconcepts}
To define our policies, we first define nested-word
equivalents for different parts of a service tree. \kcomments{how can i emphasize that these are notations for our notion of service trees. }

First, we define a set of words that correspond to all paths from the root of the service tree representation of a nested word. Well-matched nested words are said to be \textit{rooted} if the first and the last symbol are matched. 
\begin{definition}[Set of Paths]\label[definition]{def:paths}
    For a \rmnw $n = (a_1 \ldots a_l,\nu)$ over some $\Sigma$, the set of all \textit{call sequences} $Seq(n)$ from the root is given by:
\[
  Seq(n) = \Set {\pi (n, a_m)  | C(a_m), ~a_m \in n},
\]
where $\pi  (n, a_m)$ is a word $w = (a_i)_{i \in I}$ and
\[
  I = \{i ~\vert~ a_i \in n,~\idx(a_i) \le \idx(a_m),~C(a_{i}),~ \mrel{\idx(a_i)}{j},~  j > \idx(a_m)\}.
\]
The word $w$ is a \textit{path} from the root to some subsequent call $a_m \in n$.
\end{definition}

\begin{example}
In the running example in \cref{fig:formal-nw-example}, the set of all paths
in the nested word $n$ is given by 
$  Seq(n) = \{ a_1 a_2, ~ a_1 a_2 a_3, ~a_1 a_6 a_7, ~a_1 a_6\}$. A leaf node is some call symbol that is immediately followed by a return symbol. The set of all path to leaves in $n$ is given by $SeqLeaf(n) =  \{a_1 a_2 a_3,~ a_1 a_6 a_7\}$.
\end{example}

To reason about the children of a node in the tree, we define a set of symbols
corresponding to children of a nested word's root:
\begin{definition}[Child of a root]\label[definition]{def:child}
For a \rmnw  $n = (a_1 \ldots a_l, \nu)$, a call $a_m \in n$ is a \textit{child} of root $a_1$ if $a_1 a_m \in Seq(n)$. The set of children $Child(n)$ is given by:
\[
  Child(n) = \Set{ a_m | \begin{array}{l}
    a_1 a_m \in Seq(n)
  \end{array}}
\]

\end{definition}

In the nested word $n$ in \cref{fig:formal-nw-example}, symbols $a_2$ and $a_6$
are the children of the node $a_1$. Similarly, $a_3$ is the child of $a_2$.  The
sub-nested words $n[2, 5]$ and $n[6, 9]$ form a set of sub-trees rooted at
$a_1$'s children. We formally define the set of \textit{child subtrees of the
root} of a nested word.

\begin{definition}[Child subtree of a root]\label[definition]{def:subtree}
Given a \rmnw $n = (a_1 \ldots a_l, \nu)$ with $a_m$ as one of the children of $a_1$ and some $x \in \mathbb{N}^+$ such that $\mrel{\idx(a_m)}{x}$, a \textit{subtree} rooted at $a_m$ is a \rmnw 
$n' = ((a_j)_{j \in I}, \nu_m)$, where 
$I = \{\idx(a_m), \ldots, x\}$ and $\nu_m = \nu[\idx(a_m), x]$.
The set of subtrees is given by:
\[
  Subtree(n) = \Set{ ((a_j)_{j \in I}, \nu_m) \ | \begin{array}{l}
    a_m \in Child(n), \\
    \exists~ x \in \mathbb{N}^+~\text{such that}~  \mrel{\idx(a_m)}{x}
  \end{array}}
\]
\end{definition}

\section{SafeTree: A Service Tree Policy Language} \label{sec:policy}
Now that we have introduced notation for nested words, we describe our policy
language \safetree, and define an interpretation as sets of nested words.

\subsection{Syntax}
The policy syntax is presented in \cref{fig:syntax}. \safetree policies make extensive use of regular expressions ($reg$) defined over the set of endpoints $\basealpha$. 

Top-level \safetree policies are of the form \hpolicy{\start~S: inner}, where $S \subseteq \basealpha$, and $inner$ can be two kinds of sub-policies: $p$ over the hierarchical structure or $seq$ over the linear structure of the tree. 
Intuitively, top-level policies specifies that ``the $inner$ sub-policy is satisfied on any subtree rooted at an endpoint in $S$ that has no ancestor in $S$''. For instance, if $S = \{\symappt\}$,  $inner$ should be satisfied on all subtrees whenever \symappt is first encountered. We write $\start \star : inner$ as convenient notation for $\start \basealpha : p$; this policy specifies that $inner$ should be satisfied at the root of the entire tree. 

\begin{figure}[h!]
\begin{subfigure}{0.4\textwidth}
    \begin{flalign*}
& \textbf{Regular Expressions} &\\
& a \in \basealpha &\\
& S \subseteq \basealpha &\\
& reg ::= a ~\vert~ \epsilon ~\vert~ \emptyset ~\vert~ \regexone + \regextwo ~\vert~\regexone \regextwo ~\vert~ reg^* & \\
& \\
&\textbf{Service Tree policy} &\\
& policy ::= \start S: p ~\vert~ \start ~S: seq & \\
& &\\
\end{flalign*}
\end{subfigure}
\hfill
\begin{subfigure}{0.4\textwidth}
\begin{flalign*}
&\textbf{Hierarchical Policy} &\\
&p_{l}  ::=  \match~\regexone \allpath \regextwo &\\
&p_{a}  ::= \match~ \regex \allchildren p & \\
&p_{e} ::=  \match ~\regex \existschild p_1 ~\textsf{\textbf{then}}\ldots \textsf{\textbf{then}}~p_k &\\
&p ::=  p_{l} ~\vert~ p_a ~\vert ~ p_e &\\
& &\\
&\textbf{Linear Sequence Policy} &\\
&seq ::= \callseq ~ \regex &
\end{flalign*}
\end{subfigure}
\caption{\safetree syntax}
\label{fig:syntax}
\end{figure}
\paragraph{Inner: Hierarchical Policies}
After specifying the starting symbol of a policy, a \safetree policy needs to
specify a sub-policy to express constraints on hierarchical or the linear
structure of a service tree. All hierarchical policies shown in
\cref{fig:syntax} have ``\match ~\regex'' expression to the left of the
annotated $\implies$ symbol. This expression specifies that there exists a path
from the root of the service tree to some descendant $a_i$ such that the path
matches $\regex$. (Note that $\regex$ in the \match expression should not match
the empty string.) We will step through the syntax of hierarchical policies to
understand the purpose of different right side expressions in these policies in
terms of the service tree. 

The simplest policy is $\match~\regexone \allpath \regextwo$, where $\regexone, \regextwo$ are regular expressions. It expresses the existence of a path from the root to some descendant $a_i$ that matches $reg_1$. Also, all the paths from any of $a_i$'s children to the leaves in the tree match $reg_2$. The path between the tree's root and $a_i$ should be the \textit{shortest} match of $reg_1$, \textit{i.e.,} there should be no prefix of the path that matches $\regexone$. 

\paragraph*{Example} Consider the service tree previously defined in \cref{fig:tree-payment}. Suppose $\kleenestar$ is a syntactic sugar for the regular expression notation for $\basealpha^*$. This tree satisfies  the policy $\match ~\sympayment \symdb^* \allpath \kleenestar$, which requires existence of a path from the root of the tree, \textit{i.e.,} \sympayment that matches exactly one $\sympayment$ and any number of  $\symdb$. If we consider $\sympayment$ as both the root and the descendant $a_i$, we get a path of length one that satisfies the requirement. The two paths $\symdb \symlogging~$ from the children of \sympayment satisfy the regular expression $\kleenestar$ on the right. Therefore, the service tree satisfies the policy. Although there are more paths, like $\sympayment\symdb$ in the tree that could have matched the regular expression on the left, we care about the path which does not have a prefix in the language of the regular expression $\sympayment \symdb^*$.

To specify that all subtrees of some node in a tree should satisfy a policy $p$, we can use the policy $\match~\regex \allchildren p$. This policy specifies that there exists a node $a_i$ such that the path from the root of the tree to $a_i$ matches $\regex$ and all subtrees rooted at children of $a_i$ satisfy policy $p$. This policy specifies a universal condition on a node's subtrees. 

To specify existential conditions on the subtrees, the policy $\match~ \regex \existschild p_1 ~\textsf{\textbf{then}}\ldots \textsf{\textbf{then}}~p_k$ can be used. It specifies that there exists a node $a_i$ such that the path from the root of the tree to $a_i$ matches $\regex$. Also, $a_i$ has a subtree that satisfies $p_1$ followed by another subtree somewhere after it that satisfies $p_2$, and so on till $p_k$. Suppose $c_{i}$ and $c_{i+1}$ are the root node of the subtrees matching $p_i$ and $p_{i+1}$ respectively, where $1 \le i < k$. Let $C$ be the parent of $c_1$ and $c_2$. The node $C$ can have children older than $c_{i}$ and younger than $c_{i+1}$.

\paragraph{Inner: Linear Sequence Policy}
\safetree offers a $\callseq$ construct to specify the desired sequence of API calls as a regular expression, while disregarding the hierarchical tree structure. For instance, in a tree starting at \sympayment, say, we want to express that  $\symdb$ happens after $\sympayment$, without any specific details about the subtrees or paths in the tree where \symdb should occur. This will be specified as $\callseq~ \sympayment\kleenestar\symdb\kleenestar$, where  $\kleenestar$ in the regular expression denotes any sequence of symbols. 
This policy specifies that the depth-first traversal of the tree should match the given regular expression.
\begin{figure}[h]
\begin{align*}
& \hpolicy{policy = \start S: seq} \\
 & \nw{policy} =
  \Set{n = (a_1 \dots a_l, \nu)\ | \begin{array}{l}
    \forall ~\pi(n, a_i) = a_1 \ldots a_i \in Seq(n), \\
    \text{if}~
    \pi(n, a_i) \in FirstMatch(n, (\tilde{\Sigma})^*S)~ \text{then}\\
    \exists ~x~ \text{such that} \\
    \mrel{\idx(a_i)}{x} ~\text{and} ~n[\idx(a_i), x] \in \Seq{seq}
  \end{array}}\\
  & \hpolicy{policy = \start S: p} \\
   & \nw{policy} =
  \Set{n = (a_1 \dots a_l, \nu)\ | \begin{array}{l}
    \forall ~\pi(n, a_i) = a_1 \ldots a_i \in Seq(n), \\
    \text{if}~
    \pi(n, a_i) \in FirstMatch(n, (\tilde{\Sigma})^*S)~ \text{then}\\
    \exists ~x~ \text{such that} ~\mrel{\idx(a_i)}{x}~\text{and}~n[\idx(a_i), x] \in \tree{p}
  \end{array}}\\
 & \hpolicy{p_l = \match~~reg_1 \allpath reg_{2}}\\
 & \tree{p_l} = \Set{n = (a_1 \dots a_l, \nu) \ | \begin{array}{l}
          \exists ~\pi(n, a_i \in n) = a_1 \ldots a_i \in Seq(n)~
    \text{such that} ~\\
    \pi(n, a_i) \in FirstMatch(n, reg_1),  \\
    \exists ~x ~\text{such that}~\mrel{\idx(a_i)}{x}~\text{and}\\
    \quad \forall n_s \in Subtree(n[\idx(a_i), x]),\\
    \quad \quad \forall \pi' \in SeqLeaf(n_s), ~Calls ( \pi' ) \in \mathcal{L}(reg_2)
  \end{array}}\\
  & \hpolicy{p_a = \match~ \regex \allchildren p}\\
  & \tree{p_a} = \Set{n = (a_1 \dots a_l, \nu)\ | \begin{array}{l}
    \exists ~\pi(n, a_i \in n) = a_1 \ldots a_i \in Seq(n)~
    \text{such that} ~\\
    \pi(n, a_i) \in FirstMatch(n, \regex),  \\
 \exists ~x ~\text{such that}~\mrel{\idx(a_i)}{x}~\text{and}~Subtree(n[\idx(a_i), x]) \subseteq \tree{p}
  \end{array}} \\
   & \hpolicy{p_e = \match~ \regex \existschild p_1 ~\textsf{\textbf{then}}\ldots \textsf{\textbf{then}}~p_k}\\
  &\tree{p_e} = 
  \Set{n = (a_1 \dots a_l, \nu)\ | \begin{array}{l}
    \exists ~\pi(n, a_i \in n) = a_1 \ldots a_i \in Seq(n)~
    \text{such that} \\
    \pi(n, a_i) \in FirstMatch(n, \regex),  \\
    \exists ~x ~\text{such that}~\mrel{\idx(a_i)}{x}~\text{and}~\\
    \quad \exists ~\{t_1, \dots, t_k\} \subseteq Subtree(n[\idx(a_i), x])~
    \text{such that} \\
     \quad \quad \text{for any}~ 1 \le j < k, ~\idx(a^j_1) < \idx(a^{j+1}_1), \text{and}\\
    \quad \quad \text{for any}~ 1 \le j \le k,~ t_j \in \tree{p_j},\\
    \text{where}~\forall 1 \le j \le k, t_y = (a^{j}_1 \ldots, \nu^j)
  \end{array}} \\
  & \hpolicy{seq = \callseq ~\regex} \\
    &\Seq{seq} = 
  \{n = (a_1 \dots a_l, \nu) \mid
 Calls((a_{x})_{x \in I} ) \in \mathcal{L}(reg), ~
 \text{where}~I = \{\idx(a_j) \mid C(a_j),~a_j \in n\}\}
\end{align*}
\caption{\safetree semantics (where $Subtree$ is defined in \cref{def:subtree})}
\label{fig:intp}
\end{figure}
\vspace{-0.5em}
\subsection{Nested word interpretation for service trees}
Formally, we interpret a \safetree policy as a set of \rmnw. 
Since matching paths with regular expressions is common across \safetree policies, we define a nested word variant for a set of shortest paths matching some regular expression. 

\begin{definition}[First or Shortest Match Path]
Given a nested word $n = (a_1 \ldots a_l, \nu)$ over $\Sigma$, which is a call-return augmented alphabet for some $\tilde{\Sigma}$, a path $\pi(n, a_m) \in Seq(n)$ is the \textit{first match} for some regular expression $reg$ over $\tilde{\Sigma}$  (or $ \pi(n, a_m) \in FirstMatch(n, reg)$) if
$w = Calls (\pi(n, a_m))$ is matched by $reg$, and there is no smaller prefix of $w$ that matches $reg$. 
\[FirstMatch(n, reg) = \Set{\pi (n, a_m) \in  Seq(n) \ | \begin{array}{l}
     \text{for all} ~ 1 \le j < m, ~\text{the following holds} \\
     s_1 \ldots s_j \notin \mathcal{L}(reg), ~
     w \in \mathcal{L}(reg), \\
     \text{where}~ w = Calls( \pi(n, a_m)) = s_1 \ldots s_m
  \end{array}}
\]
\end{definition}
Here, $Calls$ of a path rewrites every augmented call symbol with its counterpart in $\basealpha$. 
\begin{definition}[Calls Projection of a Path]\label[definition]{def:apx-call}
Given a nested word $n = (a_1 \ldots a_l, \nu)$ over base alphabet $\basealpha$ and a path $\pi(n, a_m) = a_1 \ldots a_m$, we define 
\[Calls (a_1 \ldots a_m) = s_1 \ldots s_m, ~\text{where}~s_i = \textsf{proj}(a_i)~\text{for any}~1 \le i \le m.\]
Here, $\textsf{proj}(a)$ projects the base symbol of an indexed symbol, \textit{i.e.,} for any $a = (\langle s, i)$ or $a = (s \rangle, i)$ (with some index $i$), we define $\textsf{proj}(a)=s$.
\end{definition} 
\vspace{-0.1in}
The nested word interpretation $\nw{p} \subseteq {NestedWords(\Sigma)}$ for a service tree policiy $p$ is  defined in \cref{fig:intp}. Here, $NestedWords(\Sigma)$ is the set of all nested words over $\Sigma$. 
For a nested word $n = (a_1 \ldots a_l, \nu)$ to be accepted by a service tree policy $\start~S: ~inner$, the $inner$ policy  should be satisfied on every sub-nested word of $n$ that is rooted at some symbol  $a_i \in S$ and the path from the root of the nested word to the symbol $a_i$ should contains no symbol in $S$ besides $a_i$. In the $\nw{p}$ definition in \cref{fig:intp}, this condition is formally expressed by requiring the path between the root and $a_i$ to be a first match of regular expression $(\basealpha)^*S$. 

%
The nested word interpretation $\tree{p} \subseteq NestedWords(\Sigma)$ in
\cref{fig:intp} for a hierarchical policy first defines an existential
constraint on a path from the root to some node, followed by specific constraints
on either path until the leaves, or on sub-nested words (or subtrees).   
As defined in \cref{fig:intp}, a linear policy accepts a nested word $n \subseteq \Seq{p}$ if the sequence of call symbols in $n$ match the given regular expression in $p$.

\section{Enforcement} \label{sec:semantics}

The SafeTree policies are enforced by a visibly pushdown automaton (VPA)-based monitor. \kcomments{While VPAs \cite{vpa} have been well-studied, to the best of our knowledge, there is no tool that compiles a specification in a language to an output VPA automaton that is independent of the application code to be monitored at runtime.} For this, we define a compiler from our policy language to a VPA. The VPA is then used to check if a service tree is valid. Before we detail our VPA-based enforcement mechanism, we first give a refresher on the standard VPA model:
\begin{definition}[Visibly pushdown automaton]
A (deterministic) \emph{visibly pushdown automaton} (VPA) is defined as $\mathcal{M} = (Q, q_{init}, F, \Sigma, \Gamma, \bot, \delta_c, \delta_r)$, where:
\begin{itemize}
\item $Q$ is the set of all states, $q_{init} \in Q$ is the initial state, and $F \subseteq Q$ is the set of final states,
\item $\Sigma$ is the alphabet, where $\Sigma = \Sigma_c  \cup \Sigma_r$ consists of call symbols $\Sigma_c$ and return symbols $\Sigma_r$,
\item $\Gamma$ is the set of stack symbols with a special bottom of stack symbol $\bot \in \Gamma$,
\item $\delta^p_c: \calltype$ is the call transition function,
\item $\delta^p_r: \rettype$ is the return transition function.
\end{itemize}
\end{definition}
Note that we do not need the conventional  \textit{internal symbols} of a VPA
to monitor our policies; extending our policies to use internal symbols is an
interesting avenue for future work.

\begin{example}
    \Cref{fig:ex-VPA} shows a two state VPA, which  is defined  over $\Sigma = \{\vpacall{\symappt}, \vparet{\symappt}\}$. In this case, $Q = \{q_0, q_1\}$, $q_{init} = q_0$,  $F= \{q_0\}$, and $\Gamma = \{\bot, ~q_0\}$. 
\end{example}
A VPA configuration $(q, \theta)$ is a pair of its current  state $q$ and the current stack $\theta$. The stack $\theta$ is a sequence of stack symbols with only one occurrence of $\bot$ at the starting. For instance, a possible configuration for the VPA in \cref{fig:ex-VPA} can be $(q_1, \bot q_0)$. This configuration denotes that the VPA is at state $q_1$ and its stack has only $q_0$.
\begin{figure}[h!]
\begin{subfigure}[c]{0.2 \textwidth}
\centering
    \begin{tikzpicture}[font=\scriptsize,
			baseline=1ex,shorten >=.4pt,node distance=25mm,on grid,
			semithick,auto,
			every state/.style={fill=white,draw=black,circular
					drop shadow,inner sep=.15mm,text=black,minimum size=1cm},
			accepting/.style={fill=gray,text=white}]
		\node (q0) [state, initial, accepting] {$q_0$};
		\node (q1) [state, right = of q0] {$q_1$};
		\path [-stealth, thick]
		(q0) edge [bend left] node {$\vpacall{\symappt}$/ $q_0$}(q1)
        (q1) edge [bend left] node {$\vparet{\symappt}$, $q_0$}(q0)
		;
	\end{tikzpicture}
\end{subfigure}
\hfill
\begin{subfigure}[c]{0.5 \textwidth}
\centering
\begin{flalign*}
&\text{Nested word:}~    &&n = (a_1 a_2, \nu), ~\text{where}&\\
&    &&a_1 = (\vpacall{\symappt}, 1),~ a_2 = (\vparet{\symappt}, 2),~\text{and}&\\
&     && \nu = \{(1, 2)\} &
\end{flalign*}
\begin{flalign*}
&\text{Run:}~  (q_0, \bot) \xrightarrow{\vpacall{\symappt}}(q_1, \bot q_0) \xrightarrow{\vparet{\symappt}} (q_0, \bot) &
\end{flalign*}
\end{subfigure}
 \caption{(a) A two state VPA over $\Sigma = \{\vpacall{\symappt}, \vparet{\symappt}\}$ with $Q = \{q_0,~ q_1\}$, $q_{init} = q_0$,  $F= \{q_0\}$, and $\Gamma = \{\bot, q_0\}$; (b) Run of the VPA on $n = (a_1 a_2, \nu)$.}
 \label{fig:ex-VPA}
\end{figure}

\vspace{-0.2in}
\subsection{Semantics of a VPA}
The run $\rho(w) = (q_1, \theta_1) \ldots (q_k, \theta_k)$ of a VPA $\mathcal{M}$ on some nested word $w$ is a  sequence of configurations. Nested word $w$ is accepted by $\mathcal{M}$ if $q_n$ is a final state of $\mathcal{M}$. The nested word $w \in \mathcal{L}(\mathcal{M})$ is in the language of $\mathcal{M}$ if $w$ is accepted by $\mathcal{M}$. 
Intuitively, when a VPA $\mathcal{M}$ transitions from one configuration to another on a call symbol, say $a \in \Sigma_c$, it pushes a value on the stack and moves to the next state. While transitioning on a return symbol $a \in \Sigma_r$, the top of the stack is popped and the configuration state is updated.

For example in \cref{fig:ex-VPA}, the  transition $\vpacall{\symappt}/ q_0$ is a call transition that pushes $q_0$ on the stack upon reading the symbol $\vpacall{\symappt}$ . The transition labeled with $\vparet{\symappt}, q_0$ is a return transition on $\vparet{\symappt}$ and it pops the top of the stack $q_0$.

The valid transitions allowed by a VPA can be defined as follows:
\begin{definition}
Let $\mathcal{M} = (Q, q_{init}, F, \Sigma, \Gamma, \bot, \delta_c, \delta_r)$ be a deterministic VPA with states $Q$, $\Sigma = \Sigma_c \cup \Sigma_r$. Let $\mu$ be the set of all stacks and $a$ be some symbol in $\Sigma$ then $\xrightarrow{a}: Q \times \mu \to Q \times \mu$ is defined as follows:
\begin{enumerate}
\item if $a \in \Sigma_c$ then $(q', \theta') \xrightarrow{a} (q, \theta)$ if there exists $(q', a, q, s) \in \delta_c$, where $s \in \Gamma$ and $\theta = \theta' s$,
\item if $a \in \Sigma_r$ then $(q', \theta') \xrightarrow{a} (q, \theta)$ if there exists $(q', s, a, q) \in \delta_r$, where $s \in \Gamma$ and $\theta s = \theta'$.
\end{enumerate}
\end{definition}

\begin{example}
    Let us look at the run of the VPA in \cref{fig:ex-VPA} on the nested word $n = ((\vpacall{\symappt}, 1) (\vparet{\symappt}, 2), \nu)$, where $\nu = \{(1, 2)\}$. As shown in the figure, on the first symbol, the VPA goes to state $q_1$ and pushes $q_0$ on stack. On the next symbol, the top of the stack is $q_0$, so the VPA goes from the state $q_1$ to $q_0$ and the stack value $q_0$ is popped. 
\end{example}
\vspace{-0.1in}
\subsection{Compilation Sketch}\label{sec:compilation}
We define a VPA interpretation $\semvpa{.}: Policy \to \setvpa$ for \safetree
policies. Here, $Policy$ is the set of all policies and $\setvpa$ is the set of
all VPAs.  Here, we provide a sketch of  the compilation; the detailed rules and
further discussion can be found in \iffull 
Appendix \S\ref{app:compiler}. 
\else
the full paper.
\fi
Below, we write $\mathcal{M}$ for the target policy's VPA. We first consider sequential
policies.

\paragraph*{\textbf{VPA of} $\callseq ~ \regex$} $\mathcal{M}$ simulates $\regex$'s DFA $\mathcal{A}$  on all call symbols and ignores the return symbols. The nested word is accepted if $\mathcal{A}$ accepts the sequence of calls.

We now turn to hierarchical policies, which are of the form ``$\match~\regex \placeholderpolicy inner$''. 
For such policies, $\mathcal{M}$ first simulates $\regex$'s DFA $\mathcal{A}$
(on call symbols) to find a path that matches $\regex$. During this phase,
$\mathcal{M}$ maintains a stack of $\mathcal{A}$'s run on the path. 
Suppose the above path ends with some symbol $\cs{}$, $\mathcal{M}$ checks if
$inner$ is satisfied on the subtree rooted at  $\cs{}$. If $inner$ is not
satisfied, the policy can still be satisfied if there exists another path from
the root that matches $\regex$ leading to a subtree satisfying $inner$. To
implement this behavior, the following \textit{retry steps} are taken: on return
symbols, $\mathcal{M}$ backtracks $\mathcal{A}$'s run using the stack history of
the run; $\mathcal{M}$ searches for another path by simulating $\mathcal{A}$ to
go forward on call symbols; and then running the checks for $inner$.

Now, we detail the other steps of policy checking.

\paragraph*{\textbf{VPA of }$\match~\regex \allpath \regex'$}   While reading the paths in the subtree rooted at $\cs{}$, $\mathcal{M}$ simulates the DFA $\mathcal{A}'$ of  $\regex'$ on the call symbols; and on return symbols, backtracks $\mathcal{A}$'s run using the stack history. If $\mathcal{A}'$ accepts each path, $\mathcal{M}$ accepts the nested word; otherwise it searches for another path matching $\regex$.

\paragraph*{\textbf{VPA of} $\match~ \regex \allchildren p$} $\mathcal{M}$ simulates $p$'s VPA $\mathcal{M}_p$ on all the subtrees rooted at $\cs{}$'s children. If $\mathcal{M}_p$ accepts each of  $\cs{}$'s child subtree, $\mathcal{M}$ accepts the nested word; otherwise it searches for another path matching $\regex$.

\paragraph*{\textbf{VPA of} $\match ~\regex \existschild p_1 ~\textsf{\textbf{then}}\ldots \textsf{\textbf{then}}~p_k$} Let any policy $p_i$'s VPA be $\mathcal{M}_{p_i}$. $\mathcal{M}$ simulates $\mathcal{M}_{p_1}$  on $\cs{}$'s first child subtree; if the subtree is not accepted by $\mathcal{M}_{p_1}$, this simulation is repeated on the next child subtree of $\cs{}$. When $\mathcal{M}_{p_1}$ accepts such a subtree, $\mathcal{M}$  simulates $\mathcal{M}_{p_2}$ on the next child subtree, and so on. $\mathcal{M}$ accepts the word if after repeating these steps, $\mathcal{M}$ finds a subtree accepted by $\mathcal{M}_{p_k}$; otherwise $\mathcal{M}$ retries starting from the search for a path matching $\regex$.

\paragraph*{\textbf{VPA of} $\start~S: inner$} First, $\mathcal{M}$ simulates
the DFA $\mathcal{A}$ (on the call symbols) that accepts paths from the root to
some symbol in $S$ that does not have any ancestor in $S$. Similar to the
hierarchical policies, $\mathcal{M}$ then simulates $inner$'s VPA
$\mathcal{M}_{in}$ on the subtree rooted at $\cs{}$.  $\mathcal{M}$ continues to
apply retry  steps (similar to the hierarchical policies) to the search for
other paths matching $\mathcal{A}$ and checking if $\mathcal{M}_{in}$ accepts
the relevant subtrees for each of those paths.
\kcomments{is it clear that it will reject otherwise}

As expected, our compilation is sound with respect to \safetree's nested word
semantics.
\begin{theorem}[Soundness]\label{thm:soundness}
Let $p$ be a policy and  $\mathcal{L}(\vp{p})$ be the set of \rmnws accepted by its visibly pushdown automaton $\vp{p}$. Then,
\begin{enumerate}
    \item $\nw{p} = \mathcal{L}(\vp{p})$ if  $p$ is a service tree policy,
    \item $\tree{p} = \mathcal{L}(\vp{p})$ if $p$ is a hierarchical policy, and
    \item $\Seq{p} = \mathcal{L}(\vp{p})$ if $p$ is a linear sequence policy.   
\end{enumerate}
\end{theorem}
All three equivalences in the above theorem are proved by induction on the structure of the policies. The proof is given in 
\iffull
Appendix \S\ref{app:soundproof}.
\else 
the full paper.
\fi

Finally, we consider complexity. We can bound the size of the compiled VPA in
terms of the size of the policy as follows:
\begin{theorem}
Suppose the DFA of every regular expression in a policy $p$ has at most $R$ states; each sub-policy has at most $k$ immediate sub-expressions, \textit{i.e.}, fan-out at most $k$; and (nesting) depth $d$. Then the VPA $\mathcal{M}(p)$ such that $\nw{p} = \mathcal{L}(\mathcal{M})$ has $\mathcal{O}((k+1)^d R)$ states. 
\end{theorem}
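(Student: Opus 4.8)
The plan is to prove the bound by structural induction on the policy $p$, following the recursive compiler $\semvpa{\cdot}$ sketched in \cref{sec:compilation} and tracking the number of control states of the resulting VPA. Write $N(p)$ for the number of states of $\mathcal{M}(p) = \vp{p}$, and let $R$ bound the number of states of the DFA of every regular expression occurring in $p$. The heart of the argument is a single additive recurrence: I claim that for every policy form there is a universal constant $c$ with
\[
  N(p) \;\le\; c\,R \;+\; \sum_{p' \in \mathrm{sub}(p)} N(p'),
\]
where $\mathrm{sub}(p)$ ranges over the immediate sub-policies of $p$ (empty for the base cases). Solving this recurrence against a policy tree of fan-out at most $k$ and depth at most $d$ then yields the stated $\mathcal{O}((k+1)^d R)$ bound.

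Establishing the recurrence amounts to reading off the state count from each compilation rule. For the base cases $seq = \callseq~\regex$ and $p_l = \match~\regexone \allpath \regextwo$, the VPA only simulates one or two regex DFAs (for $\callseq$, the single DFA $\mathcal{A}$; for $p_l$, the matching DFA $\mathcal{A}$ followed by the path-checking DFA $\mathcal{A}'$, which occupy disjoint phases of the run), so $N = \mathcal{O}(R)$ with no sub-policy term. For the recursive forms $p_a = \match~\regex \allchildren p$, $p_e = \match~\regex \existschild p_1 \,\textsf{\textbf{then}}\ldots\textsf{\textbf{then}}\,p_k$, and the top-level $\start~S: inner$, the compiler dedicates an $\mathcal{O}(R)$ block of states to simulating the matching DFA $\mathcal{A}$ (together with the constant bookkeeping for the retry and backtracking steps) and otherwise hands control to the sub-VPAs $\mathcal{M}_p$, $\mathcal{M}_{p_1},\ldots,\mathcal{M}_{p_k}$, $\mathcal{M}_{in}$. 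The key structural fact I would isolate and verify from the detailed rules is that these sub-VPAs are glued in by \emph{disjoint union} of their state sets rather than by a product construction: when $\mathcal{M}$ descends into a child subtree to run a sub-policy's VPA, it saves the ambient DFA state on the stack and runs the sub-VPA's automaton from its own initial state, so at any instant the control state lies in exactly one phase---the matching DFA, or one single sub-VPA. This is precisely what makes the sub-policy contributions add rather than multiply, and it is the reason the final bound stays linear in $R$ rather than blowing up to $R^d$.

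Given the recurrence, I would finish by induction on the depth $d$, proving the sharper statement $N(p) \le c\,R\,(k+1)^d$. The base case $d = 0$ is immediate since $N(p) \le cR = cR(k+1)^0$. For the step, the recurrence and the inductive hypothesis give $N(p) \le cR + k\cdot cR(k+1)^{d-1} = cR\bigl(1 + k(k+1)^{d-1}\bigr)$, and since $(k+1)^{d-1}\ge 1$ we have $1 + k(k+1)^{d-1} \le (k+1)^{d-1} + k(k+1)^{d-1} = (k+1)^d$, closing the induction and giving $N(p) = \mathcal{O}((k+1)^d R)$.

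The step I expect to be the main obstacle is the disjoint-union claim in the second paragraph, since the complexity bound is true only because nesting is offloaded onto the stack. I would need to check each compilation rule carefully---especially $p_e$, where the sub-policies $p_1,\ldots,p_k$ are checked \emph{sequentially} on successive child subtrees---to confirm that $\mathcal{M}$ never has to remember the internal configuration of an already-discharged sub-VPA (for instance $\mathcal{M}_{p_1}$ after $p_1$ has been satisfied) while simulating the next one. As long as each sub-check is run to completion before the next begins and leaves behind only a bounded amount of state (recorded either in the control state's phase or on the stack), the additive recurrence holds and the product blow-up is avoided.
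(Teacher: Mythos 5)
Your proposal is correct and follows essentially the same route as the paper's proof: a structural induction on the policy in which each compilation rule contributes an additive $\mathcal{O}(R)$ block of states on top of the disjoint union of the sub-policies' VPAs, yielding the bound $(k+1)^d(R+5)$ in the paper versus your equivalent $cR(k+1)^d$. The disjoint-union fact you flag as the crucial step to verify is indeed confirmed by the explicit state sets in the paper's compilation rules (e.g.\ $Q = \{q_{beg}, q_{end}, q_{sat}, q_{rej}\} \cup Q^{p_1} \cup Q^1$), so no gap remains.
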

\vspace{-0.1in}
The proof goes by structural induction on the policy $p$. The proof and
definitions of depth and fan-out are given in 
\iffull
Appendix \S\ref{app:complexity}.
\else
the full paper.
\fi
Regarding the quantity $R$, note that the worst case size complexity of a DFA is
known to be exponential in the length of the regular expression
\citep{dfatheory}, but in the common cases this quantity is often much smaller.

\subsection{Runtime monitor for a VPA}
We will now formalize the monitor implementation for a VPA. This section first
describes the centralized interpretation, which follows immediately from the
standard VPA semantics, and then introduces a new distributed interpretation.
All the compiled VPAs are deterministic and complete, so we treat the VPA
transition relations as functions below.

\begin{definition}
A VPA $\mathcal{M} = (Q, q_{init}, F, \Sigma, \Gamma, \bot, \delta_c, \delta_r)$ can be interpreted as $\intcentral{\mathcal{M}}: \VPAConfig \to \setNW{\Sigma} \to \VPAConfig$, which takes: (a) an initial VPA configuration $(q, \theta)$---a pair of a state and some stack; (b) and a nested word $w \in \setNW{\Sigma}$, and returns a final VPA configuration. We define $\intcentral{\mathcal{M}}$ inductively on the length of the nested word $w$:
\begin{enumerate}
\item Case $w = (\epsilon, \phi)$:
\[\intcentral{\mathcal{M}}~ (q', \theta')~ w = (q', \theta')\] 
\item Case $w= (a_1 \ldots a_k a_{k+1}, \nu)$, where $a_{k+1} = (e, i)$ for some $e \in \Sigma$:
\[\intcentral{\mathcal{M}}~ (q', \theta')~ w = (q, \theta),~ \text{where}~ \intcentral{\mathcal{M}}~ (q', \theta')~ (a_1 \ldots a_k, \nu[\idx(a_1), \idx(a_k)]) \xrightarrow{e} (q, \theta)\] 
\end{enumerate}
\end{definition}

Intuitively, a distributed monitor is a set of call transition functions of the
type  $\shortcall: Q \to (Q \times \Gamma)$ and return transition functions of
the type $\shortret: (Q \times \Gamma) \to Q$. Here, $Q$ is a set of states and $\Gamma$ is a set of stack elements. In the following definition of the distributed monitor, $\setCall$ and $\setRet$ represent the set of all call and return functions.

\begin{definition}
A distributed monitor $\mathcal{D}: \Sigma \to (\monitorcall \times
~\monitorret)$ maps symbols in some alphabet $\Sigma$ to a pair of call and
return transition functions, where $\setCall$ and $\setRet$ represent the set of
all call and return functions of the type $\shortcall: Q \to (Q \times \Gamma)$
and $\shortret: (Q \times \Gamma) \to Q$. Here, $Q$ is the set of states and $\Gamma$ is a set of stack elements.
We write $\setdistmonitor$ for the set of distributed monitors.
\end{definition}

Now, we present a translation function $\extractmonitor: \setvpa \to
\setdistmonitor$ to convert a VPA $\mathcal{M}$ to a distributed monitor
$\mathcal{D} \in \setdistmonitor$.

\begin{definition} \label[definition]{def:distmon}
The distributed monitor for some VPA $\mathcal{M} = (Q, q_{init}, F, \Sigma, \Gamma, \bot, \delta_c, \delta_r)$, where $\basealpha$ is the base alphabet that gets call-return augmented into $\Sigma$, as:

\[\extractmonitor(\mathcal{M}) \triangleq \Set{(a, (\shortcall, \shortret))\ | \begin{array}{l}
   a \in \basealpha,\\
    \text{if}~\delta_c(q, \langle a) = (q', s)~\text{then}~\shortcall(q) = (q', s)~\text{and} \\ \text{if}~\delta_r(q', s, a \rangle) = q ~\text{then}~\shortret(q', s) = q
  \end{array}} \]

For any $a \in \basealpha$, the mappings in the call and return transition functions in $\mathcal{D}(a) = (\shortcall, ~\shortret)$ can be viewed as the transition rules in $\delta_c,~ \delta_r$ on the symbols $\vpacall{a}$ and $\vparet{a}$.
\end{definition}

Operationally, a distributed monitor takes a pair of state and stack as an initial configuration, a nested word  and returns a final configuration. Before defining the operational semantics for a distributed monitor, we introduce a single step transition operator $\xrightarrow{x}_{dist}$, where $x$ is a symbol in some alphabet $\Sigma$.

\begin{definition}[Single Step Transition]\label[definition]{def:singlestep}
Let $\mathcal{D}$ be a distributed monitor defined over a set of base alphabet $\basealpha$ such that for any $e \in \basealpha$,~ the pair of call-return transition mapped to $e$ is $\mathcal{D}(e) = (\shortcall, \shortret)$. Let $\Sigma$ be the call-return augmented alphabet of $\basealpha$ and $x$ be some symbol in $\Sigma$. Let $Q$ be the set of states, $\mu$ be the set of stacks, and the stack $\theta \in \mu$. The single step function $\xrightarrow{x}_{dist}: Q \times \mu \to Q \times \mu$ is defined as follows:
\begin{enumerate}
\item if $x = \vpacall{e}$ for some $e \in \basealpha$ then $(q, \theta) \xrightarrow{x}_{dist} (q', \theta')$, where $\shortcall(q) =  (q', s)$ and $\theta' = \theta s$,
\item if $x = \vparet{e}$ for some $e \in \basealpha$ then $(q, \theta) \xrightarrow{x}_{dist} (q', \theta')$ where $\shortret(q, s) = q'$ and $\theta' s = \theta$.
\end{enumerate}
\end{definition}

When a distributed monitor $\mathcal{D}$ reads a symbol $a_i = (x, \idx(a_i))$
from a nested word $w = (a_1 \ldots a_k, \nu)$, it selects a pair of call-return
transition functions $\mathcal{D}(e)$ such that $x = \vpacall{e}$ or $x =
\vparet{e}$. The pair $\mathcal{D}(e)$ is referred as a sub-monitor in the
distributed monitor $\mathcal{D}$. Based on the tag of the symbol $x$, the call
or return transition of $\mathcal{D}(e)$ is applied to the input configuration
of $\mathcal{D}$. Formally:

\begin{definition}[Operational Semantics]
A VPA $\mathcal{M}$'s distributed monitor $\mathcal{D} = \extractmonitor (\mathcal{M})$ can be interpreted as $\llbracket \mathcal{D} \rrbracket_{dist}: (Q \times \mu) \to \setNW{\Sigma} \to (Q \times \mu)$, where $\setNW{\Sigma}$ is the set of nested words on $\Sigma$. The run of a distributed monitor $\mathcal{D}$ on some nested word $w$ starting at some initial state $q$ and a distributed stack $\mu$ can be inductively defined on the length of the nested word $w$:

\begin{enumerate}
\item Case $w = (\epsilon, \phi)$:
\[\llbracket \mathcal{D} \rrbracket_{dist}~ (q, \theta)~ w = (q, \theta)
\]

\item Case $w = (a_1 \ldots a_{k} a_{k+1}, \nu)$, where $a_{k+1} = (e, i)$ for some $e \in \Sigma$:
\[\llbracket \mathcal{D} \rrbracket_{dist}~ (q, \theta)~ w = (q', \theta'),~ \text{such that} ~\llbracket \mathcal{D} \rrbracket_{dist} (q, \theta)~(a_1 \ldots a_k, \nu[\idx(a_1), \idx(a_k)]) \xrightarrow{e}_{dist} (q', \theta')\]
\end{enumerate}
\end{definition}

Finally, we can show that running the distributed and the centralized variant of a VPA are equivalent, i.e., they accepted the same nested words:
\begin{theorem} \label{thm:distmon}
Given a nested word $w$ over some call-return augmented alphabet $\Sigma$ and a VPA $\mathcal{M}$ whose run on $w$  is  $\rho(w) = (q_1, \theta_1) \ldots (q_n, \theta_n)$ then:
\begin{enumerate}
\item the run of $\mathcal{M}$'s centralized monitor is $\llbracket \mathcal{M}\rrbracket_{central} (q_1, \theta_1)~ w =  (q_n, \theta_n)$, and 
\item the run of $\mathcal{M}$'s distributed monitor $\llbracket \mathcal{D}\rrbracket_{dist} (q_1, \theta_1)~w =  (q_n, \theta_n)$.
\end{enumerate}
\end{theorem}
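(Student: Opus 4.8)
The plan is to show that both monitor semantics reproduce, configuration-for-configuration, the VPA run $\rho(w)$; part~(1) then falls out almost definitionally, and part~(2) reduces to a single-step agreement lemma followed by the same induction.

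For part~(1), I would argue by induction on the length $l$ of the word component $a_1 \ldots a_l$ of $w$. In the base case $w = (\epsilon, \phi)$ the run is the single configuration $(q_1, \theta_1)$ and $\intcentral{\mathcal{M}}\,(q_1,\theta_1)\,w = (q_1,\theta_1)$ by definition, so the two agree. For the inductive step, write $w = (a_1 \ldots a_k a_{k+1}, \nu)$ with $a_{k+1} = (e, i)$, and let $w'$ be the prefix $(a_1 \ldots a_k, \nu[\idx(a_1), \idx(a_k)])$. The VPA run on $w$ is obtained from the run on $w'$ by one application of $\xrightarrow{e}$, i.e.\ $(q_k, \theta_k) \xrightarrow{e} (q_n, \theta_n)$ with $n = k+1$. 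The centralized semantics is defined by exactly this pattern: $\intcentral{\mathcal{M}}\,(q_1,\theta_1)\,w = (q,\theta)$ where $\intcentral{\mathcal{M}}\,(q_1,\theta_1)\,w' \xrightarrow{e} (q,\theta)$. Applying the induction hypothesis (the centralized run on $w'$ lands in $(q_k,\theta_k)$) together with determinism of $\xrightarrow{e}$ gives $(q,\theta) = (q_n,\theta_n)$.

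For part~(2), the key step is a single-step agreement lemma: for every configuration $(q,\theta)$ and symbol $x \in \Sigma$, $(q,\theta) \xrightarrow{x}_{dist} (q',\theta')$ holds iff $(q,\theta) \xrightarrow{x} (q',\theta')$ holds, where $\mathcal{D} = \extractmonitor(\mathcal{M})$. I would prove this by case analysis on the tag of $x$. If $x = \vpacall{e}$, unfolding \cref{def:distmon} gives $\shortcall(q) = (q',s)$ exactly when $\delta_c(q, \vpacall{e}) = (q',s)$, and both single-step definitions push $s$ (setting $\theta' = \theta s$), so the two transitions coincide; the return case $x = \vparet{e}$ is symmetric, using $\shortret(q,s) = q'$ exactly when $\delta_r(q,s,\vparet{e}) = q'$ and the identical pop condition $\theta = \theta' s$. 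Because the compiled VPA is deterministic and complete, these relations are total functions, so agreement as relations and as functions coincide. With this lemma in hand, the induction of part~(1) carries over verbatim with $\xrightarrow{e}_{dist}$ in place of $\xrightarrow{e}$, since the distributed operational semantics has the identical inductive shape; hence $\llbracket \mathcal{D} \rrbracket_{dist}\,(q_1,\theta_1)\,w = (q_n,\theta_n)$.

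I expect the proof to be essentially mechanical, so the only real care is bookkeeping. I must check that the prefix $(a_1 \ldots a_k, \nu[\idx(a_1),\idx(a_k)])$ is itself a well-formed nested word, so that the recursion is well-founded; this follows from the restriction operation of \cref{def:subnested}. I must also fix the index convention consistently, with $(q_1,\theta_1)$ the initial configuration and $(q_n,\theta_n)$, for $n = l+1$, the final one. It is worth noting explicitly that neither single-step operator consults the matching relation $\nu$: each reads only the current symbol and configuration, with the stack discipline implementing the call/return matching. This is precisely what makes the two inductions structurally identical and renders the whole argument routine.
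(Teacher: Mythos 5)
Your proposal is correct and follows essentially the same route as the paper: induction on the length of the nested word, with the inductive step resolved by a case analysis on whether the last symbol is a call or a return, observing that the extracted local transition functions reproduce exactly the VPA's $\delta_c$ and $\delta_r$ steps. The only difference is presentational --- you factor the call/return case analysis out as an explicit single-step agreement lemma and treat part~(1) explicitly, whereas the paper folds the case analysis into the inductive step and dispatches part~(1) as immediate.
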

The proof goes by induction on the length of the nested word $w$.

In \cref{sec:implementation}, we will see how to use these distributed monitors
to enforce policies in an implementation. But first, we consider some example
policies that can be expressed in our policy language.

\section{Case Studies}\label{sec:case-studies}

This section motivates real-world service tree policies that are relevant to
three teams: data-compliance or audit teams, security teams, and deployment
teams. All case studies are presented in the context of the running example of
the hospital management application from \cref{sec:motivation}.

\paragraph*{Notation} We write \textsf{Endpoint} as shorthand for the set
$\{\textsf{Endpoint}\}$, and $S=\{a_1, \ldots, a_k\}$ as shorthand for the
regular expression $a_1 + \cdots + a_k$. We write $\any = \basealpha$ and
$\notservice{\textsf{Endpoint}}$ for the set $\basealpha -
\{\textsf{Endpoint}\}$. 

\subsection{Deployment team policies}
Thorough testing is a key step in the development process of a microservice application. Therefore, we describe case-studies about testing scenarios of varying complexities. 
\paragraph*{\textbf{Case study 1: A/B testing}} 
Say a deployment team wants to test the interaction of a small subset of beta testers, labeled as \textsf{Beta}, with a new version \textsf{v2} of the \database~ service. For example, the label \textsf{Beta} might have been assigned to a random subset of users by the frontend, or it could be assigned to internal users which will test the service before it is released publicly. So the deployment team requires all traffic coming from \textsf{Beta} to be served by \database~-\textsf{v2} instead of \database~-\textsf{v1}.

Let us model requests labeled as \textsf{Beta} to be requests from some endpoint labeled as $\textsf{Beta}$. The above A/B testing policy is specified as:
\[
  \hpolicy{\start \textsf{Beta}: \callseq~ \textsf{Beta} (\notservice{ \setdatabase~\textsf{-v1})^*}}.
\] 
Since this policy specifies a constraint on subtrees starting at \textsf{Beta},
the policy is of the form  $\start ~\textsf{Beta}: ~inner$. The $inner$ policy
needs to be matched on a subtree starting at \textsf{Beta}. The policy $inner$
has to match the sequence of API calls in the subtree rooted at \textsf{Beta}
with a regular expression that prevents calls to \textsf{v1} of \database~.

\paragraph*{\textbf{Case study 2: Factorial testing}} 
Deployment teams often want to test the interactions between all recently
updated services. For instance, suppose a deployment team wants each request to
either use the old \textsf{v1} version of all services, or the latest
\textsf{v2} version of all services; this is a \textit{factorial testing}
scenario. More concretely, consider that the deployment team wants to prevent
\test-\textsf{v2} from invoking \textsf{v1} versions of the \obfuscate~ and
\lab~services. This can be expressed as the policy:
\[
  \hpolicy{\start~\test\text{-}\textsf{v2}: \callseq ~(\any - \setobfuscate\textsf{-v1} - \setlab\textsf{-v1})^*}.
\]
Here, $(\any - \setobfuscate\textsf{-v1} - \setlab\textsf{-v1})$ is the set of
all endpoints besides \textsf{v1} of \obfuscate~and \lab.

\paragraph*{\textbf{Case study 3: Regional access control}}
Sometimes certain services need to have restricted regional access. For
instance, suppose the hospital application wants to prevent EU users from accessing the
main \database ~ service to avoid inadvertent violation of GDPR guidelines.
Supposing EU users are labeled at the frontend as \front-{EU}, we can express
this policy as:
\[
  \hpolicy{\start~\setfront\text{-EU}: \callseq ~\setfront\text{-EU}(\notservice{\database}~)^*}.
\]

\paragraph*{\textbf{Case study 4: External requirement}}
A deployment team might want to specify some business logic involving services
with global effects, like a write service that updates the database or an
account creation service that creates a new user. For example, new appointments
should be saved in the appointment database by invoking some \database~service.
This policy can be expressed as:
\[\hpolicy{\start~\setappt: \callseq~ (\setappt~ \kleenestar~\setdatabase~ \kleenestar~)}.\]
The pattern $\kleenestar$ matches any sequence of API calls. The \database ~service can be replaced by, say, a  \logging~ service to log admin access to some resource.

\subsection{Security team policies}
\paragraph*{\textbf{Case study 5: Payment logging}}
Suppose a security team wants all \payment ~requests to  call payment \database
~ at least once, and \database ~ to send requests only to \eventlogging~. The
team can state this requirement as ``\payment ~ invokes at least one \database ~
and this \database~ invokes \eventlogging~ as all its children,'' and specify
this policy as:
\[
  \hpolicy{\start \payment: \match~ (\payment~\setdatabase~)~\allpath ~(\eventlogging~) \kleenestar}.
\]
The $\match~(\payment~\setdatabase~)$ matches a subtree rooted at \payment~
that invokes ~\setdatabase~as its child.  To specify that all children of this
\setdatabase~are \eventlogging{}, the right side sub-expression of $\allpath$
matches all outgoing paths from \setdatabase~with the regular expression
(\eventlogging~) \kleenestar{}, where the pattern $\kleenestar$ matches any
sequence of API calls. 

\paragraph*{\textbf{Case study 6: Data Vault---no outgoing calls, a constraint on the leaves}}
Consider that the hospital application has a data \vault~service to store
patient records, and a security team wants to restrict \vault~ from invoking any
endpoints to prevent it from sharing confidential data with any third-party
services. This property can be expressed by requiring the \vault~service to have
no children in the service tree:
\[
  \hpolicy{\start~ \any:~\match~ \any \allpath (\notservice{\setvault})^* (\setvault + \epsilon)}.
\]

This can also be expressed as a singleton sequence policy: $\start~\setvault:\callseq~\setvault$.

\paragraph*{\textbf{Case study 7: Resource pricing}}
Suppose that patients can invoke \test service to request a batch of medical
tests. For each medical test, the \test service sends a child request. To ensure
correct billing, a security team might require every child call of \test to
directly or indirectly invoke \payment. This is policy can be specified as: 
\[
  \hpolicy{\start ~ \settest: \match~ \settest~ \allchildren (\match~\kleenestar~\setpayment~ \allpaths \kleenestar~)}.
\]

\subsection{Compliance team policies}
\paragraph*{\textbf{Case study 8: Data compliance}} 
Data protection laws, like HIPAA and GDPR, mandate compliance teams to
responsibly  handle customer data. For instance, personal health information,
like a patient's name, should be de-identified for privacy. Since compliance
teams do not have direct insight into the implementation of the application,
they would like to check compliance by monitoring inter-service communications.
To encode this kind of requirement as a service tree policy, they can require
that a service sending a request to external parties should have previously run
the de-identification service. This policy can be specified as:
\[\hpolicy{\start \settest: ~\match~ \settest \existschild p_1~\textsf{then}~p_2}, \]
where $p_1 = \match~ \setobfuscate \allpath \epsilon$ and $p_2 = \match~ \setlab \allpath \epsilon$. 

The right sub-expression of $\existschild$ needs to specify the existence of two subtrees satisfying  sub-policies $p_1$ and $p_2$. Here, $p_1, p_2$ will specify the existence of a call to \obfuscate and \lab respectively. 
The $\match$ sub-expression in $p_1$ and $p_2$ are   ``$\match~\obfuscate$''
and ``$\match~\lab$'' respectively because they need to match a subtree rooted
at \obfuscate and \lab.  In the above policy, the right side expression in
$p_1$ and $p_2$ is an empty string because the team wants \obfuscate~and \lab
to not invoke any APIs. This regular expression prevents the subtrees rooted at
\obfuscate and \lab from having any calls to children.  Note the regular
expression on the right of policy $p_1$ and $p_2$ could be replaced by
$\kleenestar$ if the compliance team wanted to allow \obfuscate and \lab to
invoke other services.

\textit{Add-on policy.} Suppose the compliance team wants each \test~request to send one \lab request. Additionally, it wants \test to de-identify the patient records before invoking \lab. This property specifies constraints on the sequence of API calls in  subtrees starting at \test, so the policy is: \[\hpolicy{\start \settest:~\callseq~(\notservice{\setlab})^*\setobfuscate(\notservice{\setlab})^* \setlab (\notservice{\setlab})^*}.\]
The above regular expression specifies that \obfuscate is called before  \lab.
\paragraph*{\textbf{Case study 9: Data Proxy or Middleware}}
Sometimes the access to a service needs to be managed by a proxy, such as a
firewall to secure a data source, a load balancer, or an authenticator. Consider
the compliance team wants \test to invoke \lab via an authentication service \authenticate ~to avoid creating unauthorized lab requests. More formally, we need to express that the subtree rooted at \test should have an \authenticate~ descendant, and \lab should be directly or indirectly invoked by \authenticate~, meaning \lab is \authenticate~'s descendant. 
This requirement can be specified as the following policy:
\[\hpolicy{\start~ \test:~\match~\settest \underbrace{\existschild}_{(1)} (\match~(\notservice{\setlab})^* \setauthenticate~ ~\underbrace{\existschild}_{(2)} (\match~\kleenestar\setlab \underbrace{\allpath}_{(3)} \kleenestar))}.\]
First, we specify \test as the start endpoint. Subtrees starting with \test
certainly need the root to be \test. This is described in the regular expression
associated with \match condition of the policy labeled with (1). The inner
policy to the right of (1) checks for the existence of a subtree that has a path to \authenticate~. The regular expression $(\notservice{\lab})^*\textsf{Auth}$ on the left of (2) permits \authenticate~, but no \lab in the path. After matching this path, the inner policy of (2) checks for the existence of $\lab$ as a descendant, which is similarly expressed as the expression $\match ~\kleenestar \lab$ on the left of (3). Since \lab can be followed by any endpoint, the regular expression on the right of the label (3) is   $\kleenestar$.

This policy can be strengthened to require that \authenticate~  is called before
the only call to \lab, as follows:
\[\hpolicy{\start \settest :~\callseq~(\notservice{\setlab})^*\setauthenticate ~(\notservice{\setlab})^* \setlab (\notservice{\setlab})^*}.\]
\section{Implementation} \label{sec:implementation}

To demonstrate our design, we developed a prototype implementation in $\sim 2k$
lines of Java. Our tool compiles a policy into a VPA, and then extracts a
runtime monitor that runs on top of the Istio service mesh. At its core, the
runtime monitor enforces the policy by simulating a VPA in a distributed manner,
using the distributed monitors in \cref{def:distmon}. Here, we discuss aspects
of our monitor's design that are particular to the Istio \cite{istio} service
mesh framework.

\paragraph*{Istio-based implementation} 
In the service mesh framework, each microservice container instance is paired with a
\emph{sidecar} container implemented as an Envoy proxy~\cite{envoydoc}. 
The sidecar can intercept all incoming and outgoing HTTP traffic for its
corresponding service container, and the Envoy proxy running at the sidecar can perform a variety of useful functions
orthogonal to our work, such as load balancing, service discovery, failover, etc.  In
addition, Envoy can be programmed with custom logic to inspect HTTP headers and
perform actions, like adding/removing/updating the headers, or allowing/denying HTTP traffic,
etc. Envoy proxies maintain an in-memory state for each request/response pair
for the duration of the request's lifetime. Therefore, any metadata saved in the
in-memory state during the request processing can be retrieved during its
response's processing. While not a core feature of the framework, this
capability turns out to enable some important optimizations for our enforcement
method, which we will discuss below.

\paragraph*{Local monitors as Envoy filters}
We implement our local monitors as Envoy filters, which are custom traffic
filtering Lua scripts that simulate VPA transitions on the service (symbol) from the current VPA configuration carried in the HTTP header.
Since in our setting, the service trees are \rmnw, the top of the stack symbol
read by a response message is the same as the value pushed by its matched
request. Therefore, instead of carrying the stack in the HTTP header, we save
the stack symbol locally in the proxy's in-memory state. This reduces the memory
overhead of propagating stack information along with requests, which can be
significant. With our design, only the VPA state is carried in the HTTP header. 

\paragraph*{Extracting local monitors}
Given a VPA $\mathcal{M}$ and its distributed monitor $\mathcal{D} =
\extractmonitor{(\mathcal{M})}$ as defined in \cref{def:distmon}, our compiler
extracts the filter for a service $s \in \basealpha$  from the $\shortcall$ and
$\shortret$ function mapped to $s$, \textit{i.e.,} $(\shortcall, \shortret) =
\mathcal{D}(s)$. The \shortcall~ and \shortret~ functions are essentially the
VPA's call and return transitions on the service $s$.  The filter comprises two
callback functions:  \textsf{OnRequest} implements \shortcall, while \textsf{OnReponse} implements \shortret.

For example, the two callbacks for \payment's (\textsf{P}) local monitor  for the VPA in \cref{fig:VPA-payment} are given in \cref{fig:monitor-snippets}. The \textsf{OnRequest} function implements the  call transition on $\textsf{P}$ as a  conditional block that updates the \textsf{state} header and the custom in-memory metadata, $\textsf{local\_stack}$ to $\qp$ if the current \textsf{state} is $start$. Similarly, \textsf{OnResponse} implements the return transition on \textsf{P} that updates \textsf{state} to $start$ if the current \textsf{state} header and $\textsf{local\_stack}$ are $\qp$. 

\paragraph*{Local monitor execution}
When a request arrives at a service, the service's co-located proxy executes the filter's \textsf{OnRequest} callback to run a call transition. For instance, if the \textsf{state} header of an incoming request at \textsf{P} is set to $start$,  \textsf{OnRequest} updates \textsf{state} to $\qp$ and writes $\qp$ to the (custom) \textsf{local\_stack} metadata in \textsf{P}'s proxy's in-memory state (corresponding to this request's session). 

Likewise, the \textsf{OnResponse} callback is executed on intercepting a
response.  For instance, if the current \textsf{state} header of an outgoing
response from \textsf{P} is  $\qp$ and the in-memory \textsf{local\_stack} saved
during its corresponding request's processing was $\qp$,  \textsf{OnResponse}
updates the \textsf{state} header  to $start$. Thus, the callback implements our
distributed monitor's single-step transition (as defined in
\cref{def:singlestep}).

\kcomments{\begin{enumerate} 
\item We will draw the parallel between the distributed interpretation and the execution of the monitor in the servicemesh
\item First, we compile sub-monitors for each service. It is implementing the sub-monitors constructed in that definition. 
How it executes, well the semantics is that onrequest runs when receiving a request and response when sending out a response. For instance,
\item We encode the transition as conditional branch, it checks the source state and updates the state header and updates a variable for the stack. Istio's implementation of these callbacks is such that it keeps the session persistent, so we can set and retrieve the stack value. Since we deal with well-matched words here, it is sufficient to save it. That's what we do. 
\item When a request comes and this executes, it is like picking the service's sub-monitor and running it. For example,...
\item To keep the monitor efficient, we avoided sending the request and response to a centralized monitor that kept track of a service tree's run. This was to avoid sending extra traffic and also heavy-weight instantiation of the centralized monitor for each concurrent request. Instead, the VPA configuration is carried in the HTTP header and we run (a sub-) monitor co-located with the service. This is where Istio's service mesh features stand out. They offer us an out-of-the box solution suitable for in-band monitoring. Here's what istio does.
\item The distributed monitoring formalism captures that this execution is equivalent to a centralized monitor with a global view of the service tree. 
\item Give an example of the distributed sub-monitors.
\item The distributed monitor's implementation involves Lua filters as follows. This execution maps to a single step transition of the distributed sub-monitor definition in ...
\end{enumerate}}

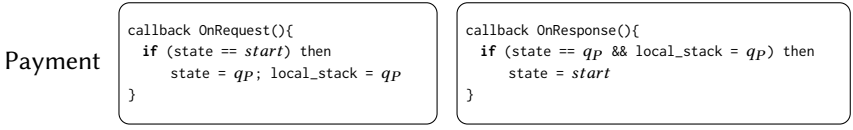
\begin{figure}
\begin{tikzpicture}[
    block/.style={draw, rectangle, rounded corners, minimum width=0.1cm, minimum height=1.5cm, font=\small}
  ]

  \node (bp) at (-3,0) {
    \payment
    };
  \node[block] (b1) at (0,0) {
    \begin{minipage}{4cm}
    \lstset{style=codeblock}
    \begin{lstlisting}
callback OnRequest(){
  if (state == (*@$start$@*)) then 
      state = (*@$\qp$@*); local_stack = (*@$\qp$@*) 
}
    \end{lstlisting}
    \end{minipage}
  };
    \node[block] (b2) at (5,0) {
    \begin{minipage}{5cm}
    \lstset{style=codeblock}
    \begin{lstlisting}
callback OnResponse(){
  if (state == (*@$\qp$@*) && local_stack = (*@$\qp$@*)) then
      state = (*@$start$@*)
}
    \end{lstlisting}
    \end{minipage}
  };
\end{tikzpicture}
\caption{\payment's (\textsf{P}) local monitor  for the VPA in \cref{fig:VPA-payment}: \textsf{OnRequest} and \textsf{OnResponse} simulate VPA transition on requests and responses respectively. The top of the stack symbol  is locally saved as \textsf{local\_stack} metadata in the in-memory state of $\textsf{P}$'s proxy and \textsf{state} is carried in the HTTP header. Here, $qp,~start$ are VPA states.}
\label{fig:monitor-snippets}
\vspace{-0.1in}
\end{figure}
\paragraph*{Context propagation}
We require that the \safetree header carrying the VPA state is propagated from
an \emph{incoming} request to any resulting \emph{outgoing} request. Although
Envoy can see requests entering and leaving the service, the service itself is a
black box from its perspective, so Envoy does not directly know which incoming
(parent) requests produced which outgoing (child) requests. We therefore require
that the application copy the header from incoming to outgoing requests. This
functionality is known in the microservice community as \emph{context
propagation}; it is anyway required for other purposes---in particular,
distributed tracing to monitor application behavior and track the cause of
request failures and performance issues---and libraries exist to help implement
it~\cite{jaeger,zipkin}. \safetree does not assume anything more about
microservice applications beyond this standard requirement.

\paragraph*{Rejecting invalid service trees}
For simplicity, our implementation logs any policy violation after the request's entire service
tree has been processed, rather than actively blocking requests. It should be
possible to extend the prototype to block a request as soon as we know the policy must be violated---depending on the policy this can happen early or later in the service tree. For example, for policies of the form $\start S: inner$, which require subtrees starting at the symbols in $S$ to satisfy $inner$, the response from the root of such subtrees can be early blocked if the $inner$ policy is violated on the subtree. For certain policies, it is possible to block a request if transitioning on it will send the VPA into a state that's sufficient for it to never accept the service tree. For instance,  for a policy of the form $\start S: \callseq ~\regex$, a request can be blocked if transitioning on it will violate $\regex$. Similarly, for policies of the form $\start S: \regex \allpath \regex$ and $\start S: \regex \allchildren p$, a request can be blocked if the start symbol in the start set $S$ is not the same as the first symbol of all the words in the language of $\regex$. For instance, $\start \textsf{A}: \textsf{B} \allpath \textsf{B}$.

\section{Evaluation}  \label{sec:evaluation}
We evaluate two aspects of the \ourmonitor: its performance overhead and its
memory footprint by considering the following research questions:
\begin{itemize}
\item \textit{RQ1: How much header space is required for the context headers?}
\item \textit{RQ2: How much latency overhead does the monitor add?}
\end{itemize}
\paragraph*{Setup}
Our experimental setup consists of two microservice applications written in Go: a hotel reservation application from the DeathStarBench \citep{deathstar} and a simple hospital application that we wrote to exhibit the call structure of the application described in \cref{sec:motivation}. The average number of nodes in the service trees of both the applications are 4.5 and 6 respectively.

While the service implementations here---especially in the hospital
application---are rather simple, the specific logic inside the application does
not affect the overhead of the \safetree monitor since SafeTree runs in the
service mesh outside the application, so its performance is not affected by
application logic. SafeTree's overhead does, however, depend on the topology of
API calls and the policies being checked, which we will study in the evaluation.

The microservice applications are deployed on a minikube cluster enabled with
Istio sidecar injection. The cluster runs locally on a machine with 16 GB of
RAM, an i7 processor, and Ubuntu 22.04 operating system.
The application's inter-service communication is managed by the Envoy proxy
running in the Istio version $1.23.2$. \iffull We run our experiments 
on the policies
listed in  \cref{tbl:policies} in the Appendix, which are the case studies
presented in \cref{sec:case-studies}.
\else
The case study policies used in our experiments are listed in the full paper.
\fi 

\subsection*{RQ1: Header Space Overhead}
Maintaining the current state of the VPA-based runtime monitor and its stack
configuration is central to our enforcement mechanism. As described in
\cref{sec:implementation}, the stack configuration is saved locally at sidecar
proxies, but the current state of the  VPA is propagated alongside requests in a
custom HTTP header.
\Cref{tbl:eval} presents the total number of call and return transitions for
each policy's VPA (in columns \#call and \#return); total number of VPA states
(in \#state column); and the number of bits needed to encode the maximum number
of VPA states (in \#bits column). The context headers for all our policies are
at most six bits long, which is minimal compared to the available space in HTTP
headers (on the order of kilobytes).

\begin{table}[h!]
\caption{Policies prefixed with ``Hotel'' are evaluated on the hotel
  application, and the remainder are evaluated on the hospital application. Main
  findings: (1) context headers can encode the VPA state in a small number of
  bits, and (2) policy checking adds minimal latency, on the order of a
millisecond to the application.\vspace{-0.05in}}

\label{tbl:eval}
\centering
\begin{tabular}{ l c ccccccc }
\toprule
 & \multicolumn{2}{c}{VPA transitions} & \multicolumn{2}{c}{VPA states} & Latency   & Policy & Nesting\\
\emph{Scenario} & \#call & \#return & \#states & \#bits
          & overhead (ms) & class & \#levels\\
\midrule
  A/B Testing & 6 & 30    & 6 & 3  & 0.700 & Linear & NA\\
  Factorial Testing & 11 & 80  & 11 & 4 & 0.420  & Linear & NA\\
  Access Control    & 12 & 100 & 12 & 4&  0.448  & Linear & NA\\
  Update & 25&  544& 25 & 5 & 0.433 & Hierarch. & 2\\
  Data-compliance & 38& 1326 & 38 & 6 & 0.958 & Hierarch. & 2 \\
  Data Proxy & 36 & 1184 & 36 & 6 & 1.117  & Hierarch.  & 3\\
  Encryption & 23& 454   & 23 & 5 &  0.460 & Hierarch. & 1 \\
  Data Vault & 20& 346 & 20 & 5 &  0.370 & Hierarch. & 1 \\
  Resource pricing & 25&  562& 25  & 5   & 0.457 & Hierarch. & 2\\
  Hotel Encryption & 23& 454   & 23 & 5 &  0.216 & Hierarch. & 1 \\
  Hotel Data Proxy & 36 & 1184 & 36 & 6 & 0.443  & Hierarch.  & 3\\
  Hotel Compliance & 38& 1326 & 38 & 6 &0.278  & Hierarch. & 2\\
\bottomrule
\end{tabular}
\vspace{-0.05in}
\end{table}

We can understand how the size of VPAs vary across different classes of policies
if we look at the \#states column and the ``Policy class'' column, which
describes if a policy is linear or hierarchical. We observe that the linear
policies compile to a VPA with fewer states than the hierarchical policies. If
we look at the ``Nesting \#levels'' column, we can further observe that among
the hierarchical policies, more deeply nested policies tend to have more VPA
states. The data-compliance policy appears to be an outlier as it has the
greatest number of states even though other policies have deeper nesting, but
this is because the policy specifies multiple existential condition on subtrees.
\textit{\textbf{To summarize, the \ourmonitor requires only a few bits of extra
HTTP header space to compactly encode contextual information about the service
tree structure.}}

\subsection*{RQ2: Latency Overhead}
Another aspect of the \ourmonitor's evaluation is to understand its effect on
the application's performance. Accordingly, we compare the latency of requests
when the application is being monitored versus when it is not, on
a workload of 200 requests for all the user-facing endpoints of the application.
Latency benchmarking in a microservice application is prone to variance due to several network factors, like congestion, application's concurrency, elastic scaling, etc. We ensure that the application is not overloaded by sending the requests at a sufficiently low rate so that we are measuring per-request processing latency, rather than queuing effects. To minimize the variation in our results, we average the latency over five such workloads.

\paragraph*{Overhead versus Policy}
Our experiment involves extracting Envoy filters from the VPAs compiled in the
previous experiment. Then for each policy, latency overhead is measured as the
difference between the average request latency in the above applications when
the policy checking enabled and when it is disabled.  The policies in
\cref{tbl:eval} that are prefixed with ``Hotel'' were evaluated on the hotel
reservation application, and the remainder were evaluated on the hospital
application.  The  ``Latency overhead'' column in \cref{tbl:eval} reports the
average latency overhead in milliseconds.

Observe that all values are at most $2 ms$, which implies low latency overhead
of running the monitor. We found these results to be stable across both of our
benchmark applications, which is expected since the internal details of the
services should not affect the latency overhead. \textit{\textbf{To summarize,
\ourmonitor adds minimal latency overhead---on the order of a millisecond.}}

\paragraph*{Checking multiple policies}
To check multiple policies simultaneously, instead of unioning the VPAs of
individual policies, we run each VPA independently. This prevents blowing up the
VPA size, which is essential for maintaining low memory overheads. To understand
the latency overhead of running multiple policies, we run multiple copies of the
``Hotel Compliance'' monitor from \cref{tbl:eval}. 
\begin{minipage}[t]{0.38\textwidth} 
\centering
\small
\setlength{\tabcolsep}{3pt} 
\renewcommand{\arraystretch}{0.9} 
\vspace{-0.4in} 
\captionof{table}{Overhead for multiple policies.\vspace{-0.1in}}
\label{tab:overhead}
\begin{tabular}{cc}
\toprule
\textbf{\# Policies} & \textbf{Latency overhead (ms)} \\
\midrule
1 & 0.278 \\
2 & 0.510 \\
3 & 0.610 \\
4 & 0.676\\
\bottomrule
\end{tabular}
\end{minipage}
\hfill 
\begin{minipage}{0.6\textwidth} 
The column ``\# Policies'' in \cref{tab:overhead} describes the number of simultaneous policies being checked. We can conclude from the  ``Latency overhead'' column that the latency overhead increases with the increase in the number of policies.
\textbf{The results suggest we could feasibly monitor multiple policies with a reasonable amount of overhead.}
 \end{minipage}

\vspace{-0.1in}
\begin{figure}[h!]
    \centering
    \begin{subfigure}[b]{0.32\textwidth}
        \centering
        \includegraphics[height=3cm, width=\textwidth]{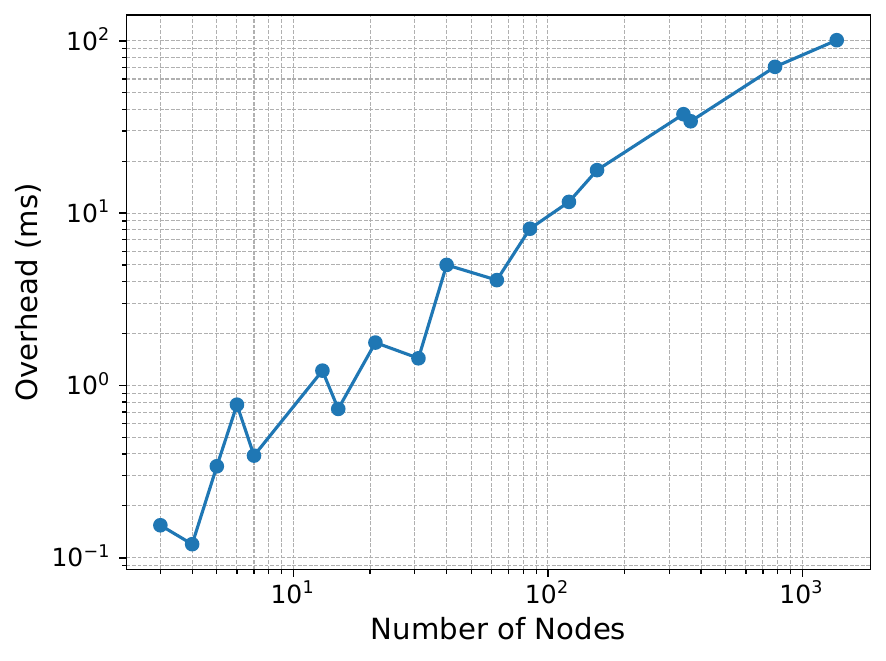}
        \caption{Overhead vs Nodes}
        \label{fig:scaleall}
    \end{subfigure}
   \begin{subfigure}[b]{0.32\textwidth}
        \centering
        \includegraphics[height=3cm, width=\textwidth]{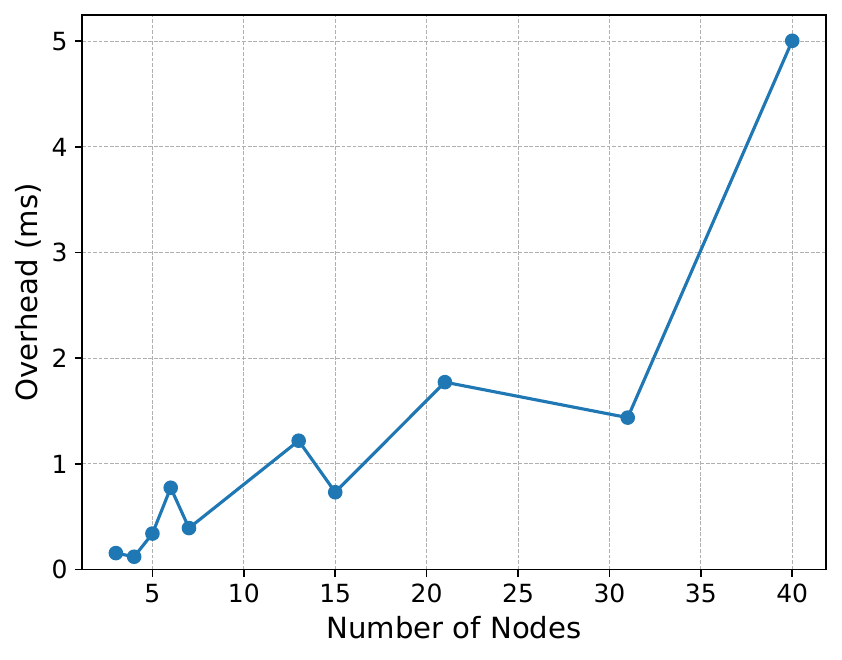}
        \caption{Overhead vs Nodes ($\le$40)}
        \label{fig:scalezoom}
    \end{subfigure}
    \begin{subfigure}[b]{0.32\textwidth}
        \centering
        \includegraphics[height=3cm, width=\textwidth]{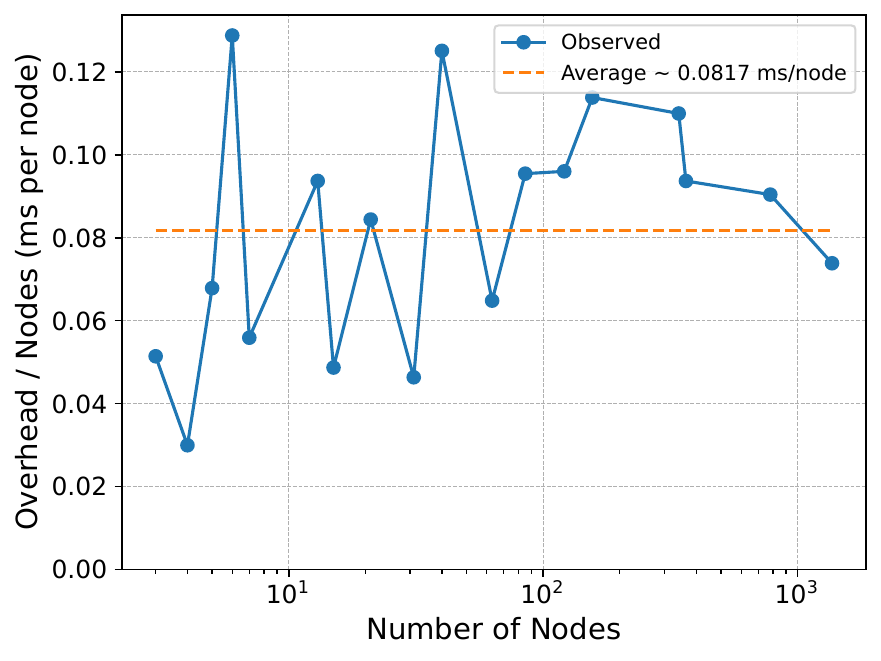}
        \caption{Overhead to Nodes Ratio}
        
    \label{fig:overheadfactor}
    \end{subfigure}
    \caption{Latency overhead vs topology scale, measured as number of nodes in the application's service tree.}
    \label{fig:scalability}
\end{figure}

\paragraph*{Overhead versus topology scale}
To evaluate SafeTree monitor's scalability, we measure the increase in latency overhead with the scale of the application's topology, which we measure as the number of nodes in the application's service tree.  For this experiment, we synthetically generate applications with different topology shapes---all combinations of depths from 2 to 5 and fan-out from 1 to 4, resulting in the total number of nodes in the service trees ranging from 3 to 1365. We measure the latency overhead for the same (``Hotel Compliance'') policy across different applications.

\Cref{fig:scaleall} shows that the latency overhead in milliseconds (on the y-axis) linearly increases with the increase in the number of nodes in the service tree (on the x-axis). 
\Cref{fig:scaleall}  uses log scale on both the x-axis and y-axis. As a reference point, note that the data collected by Alibaba \citep{alibaba} showed that in their microservice deployment, the common case depth and fan-out per service is lower than the median depth of $\sim4$ and the  median fan-out of $\sim2$. A binary tree with depth $4$ will have maximum $31$ nodes; the common case trees will have fewer nodes. 

We zoom into the latency overhead for trees with nodes between 0--40 in
\cref{fig:scalezoom}, where the latency overhead in milliseconds is plotted on
the y-axis, and the x-axis reports the number of nodes on a linear scale. Notice
that the common case overhead is at most $5$ ms and the overhead for a typical
tree with $\sim 31$ nodes is at most $2$ ms. 

We plot the per-hop overhead  (latency overhead divided by number of nodes) in
milliseconds on \cref{fig:overheadfactor}'s y-axis for trees with a range of
nodes (plotted on the log scale on the x-axis). We can see that on an average
$0.082$ ms of latency overhead is incurred per hop in the service tree.
\textit{\textbf{In summary, \safetree monitor's latency overhead is under
$\sim5$ms for a typical topology. The overhead linearly increases with the
tree's size, adding $~0.082$ms of average overhead per hop.}}
We expect that a production implementation of SafeTree could achieve
significantly better performance, e.g., by writing filters using C++/WASM
instead of Lua.

\section{Related Work} \label{sec:related}
\paragraph*{Safety in microservice and cloud applications.}
Today's microservice systems support policies that control communication between
pairs of microservices. Recent works have explored more general policies. For
instance, Trapeze~\citep{DBLP:journals/pacmpl/AlpernasFFRSSW18} is a system for
dynamic information flow control~\citep{DBLP:journals/jsac/SabelfeldM03} in
serverless computing. Trapeze can precisely specify how data at different
security levels flow around the application. In contrast, our system can specify
properties about the structure of the API call tree.  

Another interesting work in this area is Whip \citep{whip}, a higher-order
contract system for describing service-level specifications as contracts on
blackbox services. Whip policies can describe the arguments and return values of
microservices; however, these policies are focused on individual API calls, and
require a custom network adapter for monitoring. In contrast, our approach can
express policies about trees of API calls using a lightweight monitoring
approach that can be implemented in existing service mesh frameworks.

Our prior work~\citep{DBLP:conf/hotnets/GrewalGH23} proposed policies for
microservices based on a linearization of the service calls; \safetree is more
general in supporting policies that describe the tree structure of the calls,
which requires a richer automaton (VPA) for monitoring. The
Copper~\citep{copper} system also uses this idea of linearization to combine
several single hop policies. Unlike \safetree, Copper does not support tree
policies, nor policies over sequences of API calls.

\paragraph*{Execution correctness in serverless runtime}
A recent line of work aims to make it easier to correctly execute microservice
applications on serverless platforms. For example, serverless operational
semantics were formalized in the $\lambda_{\lambda}$ calculus
\cite{lambdalambda}; language primitives were introduced in $\mu2sls$
\cite{mu2sls} to write microservice code with transaction and asynchrony
abstractions without manually handling failures and execution nuances; durable
functions \cite{durablefn} introduced a programming model for ensuring
orchestration correctness of stateful workflows under retries and failures in
otherwise stateless serverless applications. \safetree assumes that
microservices are executed correctly, but instead checks that the runtime
behavior conforms to some desired specification.

\paragraph*{Programmable runtime verification in networks.}

Our work can be seen as a programmable system for runtime verification in the
service mesh layer. Runtime verification has been used at other layers of the
network stack. For instance, DBVal~\citep{DBLP:conf/sosr/KumarKPAUT21}  checks
for packet forwarding correctness.
Hydra~\citep{DBLP:conf/sigcomm/RenganathanRKVC23} allows the network operator to
enforce  their desired policy and specify custom telemetry to attach to packets
in a custom high-level language, and synthesizes monitors to enforce polices in
the dataplane. For security policies, Poise~\citep{DBLP:conf/uss/Kang00TCL20}
converts high-level security and access control policies into P4 code for
enforcement. At a higher level, Aragog~\citep{DBLP:conf/osdi/YaseenABCL20} is a
scalable system for specifying and enforcing policies about high-level events in
networked systems.

\paragraph*{Distributed and decentralized runtime verification.}

Distributed and decentralized monitoring has been well-studied in the runtime
verification community, but most works target distributed systems and are
designed to contend with monitors that may see events out of
order~\citep{DBLP:journals/tocl/BasinKZ20} or monitors that may see different
views of the global system~\citep{DBLP:conf/ipps/MostafaB15}.  We do not face
such difficulties in our setting: the state of our distributed monitors is
carried with the request, and so each monitor has the full information required
to enforce the policy. Distributed tracing frameworks \citep{jaeger, zipkin,
otel} used for observability rely on a centralized collector for recording
execution traces, which can then be analyzed later. In contrast, a key goal of
our work is to check policies and detect violations efficiently at runtime,
rather than after the fact.

\paragraph*{Automata-theoretic models for hierarchical structure.}
Our policy specification and enforcement is based on nested word languages and
visibly-pushdown automata~\citep{nestedwords}.  The literature on nested words
and VPAs is too large to survey here; the interested reader should consult
\citet{nestedwords}.  While tree automata \cite{tata, treeacceptors} also work
on hierarchical input, their inputs must be structured trees. \safetree uses VPA
because its input is a serialized tree of nodes being incrementally processed,
not the entire service tree structure. 

\paragraph*{Monitoring context-free properties.}
\safetree policies resemble to context-free properties, which can be captured in
extensions of temporal logic.  For instance, VLTL \citep{vltl} supports a more
complex policy language, although monitoring such policies requires B\"uchi and
parity automata, which seem difficult to realize in a microservice setting.
Other examples include CaRet \citep{caret} and PTCaRet \citep{ptcaret}, temporal
logic extensions with call-return matching that can be interpreted using
recursive state machines. Runtime monitoring algorithms that have been
considered for context-free properties specified in temporal logic are include
formula rewriting-based pushdown-automata for PtCaRet \citep{ptcaret}; pushdown
Mealy machine for CaReT with future fragment \cite{futurecaret}; and an LR(1)
parsing based algorithm for parametric properties \citep{javamop}.

In the realm of software verification, PAL \citep{pal} is a DSL for writing
context-sensitive monitors for C programs, where the user directly encodes the
low-level state transitions of the monitor, and the framework automatically
instruments an existing C program.  In contrast, the \safetree policy language
is high-level, and the monitoring automaton is generated automatically.

\section{Conclusion}\label{sec:conclusion}

We have presented \safetree, a policy language for specifying rich, tree-based
safety properties for microservices. By compiling policies to VPA, we  derive an efficient and performant distributed runtime monitor for
enforcing our policies without invasive code changes to
microservices.

We see several possibilities for future work. First, extending our work to the
asynchronous setting, where API calls are processed concurrently, will expand
the applications of our work. However, it is unclear how to specify safety
policies where part of the service tree may not be completed yet, and
asynchronous calls may return in a different order than they were originally
issued, leading to service trees that may not be well-matched.  Second, it can
be useful to support policies that reason about arguments of API calls. It
could also be interesting to support richer nested word languages---our design
uses just call and return symbols, but internal symbols might be useful for
modeling other aspects of microservice behavior.  Finally, our monitoring
strategy could be useful beyond microservices; for instance, for network control
planes.


\section*{Acknowledgements}
We thank the reviewers for their constructive feedback.
This work is supported by NSF grants \#2152831 and \#2312714. This work benefited from discussions with Loris D\textquotesingle Antoni and the HTTP benchmarking tool by Talha Waheed. 
\section*{Data-Availability Statement}
Our artifact \cite{zenodo-safetree} consists of the SafeTree policy compiler \cite{safetreecompiler}, which takes in a policy,
and generates a VPA followed by emitting a distributed monitor. It also contains the source code of the policy compiler to generate a monitor; source for the benchmark applications; a list of test policies; and workload data to reproduce the experiments. 
\bibliography{header,PLDI}
\iffull
\newpage
\appendix

\crefname{lemma}{lemma}{lemmas}
\Crefname{lemma}{Lemma}{Lemmas}
\section{Case Study Policies}
\begin{table}[h!]
\caption{Case studies}
\label{tbl:policies}
\begin{tabular}{ l l }
\toprule
Scenario & Policy \\
\midrule
  A/B Testing & \makecell[l]{$\start \beta: \callseq~ \beta (\notservice{ \setappt\textsf{-v1})^*}$}\\   
   \\ 
  Factorial Testing & \makecell[l]{$\start~\settest$-$\textsf{v2}: \callseq ~reg$\\
                                   $reg = (\any - \setobfuscate\textsf{-v1} - \setlab\textsf{-v1})^*$}\\
   \\
  (Regional) Access Control & \makecell[l]{$\start~\setfront\text{-EU}: \callseq~ reg$\\
                                         $reg = ~\setfront\text{-EU}(\notservice{\database}~)^*$}\\
  \\
  External Requirement & \makecell[l]{$\start~\setappt: \match~ \setappt~ \existschild p$\\
                                      $p = \match~\kleenestar\setdatabase~ \allpath~ \kleenestar$} \\
  \\
  \hline 
  Data-compliance &  \makecell[l]{$\start \settest: ~\match~ \settest \existschild p_1~\textsf{then}~p_2$\\
                      $p_1 = \match~ \setobfuscate \allpath \epsilon$ \\
                      $p_2 = \match~ \setlab \allpath \epsilon$}\\
   \\
  Data Proxy &  \makecell[l]{$\start~ \settest:~\match~\settest \existschild p'$ \\
                             $p' = \match~(\notservice{\setlab})^* \setauthenticate ~\existschild p''$\\
                             $p'' = \match~\kleenestar\setlab \allpath \kleenestar$}\\
  \\
  \hline
  Encryption & \makecell[l]{$\start \setfront: \match~ (\setfront~\setappt)~\allpath ~reg$\\
                            $reg = (\setpayment~ \setencrypt ~+~ \notservice{\setpayment}) \kleenestar$}\\
  \\
  Data Vault &  \makecell[l]{$\start~ \any:~\match~ \any \allpath (\notservice{\setvault})^* (\setvault + \epsilon)$\\
                            $p' = \match~\any^*\payment~ \allpaths \any^*$}\\
  \\
  Resource pricing & \makecell[l]{$\start ~ \settest: \match \settest~ \allchildren (\match~\kleenestar~\setpayment~ \allpaths \kleenestar~)$} \\
\bottomrule
\end{tabular} 

\end{table}
\newpage

\section{Omitted Definitions (in \S\ref{sec:treeconcepts})}

\begin{definition}[Root of a nested word]\label{def:r-nw}
A \mnw $n = (a_1 \ldots a_l,\nu)$ is \textit{rooted} if $\nu(\idx (a_1), \idx (a_l))$ holds  . Here $a_1$ is called the root.
\end{definition}

\begin{definition}[Call Sequence from root to a leaf]
For a \rmnw $n = (a_1 \ldots a_l, \nu)$, a call sequence $\pi(n, a_m) = a_1 \dots a_m$ to some $a_m \in n$ is a \textit{call sequence to leaf} if $\nu(\idx(a_m)) = \idx(a_{m+1})$ and $a_{m+1} \in n$. The set of all such sequences is given by: 
\[
  SeqLeaf(n=(a_1 \ldots a_l, \nu)) =  \Set{ \pi (n, a_m)  \mid \begin{array}{l}
    a_{m+1}, a_m \in n,
    ~\nu(\idx(a_m)) = \idx(a_{m+1})
  \end{array}}
\]

\end{definition}

\section{Omitted Compilation Rules (in \S\ref{sec:compilation})}\label{app:compiler}
\textit{Note: For clarity and readability, throughout this appendix, we use relational notation for the call $\delta_c: \calltype$ and return $\delta_r: \rettype$ transition functions in a deterministic VPA. We represent VPA call transitions as a set of 4-tuple of the form $(q, s, q', g)$, where $\delta_c(q, s) = (q', g)$ and similarly return transitions as a set of 4-tuple of the form $(q, g, s, q')$, where $\delta_r(q, g, s) = q'$. Since the VPA is deterministic, each 4-tuple appears at most once ensuring that this relational presentation adheres to the functional semantics. Similarly, the transitions  $\delta: Q \times \Sigma \to Q$ of a DFA is a function, and for consistency with the VPA's notations, we will represent the DFA transitions as a set of triples of the form $(q, s, q')$, where $\delta(q, s) = q'$. 
}

We define \[\vp{.}: Policy \to \setvpa\] as follows:

\label{automata}
\begin{enumerate}
\item VPA for $\textsf{\textbf{match}}~reg_1 \allpath reg_{2}$.

Let the \DFA~ for $reg_1$ and $reg_2$ be $\mathcal{A}_1$ and $\mathcal{A}_2$ respectively, where
\begin{align*}
    &\dfa{1}, \\
    & \dfa{2}.
\end{align*}
Then $\vp{\textsf{\textbf{match}}~reg_1 \allpath reg_{2}} = \mathcal{M}^p$, where
\begin{align*}
    & \mathcal{M}^{p} = (Q,~ q_{beg},~ \{q_{end}\},~ \Sigma,~ \Gamma,~ \bot,~ \delta_{c},~  ~ \delta_{r}) \\
    & \Sigma = \{\langle e, e \rangle \mid e \in \tilde{\Sigma}\}\\
    & \tilde{Q^2} = \{\textsf{aug}(q^2) \mid q^2 \in Q^2\}\\
    & Q = \{q_{beg}, q_{end}, q_{sat}, q_{rej}^2\} \cup Q^1 \cup Q^2 \cup \tilde{Q^2}.
\end{align*}
The call transition is given by $\delta_c = \delta_{c1} \cup \delta_{c2} \cup \delta_{c3}$. We define the sets $\delta_{c1}$, $\delta_{c2}$, and $\delta_{c3}$ below. 
$\delta_{c1}$ defined below encodes the rules to simulate $\mathcal{A}_1$ for finding a path from the root that matches $reg_1$:
\begin{align*}
\delta_{c1} = ~ & \{(q_{beg}, \langle s, q, q_{beg}) \mid (q_{init}^1, s, q) \in \delta^1\}\\
& \cup \{(q_{src}, \langle s, q_{dst}, q_{src}) \mid(q_{src}, s, q_{dst}) \in \delta^1, ~q_{src} \notin F^1\}.
\end{align*} 
Next, $\delta_{c2}$ simulates $\mathcal{A}_2$ after any final state of $\mathcal{A}_1$, say $q_{final}^1$ is reached. These call transitions push an element of $\tilde{Q^2}$ on the stack. According to the second rule below, if some source state is $q_{src}$ then the VPA pushes the corresponding augmented state $\textsf{aug}(q_{src})$ on the stack. 
\begin{align*}
\delta_{c2} =~ & \{(q^1_{final}, \langle s, q_{dst}, \textsf{aug}(q^2_{init}))  \mid  q_{final}^1 \in F^1, ~(q_{init}^2, s, q_{dst}) \in \delta^2\}\\
& \cup \{(q_{src}, \langle s, q_{dst}, \textsf{aug}(q_{src}))  \mid  (q_{src}, s, q_{dst}) \in \delta^2\}\\
&\cup \{(\textsf{aug}(q_{src}), \langle s, q_{dst}, \textsf{aug}(q_{src})) \mid (q_{src}, s, q_{dst}) \in \delta^2\}
\end{align*}
Finally, $\delta_{c3}$ specifies that once the VPT finds a path matching $reg_1$ such that all the outgoing paths starting at the children of the \textit{end} node of this path also match $reg_2$ then the VPT remains in the satisfied state $q_{sat}$. The second rule describes that once in the reject state $q_{rej}^2$ , the VPT will remain in that state. The last rule ensures that the VPT does not read any symbols after transitioning to $q_{end}$, which is meant to be transitioned to only at the last symbol of a word.
\begin{align*}
\delta_{c3} = ~& \{(q_{sat}, \langle s, q_{sat}, q_{sat})\}\\
& \cup \{(q_{rej}^2, \langle s, q_{rej}^2, q_{rej}^2)\} \\
& \cup \{(q_{end}, \langle s, q_{rej}^2, q_{rej}^2)\}
\end{align*}
To define the return transition $\delta_r = \delta_{r1} \cup \delta_{r2} \cup \delta_{r3} \cup \delta_{r4}$, we first define $\delta_{r1}$ to backtrack $\mathcal{A}_1$ on return symbols:
\begin{align*}
\delta_{r1} = ~& \{(q^1, q, s \rangle, q)  \mid  q^1, q \in Q^1 - F^1\}\\
  &  \cup \{(q^1, q_{beg}, s \rangle, q^1)  \mid  q^1 \in Q^1-F^1\} \\
    & \cup \{(q_{rej}^2, q^1, s \rangle, q^1)  \mid q^1\in Q^1-F^1\}
\end{align*}
Next, $\delta_{r2}$ describes the transitions to: (a) the state $q_{sat}$ that indicates that one occurrence of a path satisfying the $\forall\textsf{-\textbf{path}}$ policy has been found; (b) the accepting state $q_{end}$. This set is defined as:
\begin{align*}
\delta_{r2} = ~& \{(q_{final}^1, q^1, s \rangle, q_{sat}) \mid q^1 \in Q^1-F^1,~q_{final}^1 \in F^1\}\\
& \cup \{(q_{sat}, q, s \rangle, q_{sat})  \mid q \in (Q^1 \setminus F^1)  \cup \tilde{Q^2} \cup \{q_{sat},  q_{rej}^2\}\}\\
& \cup \{(\textsf{aug}(q^2_{init}), q^1, s \rangle, q_{sat})  \mid q^1 \in Q^1-F^1, ~(q_1, s, q_{final}) \in \delta^1,~q_{final} \in F^1\}\\
  & \cup \{(q_{final}^1, q_{beg}, s \rangle, q_{end})  \mid q_{final}^1 \in F^1\}\\
      & \cup \{(\textsf{aug}(q_{init}^2), q_{beg}, s \rangle, q_{end})\}\\
  & \cup \{(q_{sat}, q_{beg}, s \rangle, q_{end})\}
\end{align*}

The set $\delta_{r3}$ describes the backtracking of $\mathcal{A}_2$:
\begin{align*}
  \delta_{r3} = ~& \{(q', q, s \rangle, q)  \mid  q' \in F^2, q \in \tilde{Q^2}\}\\
  & \cup \{(q^2, q^1, s \rangle, q^{1})  \mid  q^1 \in Q^1-F^1, q^2 \in Q^2\} \\
  & \cup \{(q', q, s \rangle, q)  \mid  q', q \in \tilde{Q^2}\}\\
  & \cup \{(\textsf{aug}(q^2_{init}), q^1, s \rangle, q^1)  \mid q^1 \in Q^1-F^1, ~(q^1, s, q_{final}) \notin \delta^1,~q_{final} \in F^1\} \\
  & \cup \{(q, q^1, s \rangle, q^1)  \mid q^1 \in Q^1-F^1, q \in \tilde{Q^2} - \{\textsf{aug}(q^2_{init})\}\}
  \end{align*}
  
The set $\delta_{r4}$ describes transitions that go to $q_{rej}^2$ state:
\begin{align*}
\delta_{r4} = ~& \{(q^1, q_{sat}, s \rangle, q_{rej}^2) \mid q^1 \in Q^1\} \\
  & \cup \{(q^1, q^2, s \rangle, q_{rej}^2) \mid q^1 \in Q^1, ~q^2 \in \tilde{Q^2}\} \\
  & \cup \{(q', q, s \rangle, q_{rej}^2)  \mid  q' \in Q^2 - F^2, ~q \in \tilde{Q^2}\}\\
  & \cup \{(q^2, q, s \rangle, q_{rej}^2)  \mid  q^2 \in Q^2, q \in \{q_{beg}, q_{sat, q_{rej}^2}\} \}\\
  & \cup \{(q, q^2, s \rangle, q_{rej}^2)  \mid q^2 \in \{q_{sat}, q_{rej}^2\}, q \in \tilde{Q^2}\}\\
        & \cup\{(q, q_{beg}, s \rangle, q_{rej}^2) \mid  q \in \tilde{Q^2} - \{\textsf{aug}(q_{init}^2)\}\}\\
  & \cup \{(q_{rej}^2, q^2, s \rangle, q^2_{rej})  \mid  q^2 \in \{ q^2_{rej}, q_{sat}, q_{beg}\} \cup \tilde{Q^2}\}\\
  & \cup \{(q_{beg}, q, s \rangle, q_{rej}^2)  \mid  q \in \Gamma\} \\
  & \cup \{(q_{end}, q, s \rangle, q_{rej}^2)  \mid  q \in \Gamma\}\\
       & \cup \{(q, q_{rej}^2, s \rangle, q_{rej}^2) \mid  q \in Q^1\}
\end{align*}

\item VPA for $\match~ reg \allchildren p_1$.

Let the VPA for $p_1$ be $\mathcal{M}^{p_1} = (Q^{p_1},~ q_{beg}^{p_1},~ F^{p_1},~ \Sigma,~ \Gamma,~ \bot,~ \delta_{c}^{p_1}, ~ \delta_{r}^{p_1}).$\\
Let the \DFA~ for $reg$~be  $\dfa{1}$.\\
Then $\vp{\match~ reg \allchildren p_1} = \mathcal{M}^p $, where 
\begin{align*}
    & \mathcal{M}^{p} = (Q,~ q_{beg},~ \{q_{end}\},~ \Sigma,~ \Gamma,~ \bot,~ \delta_{c},~  ~ \delta_{r}), \\
    & Q = \{q_{beg}, q_{end}, q_{sat}, q_{rej}\} \cup Q^{p_1} \cup Q^1, \\
    & \Gamma = \Gamma^{p_1} \cup \{q_{beg}, q_{sat}, q_{rej}\} \cup (Q^1 \setminus F^1).
\end{align*}

The call transition is given by $\delta_c = \delta_{c1} \cup \delta_{c2} \cup \delta_{c3}$. We define the sets $\delta_{c1}$, $\delta_{c2}$, and $\delta_{c3}$ below. 
$\delta_{c1}$ defined below encodes the rules to simulate $\mathcal{A}_1$ for finding a path from the root that matches $reg_1$:
\begin{align*}
\delta_{c1} =~&\{(q_{beg}, ~ \cs , ~q, q_{beg})  \mid (q^{1}_{init}, ~s, ~q) \in \delta^1,~ q\in Q^1 \} \\
& \cup \{(q_{src}, \cs, q_{dst}, q_{src})  \mid  (q_{src}, s, q_{dst}) \in \delta^1 ,~q_{src} \notin F^1\}
\end{align*}

After finding a \textit{first match} path for $reg$, the VPA starts to run $\mathcal{M}^{p_1}$ (captured by the first rule in the definition below). To simulate $\mathcal{M}^{p_1}$, we add all its call transition elements, but those that transition from the final state of $\mathcal{M}^{p_1}$. Instead, transitions from the final state of $\mathcal{M}^{p_1}$ simulate $\mathcal{M}^{p_1}$'s transition from its initial state. The idea is that we need to restart simulating $\mathcal{M}^{p_1}$ on all children of a given subtree to match the sub-policy $p_1$. The set $\delta_{c2}$ is defined as: 
\begin{align*}
\delta_{c2} =~& \{(q^1_{final}, \cs , q_{dst}, q_{stack})  \mid (q_{beg}^{p1}, \cs , q_{dst}, q_{stack}) \in ~ \delta^{p_1}_{c}, ~ q_{final}^1 \in F^1\}\\
& \cup (\delta_{c}^{p_1} - \{(q_{end}^{p_1}, \langle s, q, q_{stack}) \in \delta_{c}^{p_1}  \mid  q_{end}^{p_1} \in F^{p_1}\}) \\
& \cup \{(q_{end}^{p_1}, \langle s, q, q_{stack}) \mid  (q_{beg}^{p_1}, \langle s, q, q_{stack}) \in \delta_{c}^{p_1}, ~ q_{end}^{p_1} \in F^{p_1}\}
\end{align*}

The state $q_{sat}$ describes that a path has been found that satisfies $reg$ and the children subtrees of that path's endpoint satisfy $p_1$. Therefore, as described in $\delta_{c3}$ below, any further call transition keeps the VPA in $q_{sat}$. Similarly, $q_{rej}$ is the special state that denotes the policy $p_1$ has been violated and call transitions keep the VPA in $q_{rej}$. Lastly,  in this complete VPA, any transitions from $q_{end}$ or the final state take the VPA to $q_{rej}$. The idea is that $q_{end}$ is supposed to be transitioned to only at the end of the word or the matching return for the first call symbol in the word. The set $\delta_{c3}$ is defined as:
\begin{align*}
\delta_{c3} =~& \{(q_{sat}, \cs, q_{sat}, q_{sat})\} \\
& \cup \{(q_{rej}, \cs, q_{rej}, q_{rej})\} \\
& \cup \{(q_{end}, \cs, q_{rej}, q_{rej})\}
\end{align*}

The return transition function is given by $\delta_r = \delta_{r1} \cup \delta_{r2} \cup \delta_{r3} \cup \delta_{r4} \cup \delta_{r5}$. We define these sets below. Firstly, the elements of $\delta_{r1}$ describe the transitions of $\mathcal{M}^{p_1}$ with the exception of transitions from $\mathcal{M}^{p_1}$'s final state and those with stack symbol $q_{beg}^{p_1}$. This is done because we want to restrict these transitions. For all return transitions in $\mathcal{M}^{p_1}$ destined to a  non-final state in $F^{p_1}$ after starting at $\delta_{r}^{p_1}$ and with stack symbol  $q_{beg}^{p_1}$, the destination state is set to $q_{rej}$. These rules are captured in:
\begin{align*}
&\delta_{r1} =~\delta_{r1'} \cup  \{(q, q_{beg}^{p_1}, \rs, q_{rej}) \mid  (q, q_{beg}^{p_1}, \rs, q') \in \delta_{r}^{p_1}, ~q' \notin F^{p_1}\}, \text{where} \\
&\delta_{r1'} =~\delta_{r}^{p_1} \setminus \delta_{\text{no-final}} ~\text{and} \\
&\delta_{\text{no-final}} =~ \{(q_{end}^{p_1}, q, \rs, q')\in \delta_{r}^{p_1} \mid  q_{end}^{p_1} \in F^{p_1}\} \cup  \{(q, q_{beg}^{p_1}, \rs, q') \in \delta_{r}^{p_1}, ~q' \notin F^{p_1}\}. 
\end{align*}
The transitions to backtrack $\mathcal{A}_1$'s simulation on return symbols is given in:
\begin{align*}
\delta_{r2} =~&  \{(q_{end}^{p_1}, q_1, \rs, q_{1})  \mid  q_1 \in Q^{1} \setminus F^1, ~q_{end}^{p_1} \in F^{p_1}, ~(q_1, s, q_{final}^1) \notin \delta^1, ~q_{final}^1 \in F^1\}  \\
& \cup\{(q_{rej}, q^1, \rs, q^1)  \mid q^1 \in Q^1 \setminus F^1\} \\
& \cup \{(q, q_1, \rs, q_1)  \mid  q \in (Q^1 \setminus F^1) \cup (Q^{p_1} \setminus F^{p_1}), ~q_1 \in Q^1 \setminus F^1\}  \\
& \cup \{(q^1, q_{beg}, s \rangle, q^1)  \mid  q^1 \in Q^1 \setminus F^1\} 
\end{align*}
The below definition of $\delta_{r3}$ describes the transitions to mark that  a path satisfying the policy is found, meaning the nested word will be eventually accepted by $\mathcal{M}^p$. To mark this, the return transition  goes to the satisfiability marker state $q_{sat}$. The key idea of the below transitions  is that  when the VPA returns from the root of the tree being inspected for all children subtrees to satisfy $p_1$, the VPA should be at the final state of $\mathcal{M}^{p_1}$:
\begin{align*}
\delta_{r3} =~& \{(q_{end}^{p_1}, q_1, \rs, q_{sat})  \mid  q_1 \in Q^{1}-F^1, ~q_{end}^{p_1} \in F^{p_1}, ~(q_1, s, q_{final}^1) \in \delta^1, ~q_{final}^1 \in F^1\}  \\
& \cup  \{(q_{sat}, q_1, \rs, q_{sat})  \mid  q_1 \in (Q^1 \setminus F^1) \cup \Gamma^{p_1} \cup \{q_{sat}, q_{rej}\}\}  \\
& \cup \{(q_{final}^1, q^1, s \rangle, q_{sat}) \mid q^1 \in Q^1-F^1,~q_{final}^1 \in F^1\}
\end{align*}
$\mathcal{M}^p$ accepts the word under any of the following conditions: (a) the tree has a single node and the root of this nested word itself matches the $reg$; (b) this root had children that satisfied the inner policy (c) the VPA is in the $q_{sat}$ state meaning an early acceptance has been indicated. These are described below:
\begin{align*}
\delta_{r4} =~& \cup \{(q_{final}^1, q_{beg}, s \rangle, q_{end})  \mid q_{final}^1 \in F^1\} &\\
& \cup \{(q_{end}^{p_1}, q_{beg}, s \rangle, q_{end})  \mid q_{end}^{p_1} \in F^{p_1}\} &\\
& \cup \{(q_{sat}, q_{beg}, \rs, q_{end}) \}
\end{align*}

Lastly, the following transitions keep the VPA in reject state $q_{rej}$. (It can be read as rules added to make the VPA complete.)
\begin{align*}
\delta_{r5} =~&\{(q_{rej}, q^{p_1}, \rs, q_{rej})  \mid  q^{p_1} \in \Gamma^{p_1} \cup \{q_{beg}, q_{sat}\}\} &\\
& \cup \{(q_{rej}, q_{rej}, \rs, q_{rej})\} &\\
& \cup \{(q^1, q_{sat}, s \rangle, q_{rej})  \mid  q^1 \in Q^1\}&\\
& \cup \{(q^1, q^2, s \rangle, q_{rej})  \mid  q^1 \in Q^1, ~q^2 \in \Gamma^{p_1}\}&\\
& \cup \{(q^2, q, s \rangle, q_{rej})  \mid  q^2 \in Q^{p_1}, q \in \{q_{sat}, q_{rej}\} \}& \\
& \cup \{(q^2, q_{beg}, s \rangle, q_{rej})  \mid  q^2 \in Q^{p_1}-F^{p_1}\}& \\
& \cup \{(q_{beg}, q, \rs, q_{rej}) \mid  q \in \Gamma\}& \\
& \cup \{(q_{end}, q, \rs, q_{rej})  \mid  q\in \Gamma\}& \\
& \cup \{(q, q_{rej}, s \rangle, q_{rej}) \mid  q \in Q^1\} &
\end{align*}

\item VPA for $\match~ reg \existschild p_1 ~\textsf{\textbf{then}}\ldots \textsf{\textbf{then}}~p_n$.

Let the VPA for a policy $p_i$ be $\mathcal{M}^{p_i} = (Q^{p_i},~ q_{init}^{p_i},~ F^{p_i},~ \Sigma,~ \Gamma,~ \bot,~ \delta_{c}^{p_i}, ~ \delta_{r}^{p_i})$. \\
Let the \DFA~ for $reg$~be  $\dfa{1}$.\\
Then $\vp{\match~ reg \existschild p_1 ~\textsf{\textbf{then}}\ldots \textsf{\textbf{then}}~p_n} = \mathcal{M}^{p}$, where
\begin{align*}
&\mathcal{M}^{p} = (Q,~ q_{beg},~ \{q_{end}\},~ \Sigma,~ \Gamma,~ \bot,~ \delta_{c},~  ~ \delta_{r}), \\
&Q = \{q_{beg}, q_{end}, q_{rej}, q_{exists}, q_{sat}\} \cup Q^1 \cup Q_{\text{all-}p_i}, \\
&Q_{\text{all-}p_i} =  \bigcup_{1 \le i \le k} Q^{p_i}, \\
&\Gamma = \{q_{beg}, q_{rej}, q_{exists}, q_{sat}\} \cup (Q^1 \setminus F^1) \cup \Gamma_{\text{all-}p_i},\\
& \Gamma_{\text{all-}p_i} = \bigcup_{1 \le i \le k} \Gamma^{p_i}.
\end{align*}
We define the call transition in terms of $\delta_{c1}$, $\delta_{c2}$, and $\delta_{c3}$ given below. First, the transitions to simulate $\mathcal{A}_1$ followed by running the VPA of $p_1$ are  recorded in $\delta_{c1}$:
\begin{align*}
\delta_{c1}  =~& \{(q_{beg}, \cs, q, q_{beg})  \mid (q_{init}^{1}, s, q) \in \delta_{1}\} &\\
& \cup \{(q_{src}, \cs, q_{dst}, q_{src})  \mid (q_{src}, s, q_{dst}) \in \delta^1, ~q_{src} \notin F^1\} &\\
&\cup \{(q_{final}^1, \cs, q_{dst}, q_{stack})  \mid (q_{beg}^{p_1}, \cs, q_{dst}, q_{stack}) \in \delta_{c}^{p_1},~ q_1^{final} \in F^1\} &
 \end{align*}
The following set $\delta_{\text{calls-}p_{i}}$ contains the call transitions corresponding to the policy $p_i$, but all the call transition in $\mathcal{M}^{p_i}$ from its accept state are updated to give the effect that VPA is at the begin state of $\mathcal{M}^{p_{i+1}}$.
\begin{align*}
\delta_{\text{calls-}p_{i}} =~& \delta_{c}^{p_i} \setminus \{(q_{end}^{p_i}, \cs, q, q_{stack}) \in \delta_{c}^{p_1}  \mid  q_{end}^{p_i} \in F^{p_i}\}&\\
& \cup \{(q_{end}^{p_i}, \cs, q, q_{stack})  \mid   (q_{beg}^{p_{i+1}}, \cs, q, q_{stack}) \in \delta_{c}^{p_{i+1}}, ~q_{end}^{p_i} \in F^{p_i}\}&
\end{align*}
We write the union of all such $\delta_{\text{calls-}p_i}$ for $1 \le i \le k$ as $\delta_{c2} = \bigcup_{1 \le i \le k} \delta_{\text{calls-}p_i}$.

Finally, $\delta_{c3}$ contains the rules to mark if the policy is violated or will definitely be satisfied. The state $q_{exists}$ marks that all $k$ sub-policies have been satisfied.
\begin{align*}
\delta_{c3} =~&  \{(q_{end}^{p_k}, \cs, q_{exists}, q_{exists}) \mid q_{end}^{p_k} \in F^{p_k}\} & \\
& \cup \{(q_{exists}, \cs, q_{exists}, q_{exists})\} & \\
& \cup  \{(q_{sat}, \cs, q_{sat}, q_{sat})\} & \\
& \cup \{(q_{rej}, \cs, q_{rej}, q_{rej})\} &\\
& \cup \{(q_{end}, \cs, q_{rej}, q_{rej})\}
\end{align*}
Putting the above sets together, we define the call transition $\delta_c$ as:\\
$\delta_c = \delta_{c'} \cup \{(q, \cs, q_{rej}, q_{rej}) \mid  q,~ q',~ q'' \in Q,~ \cs \in \Sigma,~(q, \cs,q', q'') \notin \delta_{c'} \}, \text{where}$\\
$\delta_{c'} = \delta_{c1} \cup \delta_{c2} \cup \delta_{c3}$.

The return transition function is defined in terms of $\delta_{r1}$, $\delta_{r2}$, $\delta_{r3}$, and $\delta_{r4}$. The set $\delta_{r1}$ simulates the return transitions of the VPAs of each of the $p_i$ sub-policies. To retry checking $p_i$ on another subtree in case the current subtree did not satisfy $p_i$, we update the destination state of the transition to the beginning state of $p_i$. 
\begin{align*}
\delta_{\text{ret-}p_i} =~& \delta_{r}^{p_i} \setminus \{(q, q_{beg}^{p_i}, s, q') \in \delta^{p_i}_{r} \mid  q' \notin F^{p_i}\} &\\
& \cup \{(q, q_{beg}^{p_i}, s, q_{beg}^{p_i})\vert ~(q, q_{beg}^{p_i}, s, q')\in \delta_{r}^{p_i}, ~q' \notin F^{p_i}\} &
\end{align*}
We write the union of all such $\delta_{\text{ret-}p_i}$ for $1 \le i \le k$ as $\delta_{r1} = \bigcup_{1 \le i \le k} \delta_{\text{ret-}p_i}$. 

The transitions to backtrack $\mathcal{A}_1$ are given by:
\begin{align*}
\delta_{r2} =~&  \{(q, q^1, \rs, q^1)  \mid  q \in (\cup_{1 \le i \le k} Q^{p_i})\setminus F^{p_k}, ~q^1 \in Q^1 \setminus F^1\}&\\
& \cup  \{(q, q^1, \rs, q^1)  \mid  q \in Q^1,~q^1\in Q^1 \setminus F^1\} & \\
& \cup \{(q_{rej}, q^1, \rs, q^1)  \mid  q^1\in Q^1 \setminus F^1\} &\\
& \cup \{(q_{end}^{p_k}, q^1, \rs, q_{1})  \mid  q_{end}^{k} \in F^{p_k}, ~(q^1, s, q^1_{final}) \notin \delta^1,~q^1_{final} \in F^1\} &
\end{align*}
The set $\delta_{r3}$ contains the transition elements that determine whether the VPA has found a path that satisfies the existential condition of its subtrees satisfying the sub-policies, \textit{i.e.,} VPA goes to the $q_{sat}$ state.
\begin{align*}
  \delta_{r3} =~& \{(q_{end}^{p_k}, q^1, \rs, q_{sat})  \mid  q_{end}^{p_k} \in F^{p_k}, ~(q^1, s, q^1_{final}) \in \delta^1,~q^1_{final} \in F^1\} & \\
  & \cup \{(q_{exists}, q^1, \rs, q_{sat})  \mid  (q^1, s, q^1_{final}) \in \delta^1,~q^1_{final} \in F^1\} & \\
  & \cup \{(q_{sat}, q, \rs, q_{sat})  \mid  q \in \Gamma - \{q_{beg}\}\} & \\
  & \cup\Set{(q_{exists}, q, \rs, q_{exists}) \ |  \begin{array}{l} ~q \in \Gamma \setminus (Q^1_{\text{non-final}} \cup \{q_{beg}\}), ~\text{where}\\
   ~Q^1_{\text{non-final}} = Q^1 \setminus F^1 
  \end{array}}&
\end{align*}

The set $\delta_{r4}$ contains the transitions that send the VPA to $q_{rej}$ state, which can either imply that the $k$ sub-policies were not satisfied on the children of a subtree, or simply the special state used to make the VPA complete by sending all disallowed transitions to $q_{rej}$.
\begin{align*}
\delta_{r4} =~& \{(q^1, q_{beg}, \rs, q_{rej})~ \vert~q^1 \in Q^1\}&\\
  & \cup \{(q_{rej}, q, \rs, q_{rej}),  \mid  q \in \Gamma \setminus (Q^1 - F^1)\} &\\
  & \cup  \{(q, q_{rej}, \rs, q_{rej})  \mid q \in Q-\{q_{sat}, q_{exists}\}\} &\\
  & \cup  \{(q_{beg}, q, \rs, q_{rej})  \mid q \in \Gamma\} &\\
  & \cup  \{(q_{end}, q, \rs, q_{rej})  \mid q \in \Gamma\} &\\
  &\cup \{(q_{exists}, q^1, \rs, q_{rej})  \mid  (q^1, s, q^1_{final}) \notin \delta^1,~q^1_{final} \in F^1\} &\\
  &\cup  \{(q^{p_k}, q_{beg}, \rs, q_{rej}) \mid  q^{p_k} \in Q^{p_k} - F^{p_k}\} &\\
    &\cup \{(q^{p_i}, q_{beg}, \rs, q_{rej}) \mid  q^{p_i} \in \cup_{1 \le i < k} Q^{p_i}\} &\\
  & \cup  \delta_{\text{rej-ret-}p_i} &
\end{align*}
Here, $\delta_{\text{rej-ret-}p_i} = \bigcup_{1 \le i \le k} \{(q^{p_i}, q, \rs, q_{rej}) \mid  q^{p_i} \in Q^{p_i}, ~q \in \cup_{i \neq j} \Gamma^{p_j}\cup \{q_{beg}, q_{exists}, q_{sat}\}\}$.

The transitions to the final state of the VPA are given by:
\begin{align*}
\delta_{r5} =~& \{(q_{sat}, q_{beg}, \rs, q_{end})\} &\\
  & \cup \{(q^{p_k}_{end}, q_{beg}, \rs, q_{end}) \mid  q^{p_k}_{end} \in F^{p_k}\}&\\
    & \cup \{(q_{exists}, q_{beg}, \rs, q_{end})\}&
\end{align*}

Putting these sets together, the return transition function is given by:\\
$\delta_r = \delta_{r'} \cup \{(q, q', \rs, q_{rej}) \mid  q,~ q',~ q'' \in Q,~\rs \in \Sigma,~(q, q', \rs, q'') \notin \delta_{r'} \}$, where\\
$\delta_{r'} = \delta_{r1} \cup \delta_{r2} \cup \delta_{r3} \cup \delta_{r4} \cup \delta_{r5}$.

\item VPA for $\callseq ~reg$.

Let the \DFA for $reg$ be $\dfa{1}$.

Then $\vp{\callseq ~reg} = \mathcal{M}_{s}$, where
\begin{align*}
    & \mathcal{M}_{s} = (Q,~ q_{beg},~ \{q_{end}\},~ \Sigma,~ \Gamma,~ \bot,~ \delta_{c},~  ~ \delta_{r}),\\
    & Q = \{q_{beg}, q_{end}, q_{rej}\} \cup Q^1,  \\
    & \Gamma = Q.
\end{align*}
We write the transition function $\delta_c$ and $\delta_r$ as a union of a set of transition elements. We first define $\delta_{c'}$ and $\delta_{r'}$ to encode the simulation of $\mathcal{A}_1$ on call and return symbols, respectively. The simulation of $\mathcal{A}_1$ on call symbols is encoded as:
\begin{align*}
\delta_{c'} \triangleq & \{(q_{beg}, \cs, q, q_{beg})  \mid  (q_{init}^1, s, q) \in \delta^1\}\\
 & \cup \{(q_{src}, \cs, q_{dst}, q_{src})  \mid  (q_{src}, s, q_{dst}) \in \delta^1\}.
\end{align*}
Similarly, the simulation of $\mathcal{A}_1$ on return symbols is encoded as: 
\begin{align*}
\delta_{r'} \triangleq & \{(q^1_{final}, q_{beg}, \rs, q_{end}) \mid  q^1_{final} \in F^1, ~\rs \in \Sigma\}\\
& \cup \{(q', q, \rs, q') \mid  q', q \in Q^1, ~\rs \in \Sigma\}.
\end{align*}
The return transition elements in the above set disregard the symbol and remain at the same state unless it is the last symbol of the trace, which is identified by the stack symbol being $q_{beg}$. Upon reaching the last symbol, the VPA accepts the trace if the string so far is accepted by $\mathcal{A}_1$, which is indicated by being in the source state $q_{final}^1$. 

To make the VPA complete or turn the  sets $\delta_{c'}$ and $\delta_{r'}$ into a function, $\delta_{c'}$ and $\delta_{r'}$ are extended with appropriate transitions to the reject state $q_{rej}$ for the missing elements in $\delta_{c'}$ and $\delta_{r'}$. Finally, the VPA's transition functions are given by:
\begin{align*}
&\delta_c = \delta_{c'} \cup \{(q, \cs, q_{rej}, q_{rej}) \mid  q,~ q',~ q'' \in Q,~ \cs \in \Sigma,~(q, \cs,q', q'') \notin \delta_{c'} \},   \\
& \delta_r = \delta_{r'} \cup \{(q, q', \rs, q_{rej}) \mid  q,~ q',~ q'' \in Q,~\rs \in \Sigma,~(q, q', \rs, q'') \notin \delta_{r'} \}.
\end{align*}

\item VPA for $\start ~S: inner$.

Let the \DFA ~ for the regular expression $\basealpha^*S$ be $\dfa{1}$. \\
Let the VPA for $inner$ be $\mathcal{M}^{in} = (Q^{in},~ q_{beg}^{in},~ F^{in},~ \Sigma,~ \Gamma,~ \bot,~ \delta_{c}^{in}, ~ \delta_{r}^{in})$. \\
Then $\vp{\start ~S: inner} = \mathcal{M}^{p}$, where
\begin{align*}
&\mathcal{M}^{p} = (Q,~ q_{beg},~ \{q_{end}\},~ \Sigma,~ \Gamma,~ \bot,~ \delta_{c}, ~ \delta_{r}), \\
&Q = \{q_{beg}, q_{end}, q_{rej}\} \cup Q^1_{\text{non-final}} \cup Q^{in}_{\text{non-final+beg}}~, \\
&Q^1_{\text{non-final}} = Q^1 \setminus F^1, \\
&Q^{in}_{\text{non-final+beg}} = Q^{in} \setminus (\{q^{in}_{beg}\} \cup F^{in}),\\
&\Gamma = Q \setminus \{q_{end}\}.
\end{align*}
To define the call transition function, we first define the sets $\delta_{c1}$ and $\delta_{c2}$ below. 
$\delta_{c1}$ defined below contains the rules to simulate $\mathcal{A}_1$ for finding a path from the root to a symbol in set $S$ with no ancestors in $S$: 
\begin{align*}
\delta_{c1} \triangleq & \{(q_{beg}, \cs, q, q_{beg})  \mid  (q_{init}^1, s, q) \in \delta^1, ~q \notin F^1\}  \\
& \cup \{(q_{src}, \cs, q_{dst}, q_{src}) \mid  (q_{src}, s, q_{dst}) \in \delta^1, ~q_{src}, q_{dst} \notin F^1\} \\
& \cup \Set{(q_{beg}, \cs, q^{in}, q_{beg})\ | \begin{array}{l}
(q_{init}^1, \cs, q_{final}^1) \in \delta^1, \\
q^1_{final} \in F^1,\\
(q_{beg}^{in}, \cs, q^{in}, q_{beg}^{in}) \in \delta_{c}^{in}
\end{array}}\\
& \cup \Set{(q_{src}, \cs, q^{in}, q_{src})\ | \begin{array}{l}
(q_{src}, \cs, q_{final}^1) \in \delta^1, \\
q_{src} \notin F^1, ~q^1_{final} \in F^1, \\
(q_{beg}^{in}, \cs, q^{in}, q_{beg}^{in}) \in \delta_{c}^{in}\end{array}}
\end{align*}
The last two elements describe a transition to the simulation of $\mathcal{M}_{in}$ after reaching the final state of $\mathcal{A}_1$. Note that we have two special cases: (a) when the source state is $q_{beg}$ meaning the VPA is reading the first symbol of the word and it happens to be a symbol in $S$, and (b) when the path is longer than one symbol.

We define $\delta_{c2}$ to simulate the $\mathcal{M}_{in}$ after the path has been matched by $\mathcal{A}_1$:
\begin{align*}
&\delta_{c2} = \delta_{c}^{in} \setminus \delta_{\text{no-beg+end}}, &\\
& \delta_{\text{no-beg+end}} = \{(q_{beg}^{in}, \cs, q', q_{beg}^{in}) \in \delta^{in}_c\} \cup \{(q_{end}^{in}, \cs, q', q'') \in \delta_{c}^{in} \mid q_{end}^{in} \in F^{in}\}. &
\end{align*}

In the above definition, we have first added the transitions of $\delta_{c}^{in}$. To ensure that after a subtree is accepted by $\mathcal{M}_{in}$, the VPA $\mathcal{M}_p$ restarts simulating $\mathcal{A}_1$ to search for other paths at which $inner$ should be satisfied, we remove all transitions starting at the accept state of $\mathcal{M}_{in}$. Also, we remove the transition from the initial state of $\mathcal{M}_{in}$ because $\mathcal{M}^{p}$ will start simulating $\mathcal{M}_{in}$ after it reaches the final state of $\mathcal{A}_1$. Therefore, the transitions from the initial state of $\mathcal{M}_{in}$ are redundant.

The call transition function for the complete VPA is defined as: \\
$\delta_c = \delta_{c'} \cup \{(q, \cs, q_{rej}, q_{rej}) \mid  q,~ q',~ q'' \in Q,~ \cs \in \Sigma,~(q, \cs,q', q'') \notin \delta_{c'}$, \text{where}\\
$\delta_{c'} \triangleq \delta_{c1} \cup \delta_{c2} \cup \{(q_{rej}, \cs, q_{rej}, q_{rej}) \mid \cs \in \Sigma\}$.\\
Here, all missing transitions of $\delta_{c'}$ get sent to the destination state $q_{rej}$.

To define the return transition function, we first define the sets $\delta_{r1}, ~\delta_{r2}, ~\delta_{r3},~\delta_{r4}$. 
The acceptance conditions are defined as follows:
\begin{align*}
\delta_{r1} = &\{(q^1, q_{beg}, \rs, q_{end}) \mid (q_{init}^1, s, q^1) \in \delta^1\}\\
& \cup \Set{(q^{in}, q_{beg}, \rs, q_{end}) \ | \begin{array}{l}
 (q^{in}, q_{beg}^{in}, \rs, ~q^{in}_{end}) \in \delta_{r}^{in},\\
 ~q_{end}^{in} \in F^{in},\\
 (q_{init}^1, s, q^1_{final}) \in \delta^1\end{array}}.
\end{align*}

The backward simulation of $\mathcal{A}_1$ is defined in:
\begin{align*}
 \delta_{r2} = & \{(q, q^1, \rs, q^1)  \mid   q, q^1 \in Q^1-F^1\} \\
& \cup \Set{(q^{in}, q^1, \rs, q^1) \ | \begin{array}{l} (q^{in}, q_{beg}^{in}, \rs, ~q^{in}_{end}) \in \delta_{r}^{in},\\
q_{end}^{in} \in F^{in}, \\
(q^1, s, q^1_{final}) \in \delta^1, ~q^1 \neq q_{init}^1\end{array}}.
\end{align*}

The set $\delta_{r3}$ contains the transitions of $\mathcal{M}_{in}$: \\
$\delta_{r3} = \delta_{r}^{in} \setminus \delta_{no-end+beg}$, where \\
$\delta_{no-end+beg} = \{(q^{in}_{end}, q', \rs, q'')  \in \delta^{in}_{r} \mid q^{in}_{end} \in F^{in}\} \cup \{(q_{beg}^{in}, q', \rs, q'') \in \delta_{r}^{in}\}$.

Finally, the rejection criterion is captured by $\delta_{r4}$. A word is rejected if a subtree where $inner$ was supposed to be satisfied, does not end in the final state of $\mathcal{M}_{in}$:
\begin{align*}
\delta_{r4} =& \{(q^{in}, q^1, \rs, q_{rej})  \mid  (q^{in}, q_{beg}^{in}, \rs, ~q^{in}_{end}) \notin \delta_{r}^{in},~q_{end}^{in} \in F^{in}, ~(q^1, s, q^1_{final}) \in \delta^1\}\\
 & \cup \{(q_{rej}, q, \rs, q_{rej})  \mid  q \in \Gamma\} 
\end{align*}
Piecing together these sets, the return transition function for the VPA is defined as:\\
$\delta_r = \delta_r' \cup \{(q, q', \rs, q_{rej}) \mid  q,~ q',~ q'' \in Q,~\rs \in \Sigma,~(q, q', \rs, q'') \notin \delta_{r'} \}$, where \\
$\delta_{r'} = \delta_{r1} \cup \delta_{r2} \cup \delta_{r3} \cup \delta_{r4}$.
\end{enumerate}

\section{Soundness Proof}\label{app:soundproof}
\begin{theorem}[Soundness]\label{thm:soundness}
Let $p$ be a policy and  $\mathcal{L}(\vp{p})$ be the set of \rmnws accepted by its visibly pushdown automaton $\vp{p}$. Then,
\begin{enumerate}
    \item $\nw{p} = \mathcal{L}(\vp{p})$ if  $p$ is a service tree policy,
    \item $\tree{p} = \mathcal{L}(\vp{p})$ if $p$ is a hierarchical policy, and
    \item $\Seq{p} = \mathcal{L}(\vp{p})$ if $p$ is a linear sequence policy.   
\end{enumerate}
\end{theorem}

\begin{proof}
   The proof follows directly from (1) \cref{thm:tree}, (2) \cref{thm:seq}, and (3)  \cref{thm:start}.
\end{proof}

\begin{theorem}\label{thm:tree}
Let $p$ be a hierarchical tree policy and $\mathcal{L}(\vp{p})$ be the set of \rmnws accepted by its visibly pushdown automaton $\vp{p}$.  Then, the set of \rmnws accepted by the policy, \textit{i.e.,} $\tree{p}$ is equivalent to those accepted by the visibly pushdown automaton $\vp{p}$, \textit{i.e.,} $\tree{p} = \mathcal{L}(\vp{p})$.
\end{theorem}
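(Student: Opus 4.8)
The plan is to prove $\tree{p} = \mathcal{L}(\vp{p})$ by structural induction on the hierarchical policy $p$, with the base case $p_l = \match~reg_1 \allpath reg_2$ (which recurses only into regular expressions, not sub-policies) and the two inductive cases $p_a = \match~reg \allchildren p'$ and $p_e = \match~reg \existschild p_1~\textsf{\textbf{then}}\ldots\textsf{\textbf{then}}~p_k$. For each case I would establish the two set inclusions separately. The central device throughout is a \emph{run invariant} characterizing the VPA configuration $(q,\theta)$ reached after reading any prefix $n[\idx(a_1),\idx(a_j)]$ in terms of (i) the current depth-first position in the service tree, (ii) the state of $\mathcal{A}_1$ (the DFA for $reg$ or $reg_1$) along the current root-to-node path, and (iii) whether a satisfying configuration has already been witnessed. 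Because the nested word is rooted and well-matched, the VPA's push/pop discipline tracks the ancestor chain exactly: each call pushes the pre-transition $\mathcal{A}_1$ state and each matched return pops it, so the stack $\theta$ records the sequence of DFA states along the current DFS path. This is precisely what lets the return transitions in $\delta_{r1}$ \emph{backtrack} $\mathcal{A}_1$'s run when a subtree fails, realizing the ``retry'' that searches for another matching path.

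For the base case $p_l$, I would prove the invariant by induction on prefix length: while the VPA is in a state of $Q^1$ (or $q_{beg}$) it simulates $\mathcal{A}_1$ forward on calls and backtracks on returns, and the $q_{src}\notin F^1$ guard in $\delta_{c1}$ forces it to stop extending at the \emph{first} final state, matching the $FirstMatch(n,reg_1)$ condition. Once a first match ends at some call $\cs{}$, the transitions of $\delta_{c2}$ begin simulating $\mathcal{A}_2$ (for $reg_2$) on the leaf-paths of the subtree rooted at $\cs{}$, using the augmented stack symbols $\tilde{Q^2}$ to distinguish this phase; reaching $q_{sat}$ records that every path matched $reg_2$, while a failure routes to $q_{rej}^2$. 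The accepting state $q_{end}$ is reached exactly when the root's matched return is read after a witness, so reading off $\delta_{r2}$ and $\delta_{r4}$ yields that $n \in \mathcal{L}(\vp{p_l})$ iff there is a $FirstMatch(n,reg_1)$ path whose end node $\cs{}$ satisfies $\forall n_s \in Subtree(n[\idx(a_i),x]),\ \forall \pi' \in SeqLeaf(n_s),\ Calls(\pi') \in \mathcal{L}(reg_2)$, which is exactly the $\tree{p_l}$ semantics of \cref{fig:intp}.

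For the inductive cases I would invoke the hypothesis $\tree{p_i} = \mathcal{L}(\vp{p_i})$ for each sub-policy. In $p_a$, the construction splices in $\mathcal{M}^{p_1}$ but reroutes its final-state transitions back to its begin state (the third clause of $\delta_{c2}$ and the $q_{beg}^{p_1}$ rules of $\delta_{r1}$), so that $\mathcal{M}^{p_1}$ is \emph{restarted on every child subtree} of the match node; the invariant is that after returning from the match node the VPA is in $q_{sat}$ iff $\mathcal{M}^{p_1}$ accepted each child subtree, i.e.\ $Subtree(n[\idx(a_i),x]) \subseteq \mathcal{L}(\vp{p'}) = \tree{p'}$, which is the $p_a$ semantics. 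In $p_e$, the $k$ sub-automata are composed sequentially: $\delta_{c2}$ wires the final state of $\mathcal{M}^{p_i}$ to the begin state of $\mathcal{M}^{p_{i+1}}$, and $\delta_{r1}$ restarts $\mathcal{M}^{p_i}$ on a later sibling when the current subtree fails. I would show this reaches $q_{exists}$ iff there exist subtrees $t_1,\dots,t_k$ with $\idx(a^j_1) < \idx(a^{j+1}_1)$ and $t_j \in \tree{p_j}$, matching the existential-then semantics. In both cases the outer match/retry logic is identical to the base case and reuses the same $\mathcal{A}_1$-backtracking invariant.

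The main obstacle I anticipate is formalizing the backtracking/retry invariant rigorously enough that it composes across the nested structure: I must argue that whenever an inner check fails inside a subtree, the ensuing return transitions correctly pop the stack to restore $\mathcal{A}_1$'s state at the branch point and resume the forward search, so that \emph{all} candidate first-match paths are explored and none is double-counted. A secondary difficulty specific to $p_e$ is verifying that the sequential wiring of the $\mathcal{M}^{p_i}$ enforces the strict ordering $\idx(a^j_1) < \idx(a^{j+1}_1)$ on the witnessing subtrees while still permitting arbitrary intervening siblings; this requires the invariant to track which prefix of $p_1,\dots,p_k$ has already been matched and on which sibling. Since the compiled transition relations are deterministic and complete (hence total functions), once the invariant is in place both inclusions follow by reading the acceptance condition at $q_{end}$ against the \cref{fig:intp} semantics.
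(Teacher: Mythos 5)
Your proposal is correct and follows essentially the same route as the paper: the paper also splits the theorem into the two inclusions (its Lemmas~\ref{lem:tree-dir1} and~\ref{lem:tree-dir2}), proves each by structural induction on the three hierarchical policy forms with the induction hypothesis applied to sub-policies' VPAs on child subtrees, and relies on exactly your key observation that in a rooted well-matched word the stack records the call symbols with pending returns so that the top of stack at a return is what its matched call pushed, which is what justifies the $\mathcal{A}_1$-backtracking/retry and the first-match, $q_{sat}$/$q_{rej}$, and $q_{exists}$ bookkeeping. The only cosmetic difference is that the paper organizes the $\mathcal{L}(\vp{p})\subseteq\tree{p}$ direction as a backward case analysis on which rule fired at the final return (and at the first occurrence of $q_{sat}$), whereas you phrase it as a forward prefix-length invariant; these amount to the same argument.
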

\begin{proof}
The proof goes by \cref{lem:tree-dir1} and \cref{lem:tree-dir2}.
\end{proof}

\begin{lemma}\label[lemma]{lem:tree-dir1}
  Let $p$ be a tree policy and  $n$ be a  \rmnw, if $n \in \mathcal{L}(\vp{p})$ then  $n \in \tree{p}$, \textit{i.e.,} $\mathcal{L}(\vp{p}) \subseteq \tree{p}$. Here, $\mathcal{L}(\vp{p})$ is the set of \rmnws accepted by the visibly pushdown automaton $\vp{p}$.
  \end{lemma}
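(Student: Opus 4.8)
The plan is to prove the inclusion by structural induction on the hierarchical policy $p$. The sub-policies appearing in the constructors (the $p'$ of $p_a = \match~reg \allchildren p'$ and the $p_1,\dots,p_k$ of $p_e$) are again hierarchical, so the induction stays within the grammar of hierarchical policies. I fix an accepting run $\rho$ of $\mathcal{M}^p = \vp{p}$ on $n = (a_1 \ldots a_l, \nu)$ and extract from $\rho$ exactly the witnesses demanded by the definition of $\tree{p}$ in \cref{fig:intp}. Since $n$ is rooted, its last symbol $a_l$ is the matched return of the root call $a_1$, and the stack symbol consumed by the final transition of $\rho$ is precisely the $q_{beg}$ pushed when $a_1$ was read. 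Acceptance therefore forces the run into $q_{end}$ through one of the dedicated transitions (in $\delta_{r2}$ for $p_l$, or in $\delta_{r4}/\delta_{r5}$ for $p_a, p_e$), and each of these can fire only from a ``satisfied'' configuration such as $q_{sat}$, a final state $q_{final}^1 \in F^1$, or a final inner-VPA state. This is the hook that lets me read a genuine success witness off of any accepting run.

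The engine of the argument is an invariant describing the forward DFA-simulation phase, which I would establish as an auxiliary lemma. While $\mathcal{M}^p$ simulates the $\match$-regex automaton $\mathcal{A}_1$ along a call-path $a_1 \ldots a_m$ from the root (reading call symbols and pushing one stack symbol per call), its current state equals $\hat\delta^1(q_{init}^1, Calls(a_1 \ldots a_m))$, and the stack faithfully records the sequence of $\mathcal{A}_1$-states visited along that path. Because $\delta_{c1}$ permits the forward simulation to continue only from states $q_{src} \notin F^1$, the automaton is forced to leave $\mathcal{A}_1$ (into the $\mathcal{A}_2$-simulation of $\delta_{c2}$, or into the inner VPA for $p_a, p_e$) precisely at the first call whose path prefix lands in $F^1$; this is exactly the shortest-match, i.e.\ $FirstMatch$, condition appearing in $\tree{p}$. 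The stack discipline then guarantees that the return transitions of $\delta_{r1}/\delta_{r2}$ correctly pop the recorded $\mathcal{A}_1$-states, so that when the VPA backtracks on a return it resumes $\mathcal{A}_1$ in the state it would have held at the sibling branch.

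With this invariant in hand, the base case $p_l = \match~reg_1 \allpath reg_2$ proceeds by locating inside $\rho$ the outermost call $a_i$ at which the run crosses from $Q^1$-states into $\tilde{Q^2}$-states via $\delta_{c2}$; the invariant yields a path $\pi(n, a_i) \in FirstMatch(n, reg_1)$ whose matched return is at some index $x$. A second invariant for the $\mathcal{A}_2$-phase shows that the segment of $\rho$ over $n[\idx(a_i), x]$ reaches $q_{sat}$ iff every $SeqLeaf$-path of every subtree in $Subtree(n[\idx(a_i), x])$ has its $Calls$-projection accepted by $\mathcal{A}_2$, and reaches the absorbing sink $q_{rej}^2$ as soon as some leaf-path fails; since $\rho$ is accepting and $q_{rej}^2$ can never reach $q_{end}$, the check must have succeeded, delivering the universal $reg_2$ condition. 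For $p_a$ and $p_e$ the same path-extraction identifies $a_i$ and the subtree $n[\idx(a_i), x]$, but the inner check is now a simulation of the sub-VPAs $\mathcal{M}^{p'}$ (resp.\ $\mathcal{M}^{p_1}, \dots, \mathcal{M}^{p_k}$), which $\delta_{c2}, \delta_{r1}, \delta_{r3}$ splice in verbatim except for redirecting the sub-VPA final-state transitions back to a begin-state so the check restarts on the next child subtree. I peel off these embedded accepting sub-runs, apply the induction hypothesis to conclude that each inspected child subtree lies in $\tree{p'}$ (resp.\ that ordered witnesses lie in $\tree{p_1}, \dots, \tree{p_k}$), and assemble them into the $\allchildren$ or $\existschild$ witness required by $\tree{p}$.

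The main obstacle is the retry machinery: a single accepting run may probe several candidate $reg$-matching paths, checking and discarding their subtrees, before committing to a successful one, and it is exactly the states $q_{sat}/q_{rej}^2$ (and the re-pointing of sub-VPA final transitions) that make this sound. I expect the delicate part of the proof to be showing that the stack invariant survives these interleaved forward and backtrack excursions — that is, that popping on a return always restores precisely the $\mathcal{A}_1$- or $\mathcal{M}^{p_i}$-configuration current at the matching call — so that the witness extracted at the committing path is genuine while the non-committing excursions leave no residue. Establishing this by induction on the length of $\rho$ (equivalently on the nested-word structure of $n$), with a careful case split over the $\delta_c/\delta_r$ partition, is where most of the work lies.
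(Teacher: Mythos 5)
Your proposal is correct and follows essentially the same route as the paper's proof: a structural induction on the policy, an accepting-run analysis that locates where the run commits to a $FirstMatch$ path and enters the inner check, the two auxiliary invariants (the VPA state tracks $\mathcal{A}_1$'s run on the $Calls$-projection of the current call-path, and the top of stack at a return is exactly what was pushed at its matching call), and an appeal to the induction hypothesis on the embedded sub-VPA runs over child subtrees. The only nuance worth flagging is that $q_{rej}^2$ is not truly absorbing --- the return transitions deliberately let the run escape it to backtrack and retry another candidate path --- but you already account for this in your discussion of the retry machinery, so the argument goes through as in the paper.
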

  \begin{proof}
  Given a \rmnw ~$n = (a_1 \dots a_l, \nu) \in \mathcal{L}(\vp{p})$, let the accepting run of 
  \[\vp{p} = \mathcal{M}_{p} = (Q,~ q_{beg},~ \{q_{end}\},~ \Sigma,~ \Gamma,~ \bot,~ \delta_{c}, ~ \delta_{r})\] 
  be 
  \[\rho(\mathcal{M}_p, n) = (q_{beg}, \theta_0 = \bot) \dots (q_l=q_{end}, \theta_l=\bot).\]
  
  An auxiliary result that will be used in this proof is that in the run of a \rmnw, the stack of each configuration has the values pushed by the call symbols with pending returns. Furthermore, the top of the stack symbol at any return symbol is the stack symbol pushed by its matching call.
  
  The proof for the current lemma goes by induction on the structure of the tree policies with the following cases:
  \begin{description}
  \item \textbf{Case} $p = \textsf{\textbf{match}}~reg_1 \allpath reg_{2}$:\\
  By definition of $\vp{p}$, $\mathcal{A}_1$ is the automaton for $reg_1$, where $Q^1$ is the set of all its states, $q_{init}^1$ is its initial state, and $F^1$ is the set of final states. Similarly, $\mathcal{A}_2$ is the automaton for $reg_2$, where  $Q^2$ is the set of all its states, $q_{init}^2$ is its initial state, and $F^2$ is the set of final states.\\
  The run $\rho(\mathcal{M}_{p}, n)$ implies that there exists a transition $(q_{l-1}, \theta_{l-1}) \xrightarrow{a_{l}} (q_{end}, \bot)$. By definition of \rmnw, $a_l$ is a return symbol and $(q_{l-1}, q_{stack}, a_{l}, q_{end}) \in \delta_r$ for some $q_{stack}$. There are three cases:  
  \begin{enumerate}
  \item  \label{same}  Case $(q_{l-1}, q_{stack}, a_{l}, q_{end}) = (q_{final}^1, q_{beg}, s \rangle, q_{end}), ~\text{where}~q_{final}^1 \in F^1:$\\
  In this case $q_{l-1}$ is some final state of $reg_1$ DFA and $q_{stack} = q_{beg}$. Now, $q_{beg}$ is pushed on the stack only by the rule $(q_{beg}, \langle s, q, q_{beg})$, where $q$ satisfies $(q_{init}^1, s, q) \in \delta^1$. 
  The stack entry $q_{beg}$ will be pushed by the matched call for the return symbol $a_l$, \textit{i.e.,} is $a_1$.\\
  The final state of $reg_1$ DFA can only be a destination state upon reading a call symbol. Therefore, $a_{l-1}$ will be a call symbol. By definition of \rmnw, $a_{l-1}$ will be the matched call for $a_{l}$. Since we know $a_1$ is the matched call for $a_l$, the nested word will be $n = (a_1 a_2, \nu)$. 
This word has only one path, and that path $a_1 = \cs$ satisfies the first match property because $q_1 =q_{final}^1$ implies there exists a transition  $(q_{init}^1, \cs, q_{final}^1) \in \delta^1$. Since the grammar of our language disallows $\epsilon \in \mathcal{L}(reg_1)$, $FirstMatch(reg_1)$ will not contain empty string. Lastly, this nested word $n = (a_1 a_2, \nu)$ doesn't have any subtrees for the nested word rooted at $a_1$, so vacuously all paths satisfy $reg_2$. 
  \item Case $(q_{l-1}, q_{stack}, a_{l}, q_{end}) = (q_{sat}, q_{beg}, s \rangle, q_{end})$:\\
  Based on the rules, there will be some return symbol $a_j \in n$ upon reading which the destination state $q_j = q_{sat}$ shows up for the first time in \VPA run. 
    Being in $q_{sat}$ is sufficient to eventually accept the nested word. It marks the existence of a path matching $reg_1$ that satisfies the \textbf{\textsf{all-paths}} constraints. The transition on $a_{j}$ could have observed the application of one of the following rules, where $q_{s}$ refers to some stack value:
  \begin{enumerate}
  \item Case $(q_{j-1}, q_{s}, \rs, q_{sat}) = (\textsf{aug}(q^2_{init}), q^1, s \rangle, q_{sat})~\text{such that}~(q^1, s, q_{final}^1) \in \delta^1$:\\
  Here, $q^1 \in Q^1-F^1,~q_{final}^1 \in F^1$. 
  In this case, the top of the stack is some non-accepting state of $reg_1$ DFA that transitions to the final state of $reg_1$ DFA after reading $s$. Let the matched call for $a_j$ be the symbol $a_i$. This top of the stack symbol $q^1$ would have been pushed while transitioning on $a_i$. Since $(q^1, s, q_{final}^1) \in \delta^1$, we have $q_{i} = q_{final}^1$. Consider $\alpha = \pi(n, a_i) \in Seq(n)$. By induction on the length of prefixes of the word $Calls(\alpha)$, we can prove that the final state in the run of the  $reg_1$ \DFA~ on $Calls(\alpha)$ is same as $q^i$. Therefore, $Calls(\alpha)$ is accepted by $reg_1$ because $q_{i} = q_{final}^1$.\\
  Now, we will show that in the nested word $n' = n[\idx(a_i), \idx(a_j)]$ with $m$ subtrees, any path in any subtree in  $Subtree(n')$ is matched by $reg_2$.
  Note that $a_{j-1}$ must be a return symbol because no call transition has destination $\textsf{aug}(q_{init}^2)$, which is the stack value that's pushed when the source state of a transition is the initial state of $reg_2$ DFA. 
  In particular, this return transition on $a_{j-1}$ could have been $(q, q', s \rangle, q')$, where $q \in F^2$ and $q' \in \tilde{Q^2}$ or $(q, q', \rs, q')$, where $q, q' \in \tilde{Q^2}$. In both cases, $q' = q_{j-1} = aug(q_{init}^2)$. Therefore the top of stack $\theta_{j-2}$ will be $\textsf{aug}(q_{init}^2)$. 
  Let $a_{j'}$ be the matched call for $a_{j-1}$, \textit{i.e.,} $\nu(\idx(a_{j'})) = \idx(a_{j-1})$.
  Then top of stack $\theta_{j'}$ is $\textsf{aug}(q_{init}^2)$. Now, the \rmnw ~$n[\idx(a_{j'}), \idx(a_{j-1})]$ is a   subtree in $Subtree(n')$. If the subtrees in $Subtree(n')$ are ordered by the index of the first symbol of the corresponding nested word then $n[\idx(a_{j'}), \idx(a_{j-1})]$ is the last subtree. This can be proved by case analysis on the transition rule used on the call symbol $a_{j'}$. \\
  Next, given a subtree in  $Subtree(n')$, and some path $\beta = \pi(n_s, a_{k}) \in Seq(n')$, it can be shown by induction on the length of prefixes of the word $Calls(\beta)$ that the final state in the run of the  $reg_2$ \DFA~ on $Calls(\beta)$ is same as the destination state in the VPA run after reading $a_k$.\\
  Finally, we will show that in the nested word $n'$ with  $m$ subtrees, any path in any subtree in  $Subtree(n')$ is matched by $reg_2$. Since $q_j = q_{sat}$, the state $q_{rej}^2$ corresponding to a path not being matched by $reg_2$ could not have been pushed on the stack while reading symbols between $a_i \ldots a_j$. This implies that the rule $(q', q, s \rangle, q_{rej}^2)$, where  $q' \in Q^2 - F^2, ~q \in \tilde{Q^2}$ would not have been used while reading symbols between $a_i \ldots a_j$. Therefore, there exists no leaf symbol or $a_k \in a_i \ldots a_j$ such that $\nu(\idx(a_k)) + 1 = \idx(a_{k+1})$, bit $q_k$ is not a final state of $reg_2$ DFA. Therefore, if we consider any $\pi(n_i, a_k) \in SeqLeaf(n_i)$ then $q_k \in F^2$, which means $Calls(\pi(n', a_k))$ is accepted by $reg_2$. We conclude that all path in any subtree of $n_i$ are matched by $reg_2$.
  
  \item Case $(q_{j-1}, q_{s}, \rs, q_{sat}) = (q_{final}^1, q^1, s \rangle, q_{sat})$:\\
  Here, $q^1 \in Q^1-F^1,~q_{final}^1 \in F^1$. If the transition on $a_{j}$ used the rule $(q_{final}^1, q^1, s \rangle, q_{sat})$ then $a_{j-1}$ would have been a call symbol. Further, $a_{j-1}$ is the matching call for $a_j$. Similar to the above case, we can show that the call sequence or path $\pi(n, a_{j-1})$ satisfies $Calls(\pi(n, a_{j-1})) \in \mathcal{L}(reg_1)$. The nested word $n[\idx(a_{j-1}), \idx(a_{j})]$ has no subtrees. It is vacuously true that all paths satisfy $reg_2$.
  \end{enumerate}
  \item Case $(q_{l-1}, q_{stack}, a_{l}, q_{end}) = (\textsf{aug}(q^2_{init}), q_{beg}, s \rangle, q_{end})$:\\
  The reasoning here is similar to the above case.
  \end{enumerate}
  
  \item \textbf{Case} $p = \match~ reg \allchildren p'$:\\
    By definition of $\vp{p}$, $\mathcal{A}_1$ is the automaton for $reg$ and  $Q^1$ is the set of all its states, $q_{init}^1$ is its initial state, and $F^1$ is the set of final states. $F^{p_1}$ is the set of final states for $\vp{p'}$ and $q_{beg}^{p_1}$ is its initial state. \\
  Similar to the case for $p = \textsf{\textbf{match}}~reg_1 \allpath reg_{2}$, the two transition rules that could have been applied on reading $a_l$ can be:
  \begin{itemize}
  \item Case $(q_{l-1}, q_{stack}, a_{l}, q_{end}) = (q_{final}^1, q_{beg}, s \rangle, q_{end}), ~\text{where}~q_{final}^1 \in F^1:$\\
  The reasoning for this case is same as that in \cref{same}.
  
  \item Case $(q_{l-1}, q_{stack}, a_{l}, q_{end}) = (q_{sat}, q_{beg}, s \rangle, q_{end})$:\\
  State $q_{sat}$ corresponds to the policy being satisfied by existence of a path matching $reg_1$, where all subtrees satisfy $p'$. Based on the rules, there is some $a_j \in n$ upon reading which the destination state $q_j = q_{sat}$ shows up for the first time in \VPA run. The transition on $a_{j}$ can use one of the two rules, where $q_{s}$ refers to some stack value:
  \begin{enumerate}
  \item Case $(q_{j-1}, q_{s}, \rs, q_{sat}) = (q_{end}^{p'}, q_1, \rs, q_{sat})$:\\
  Here, $q_1 \in Q^1-F^1, ~q_{end}^{p'} \in F^{p'}, ~(q_1, s, q_{final}^1) \in \delta^1, ~q_{final}^1 \in F^1$. This implies that the $q_{j-1}$ is the final state of $\vp{p'}$. The symbol $a_{j-1}$ is a return symbol as only return symbols can transition to the final state of $\vp{p'}$. If we consider $a_{j'}$ to  be the matched call for $a_{j-1}$ then top of the stack $\theta_{j'}$ will be the initial state of $\vp{p'}$ or $q_{beg}^{p'}$. This can be shown by case analysis on the policy $p'$ and inspecting the return transition with destination being an the accepting state of $\vp{p'}$. \\
  Suppose $a_i$ is the matched call for $a_j$ and $n_i = n[\idx(a_i), \idx(a_j)]$. The \rmnw ~$n[\idx(a_{j'}), \idx(a_{j-1})]$ is a   subtree in $Subtree(n')$. If the subtrees in $Subtree(n')$ are ordered by the index of the first symbol of the corresponding nested word then $n[\idx(a_{j'}), \idx(a_{j-1})]$ is the last subtree. This can be proved by case analysis on the transition rule used on the call symbol $a_{j'}$. \\
  Now we focus on the state $q_{j'}$ and the state $q_{j'-1}$.  
  The following two transition rules can be applied on $a_{j'}$ to get $q_{beg}^{p'}$ pushed on the stack:
  \begin{itemize}
  \item Case $(q_{j'-1}, \cs, q_{dst}, q_{beg}^{p'}) = (q^1_{final}, \cs , q_{dst}, q_{beg}^{p'})$:\\
  Here, $(q_{beg}^{p1}, \cs , q_{dst}, q_{beg}^{p'}) \in ~ \delta^{p'}_{c}, ~ q_{final}^1 \in F^1$.
  \rules this rule can be applied only if $a_{j'} = a_{i+1}$ because the final state of the the DFA of $reg$ can only occur as a destination state if $a_{j'-1}$ is a call symbol. 
  This implies that $n_i$ has only one subtree $n_s = n_i[\idx(a_{j'}), \idx(a_{j-1})]$.\\
  The run of \VPA  for the policy $p'$ on this \rmnw ~ $n_s$ is accepted because the execution ends at $q_{end}^{p'}$ and also it started at the state $q_{final}^1$, which is equivalent to $q_{beg}^{p'}$ in terms of call transitions. By induction hypothesis of \cref{lem:tree-dir1}, if $n_s$ is accepted by $\vp{p'}$ then $n_s \in \nw{p'}$.\\
  Consider $\alpha = \pi(n, a_i) \in Seq(n)$, it can be shown by induction on the length of prefixes of the word $Calls(x)$ that the final state in the run of the  $reg_1$ \DFA~ on $Call(\alpha)$ is same as $q_i$. Therefore, $Calls(\alpha)$ is accepted by $reg$ because $q_{i} = q_{final}^1$.   
  
  \item Case $(q_{j'-1}, \cs, q_{dst}, q_{beg}^{p'}) = (q_{end}^{p'}, \langle s, q, q_{stack})$:\\
  Here, $(q_{beg}^{p'}, \langle s, q, q_{stack}) \in \delta_{c}^{p'}, ~ q_{end}^{p'} \in F^{p'}$. 
  It is easy to show that  $a_{j'-1}$ is a return symbol. Again, it can be inductively shown that this is a subtree of $n_i$ as defined in the above case.
  Similar to the above case, we can show that all $m$ subtrees in $Subtree(n_i)$ will be in $\nw{p'}$ and that $Calls(\pi(n, a_i))$ is accepted by DFA of $reg$.
      
  \end{itemize}
  
  \item Case $(q_{j-1}, q_{s}, \rs, q_{sat}) = (q_{final}^1, q^1, s \rangle, q_{sat})$:\\
  Here, $q^1 \in Q^1-F^1,~q_{final}^1 \in F^1$. If the transition on $a_{j}$ used the rule $(q_{final}^1, q^1, s \rangle, q_{sat})$ then $a_{j-1}$ would have been a call symbol. Further, $a_{j-1}$ is the matching call for $a_j$. Similar to the above case, we can show that the call sequence or path $\pi(n, a_{j-1})$ satisfies $Calls(\pi(n, a_{j-1})) \in \mathcal{L}(reg_1)$. The nested word $n[\idx(a_{j-1}), \idx(a_{j})]$ has no subtrees. It is vacuously true that all paths satisfy $reg_2$.
  \end{enumerate}
  \end{itemize}
  \item \textbf{Case} $p = \match~ reg \existschild p_1 ~\textsf{\textbf{then}}\ldots \textsf{\textbf{then}}~p_k$:\\
    By definition of $\vp{p}$, $\mathcal{A}_1$ is the automaton for $reg$ and  $Q^1$ is the set of all its states, $q_{init}^1$ is its initial state, and $F^1$ is the set of final states. $F^{p_i}$ is the set of final states for $\vp{p_i}$ and $q_{beg}^{p_i}$ is its initial state. \\
  Similar to the cases for the above policies, the two transition rules that could have been applied on reading $a_l$ can be:
  \begin{itemize}
  \item Case $(q_{l-1}, q_{stack}, a_{l}, q_{end}) = (q_{end}^{p_k}, q_{beg}, s \rangle, q_{end}), ~\text{where}~q_{end}^{p_k} \in F^{p_k}:$\\
  As per this rule, $q_{l-1} = q_{end}^{p_k}$ is some final state of $\vp{p_k}$. We know that $a_{l-1}$ is a return symbol because it cannot be the matched call symbol for $a_{l}$. This is because there is no  transition from source $q_{beg}$ to any state of $\vp{p_k}$.\\
  Suppose $a_{j'}$ is the matched call for $a_{l-1}$ then $n_{j'} = n[\idx(a_{j'}), \idx(a_{l-1}))]$ is a subtree of $n$. 
  Based on the rules, $q_{beg}^{p_k}$ could have been pushed on reading $a_{j'}$ if $q_{j'-1} \in \{q_{final}^1, q_{end}^{p_{k-1}}, q_{beg}^{p_k}\}$, meaning it is the final state of the DFA for $reg$ or final state of the $\vp{p_{k-1}}$ or initial state of $\vp{p_k}$. The run of $\vp{p_k}$ on $n_{j'}$ is same as the run of $\vp{p}$ on $n$ between the configuration upon reading $a_{j'} \ldots a_{l-1}$. By inductive hypothesis of \cref{lem:tree-dir1}, if $n_{j'}$ is accepted by $\vp{p_k}$ then $n_{j'} \in \tree{p_k}$. \\
  Next, we can show by induction on the position $k > 1$ that existence of a subtree $n_s \in Subtree(n)$ such that $n_s \in \vp{p_k}$ implies the existence of a set of $k-1$ subtrees preceding it (in the set of subtrees ordered by the index of the symbols in the subtree) such that $m^th$ subtree is accepted by $\vp{p_{m}}$,

  \item Case $(q_{l-1}, q_{stack}, a_{l}, q_{end}) = (q_{exists}, q_{beg}, s \rangle, q_{end})$\\
  Proof is similar to the above case with additional detail that $q_{l-1} =q_{exists}$ implies the existence of a subtree $n[\idx(a_{j'}),  \nu(\idx(a_{j'}))]$ that satisfies $p_k$.

  \item Case $(q_{l-1}, q_{stack}, a_{l}, q_{end}) = (q_{sat}, q_{beg}, s \rangle, q_{end})$:\\
  Based on the rules, there will be some return symbol $a_j \in n$ upon reading which the destination state $q_j = q_{sat}$ shows up for the first time in \VPA run. The transition on $a_{j}$ can use one of the two rules, where $q_{s}$ refers to some stack value:
  \begin{enumerate}
  \item Case $(q_{j-1}, q_{s}, \rs, q_{sat}) = (q_{end}^{p_k}, q_1, \rs, q_{sat})$:\\
  Here, $q_{end}^{p_k} \in F^{p_k}, ~(q_1, s, q_{final}^1) \in \delta^1, ~q_{final}^1 \in F^1$. This implies that the $q_{j-1}=q_{end}^{p_k}$. Also, $a_{j-1}$ is a return symbol because a transition to the final state of $\vp{p_k}$ is only possible on return symbols. If we consider $a_{j'}$ to  be the matched call for $a_{j-1}$ then top of the stack $\theta_{j'}$ is $q_{beg}^{p_k}$ or the initial state of $\vp{p_k}$.\\
  Suppose $a_i$ is the matched call for $a_j$ and $n' = n[\idx(a_i), \idx(a_j)]$. Let $a_{j'}$ be the matched call for $a_{j-1}$. As shown in the above cases, the \rmnw ~$n[\idx(a_{j'}), \idx(a_{j-1})]$ is the  subtree in $Subtree(n')$. Similar to the above case, we can show that there exists a set of $k$ subtrees in $Subtree(n')$ which when ordered by the index of their first elements satisfy $p_1, \ldots p_k$ in the order of increasing indices of the subtree. Also, $q_{i}$ would have been the final state $q_{final}^1$  of $reg$ DFA. As shown previously. the path $\pi(n, a_i)$ will satisfy $Calls(\pi(n, a_i)) \in \mathcal{L}(reg)$.

  \item Case $(q_{j-1}, q_{s}, \rs, q_{sat}) = (q_{exists}, q^1, s \rangle, q_{sat})$:\\
  Here, $(q_1, s, q_{final}^1) \in \delta^1, ~q_{final}^1 \in F^1$. Suppose $a_i$ is the matched call for $a_j$ and $n' = n[\idx(a_i), \idx(a_j)]$.  
  The proof is similar to the above case with the additional detail that here $q_{j-1} = q_{exists}$ implies the existence of $k$ subtrees in $Subtree(n')$ which when ordered by the index of their first elements satisfy $p_1, \ldots p_k$ in the order of increasing indices of the subtree. 
  \end{enumerate}
  \end{itemize}
  \end{description}
  \end{proof}
  
\begin{lemma}\label[lemma]{lem:tree-dir2}
  Given a hierarchical tree policy $p$ and a \rmnw $n$, if  $n \in \tree{p}$ then $n \in \mathcal{L}(\vp{p})$, \textit{i.e.,} $\tree{p}  \subseteq  \mathcal{L}(\vp{p})$. Here, $\mathcal{L}(\vp{p})$ is the set of \rmnws accepted by the visibly pushdown automaton $\vp{p}$.
  \end{lemma}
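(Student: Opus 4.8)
The plan is to prove $\tree{p}\subseteq\mathcal{L}(\vp{p})$ by structural induction on the hierarchical policy $p$, dual to \cref{lem:tree-dir1}: instead of reading a semantic witness off an accepting run, I start from the witness guaranteed by $n=(a_1\dots a_l,\nu)\in\tree{p}$ and use it to drive the run of $\vp{p}$ on $n$ into the accepting configuration $(q_{end},\bot)$. Since $\vp{p}$ is deterministic and complete there is a unique run on $n$, so it suffices to show that this run accepts. In every case $n\in\tree{p}$ supplies a node $a_i$ whose root-path $\pi(n,a_i)$ lies in $FirstMatch(n,reg)$ together with a matched index $x$ with $\mrel{\idx(a_i)}{x}$; the heart of the argument is to show that the VPA's forward simulation of $reg$'s DFA $\mathcal{A}_1$ reaches a final state of $\mathcal{A}_1$ exactly upon reading $a_i$ (the first-match condition guarantees no earlier prefix triggers the switch), at which point the compiled transitions hand control to the inner check on the subtree $n[\idx(a_i),x]$.

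To make this precise I would first prove a configuration invariant, by induction on the length of the processed prefix, characterizing the state and stack of $\vp{p}$ after reading $a_1\dots a_j$: while on a path that has not yet matched $reg$, the current state is the $\mathcal{A}_1$-state reached by $Calls$ of that path and the stack records the $\mathcal{A}_1$-states of its pending calls (as in the auxiliary observation used in \cref{lem:tree-dir1}); once a first match is found the state tracks the inner automaton, and the special states $q_{sat}$ and $q_{rej}$ (resp.\ $q_{rej}^2$) are ``latching'' and ``recoverable'' respectively. Given this invariant, each case closes using the induction hypothesis. For $p_l=\match~reg_1\allpath reg_2$, the condition that every $\pi'\in SeqLeaf(n_s)$ has $Calls(\pi')\in\mathcal{L}(reg_2)$ forces the $\mathcal{A}_2$-simulation on each outgoing path of the subtree to end in $F^2$, so $q_{rej}^2$ is never latched and the matched return of $a_i$ drives the run to $q_{sat}$. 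For $p_a=\match~reg\allchildren p'$, the hypothesis $Subtree(n[\idx(a_i),x])\subseteq\tree{p'}$ with the IH gives an accepting run of $\mathcal{M}^{p'}$ on each child subtree, which the ``restart'' transitions (those rewriting transitions out of $q_{end}^{p'}$ back to $q_{beg}^{p'}$) chain so that the return from $a_i$ reaches $q_{sat}$. For $p_e$, the witnessing ordered subtrees $t_1,\dots,t_k$ (ordered by first-symbol index) with $t_j\in\tree{p_j}$ and the IH give accepting runs of each $\mathcal{M}^{p_j}$, which the chaining transitions through $\delta_{\text{calls-}p_i}$ and $q_{exists}$ assemble into a run reaching $q_{sat}$. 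In all cases, once $q_{sat}$ (or $q_{exists}$) is reached the $\delta_{c3}$/$\delta_{r3}$ rules keep the run there until the final symbol $a_l$, where the stack top is $q_{beg}$ and the transition to $q_{end}$ fires; the degenerate single-node witness ($a_i=a_1$, so $n=(a_1a_2,\nu)$) is handled directly by the transition $(q_{final}^1,q_{beg},s\rangle,q_{end})$.

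The main obstacle is justifying the deterministic VPA's retry behaviour: because the word is read in a single left-to-right pass, the automaton must, upon failing the inner check on one candidate first-match path, correctly backtrack $\mathcal{A}_1$ using the stack history and resume the search for another matching path without being permanently trapped in a rejecting state. I expect the bulk of the work to be in showing that the recoverable reject states are popped back to the stored $\mathcal{A}_1$-states on the returns out of a failed subtree (so the invariant is restored at the parent), and that the semantic witness, being one admissible choice, is eventually reached or subsumed by an earlier success that already latched $q_{sat}$. The $p_e$ case is the most delicate: I would need a monotonicity (greedy-matching) argument showing that the VPA's greedy strategy---commit to the first child subtree satisfying $p_1$, then the first subsequent one satisfying $p_2$, and so on---succeeds whenever any order-respecting selection of subtrees satisfying $p_1,\dots,p_k$ exists, which follows because choosing an earlier satisfying subtree never shrinks the set of still-available later subtrees.
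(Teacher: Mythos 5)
Your proposal is correct and follows essentially the same route as the paper's proof: structural induction on the policy, a prefix-length induction establishing that the VPA state tracks the $\mathcal{A}_1$ run (with the stack recording pending calls' states), the first-match condition forcing the handoff to the inner check exactly at $a_i$, the latching state $q_{sat}$ driving acceptance at the final return, and the induction hypothesis supplying accepting runs of the inner VPAs on the witnessing subtrees. If anything, you are more explicit than the paper about the two genuinely delicate points---the backtracking out of failed candidate first-match paths and the greedy subtree-selection argument for the $\existschild$ case---both of which the paper's proof asserts largely ``based on the rules'' without elaboration.
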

  \begin{proof}
  Given a \rmnw ~$n = (a_1 \dots a_l  \nu) \in \mathcal{L}(\vp{p})$, let the accepting run of  
  \[\vp{p} = \mathcal{M}_{p} = (Q,~ q_{beg},~ \{q_{end}\},~ \Sigma,~ \Gamma,~ \bot,~ \delta_{c}, ~ \delta_{r})\] 
  be 
  \[\rho(\mathcal{M}_p, n) = (q_{beg}, \theta_0 = \bot) \dots (q_l=q_{end}, \theta_l=\bot).\]
  
  The proof goes by induction on the structure of the tree policies. The cases in the inductive proof will be:
  \begin{enumerate}
  \item Case $p = \textsf{\textbf{match}}~reg_1 \allpath reg_{2}$:\\
 Suppose $\mathcal{A}_1$ is the automaton for $reg_1$, where $Q^1$ is the set of all its states, $q_{init}^1$ is its initial state, and $F^1$ is the set of final states. Similarly, $\mathcal{A}_2$ is the automaton for $reg_2$, where $Q^2$ is the set of all its states, $q_{init}^2$ is its initial state, and $F^2$ is the set of final states.\\
  Existence of the path $\pi(n, a_i) = a_1 \ldots a_i$ implies that the configuration state $q_i$ after reading $a_i$ will be a final state of $reg_1$ \DFA. 
  Suppose $\pi(n, a_i)$ is the first path such that $Calls(\pi(n, a_i)) \in \mathcal{L}(reg_1)$ and also any path in any subtree in $n[\idx(a_i), \nu(\idx(a_i))]$ matches $reg_2$. We need to show that the above conditions are sufficient for the $\vp{p}$ to accept $n$.\\
  Let us consider possible cases for the index of symbol $a_i$ and if it has subtrees or not:
  \begin{enumerate}
  \item \label{pt1} $\idx(a_i) = 1$ and $a_i$ has no subtrees, \textit{i.e.,} $n = (a_1 a_2  \nu)$, where $a_i = a_1$:\\
  In this case, $\pi(n, a_i) = a_1$ because there is just one call symbol in $n$. Since $Calls(\pi(n, a_i))$ is accepted by $reg_1$ DFA, the final state in the run of the DFA of $reg_1$ on $Calls(\pi(n, a_i))$ will be a final state $q_{final}^1 \in F^1$. Just like in the proof of theorem \cref{lem:tree-dir1}, we can show that $q_1 = q_{final}^1$.  Based on the transition function, the run of $\vp{p}$ on $n$ will be an accepting run, $(q_{beg}, \bot) (q_{final}^1, \bot q_{beg}) (q_{end}, \bot)$.
  
  \item  $\idx(a_i) = 1$ and $a_i$ has a subtree, \textit{i.e.,} $n = (a_1 a_2 \ldots a_{l-1} a_l,  \idx, \nu)$:\\
  Similar to the above case $q_1 = q_{final}^1 \in F^1$. Consider there are $m$ subtrees in $Subtree(n)$. For any subtree $n[\idx(a_j), \nu(\idx(a_j))] \in Subtree(n)$, $q_{j-1} \in \{aug(q_{init}^2)\} \cup F^1$. This can be shown by induction on the position of the subtrees (if ordered by the index of their first symbols). Finally, $q_{l-1} = aug(q_{init}^2)$. Since top of $\theta_{l-1}$ is $q_{beg}$, the transition on $a_l$ will yield $q_l = q_{end}$. Therefore, this is accepted by $\vp{p}$.
  
  \item \label{pt3} $\idx(a_i) > 1$, \textit{i.e.,} $n = (a_1 \ldots a_i \ldots a_{\nu(\idx(a_i))} \ldots a_l,  \idx, \nu)$:\\
  \rules having a configuration with state $q_{sat}$ is sufficient to be accepted by the $\vp{p}$. Requiring $Calls(\pi(n, a_i)) \in FirstMatch(n, reg_1)$ is  necessary for being accepted by $\vp{p}$. Therefore, $q_{sat}$ can not occur in any configurations prior to $q_i$.   As $Calls(\pi(n, a_i))$ is accepted by \DFA~ for $reg_1$, the final state will be some $q_{final}^1 \in F^1$. This implies $q_{i} = q_{final}^1$.\\
  By induction on the number of subtrees $m \ge 0$ between $a_i$ and $a_{\nu(\idx(a_i))}$, if all paths in any subtree are matched by $reg_2$ then the state $q_{\nu(\idx(a_i))} = q_{sat}$.
  \end{enumerate}
  
  Note that we have skipped the details about why $q_{sat}$ in any configuration is sufficient to be accepted by the VPA.
  \item Case $p = \match~ reg \allchildren p'$:
Suppose $\mathcal{A}_1$ is the automaton for $reg$ and  $Q^1$ is the set of all its states, $q_{init}^1$ is its initial state, and $F^1$ is the set of final states. $F^{p_1}$ is the set of final states for $\vp{p'}$ and $q_{beg}^{p_1}$ is its initial state.  \\
  Existence of the path $\pi(n, a_i) = a_1 \ldots a_i$ implies that the configuration state $q_i$ after reading $a_i$ will be a final state of $reg_1$ \DFA. \\
  Suppose $\pi(n, a_i)$ is the first path such that $Calls(\pi(n, a_i)) \in FirstMatch(n, reg)$ and also any subtree in $Subtree(n[\idx(a_i), \nu(\idx(a_i))])$ satisfies $p'$ or is in $\tree{p'}$. We need to show that the above conditions are sufficient for the $\vp{p}$ to accept $n$.
  
  Let's consider possible cases for the index of symbol $a_i$ and if it has subtrees or not:
  \begin{enumerate}
  \item  $\idx(a_i) = 1$ and $a_i$ has no subtrees, \textit{i.e.,} $n = (a_1 a_2 , \nu)$, where $a_i = a_1$:\\
  Proof is similar to the above case \ref{pt1}.
  
  \item  $\idx(a_i) = 1$ and $a_i$ has a subtree, \textit{i.e.,} $n = (a_1 a_2 \ldots a_{l-1} a_l,  \idx, \nu)$:\\
  Similar to the above case $q_1 = q_{final}^1 \in F^1$. Consider there are $m$ subtrees in $Subtree(n)$. Given any subtree in $n_s \in Subtree(n[\idx(a_i), \nu(\idx(a_i))])$ satisfies $n_s \in \tree{p'}$, by induction $n_s \in \vp{p'}$. In any subtree $n_s = n[\idx(a_j), \nu(\idx(a_j))]$ the start state before parsing the first symbol of the subtree is  either the initial state of $reg$ DFA or the final state of the $\vp{p_1}$. This can be shown by induction on the position of the subtrees (if ordered by the index of their first symbols). The transition function of $\vp{p}$ is such that the call transition initial state of $reg$ DFA or the final state of the $\vp{p'}$ is equivalent to transitions being made from the initial state of $\vp{p'}$. Moreover, the state after reading the last symbol of the subtree is $q_{\nu(\idx(a_j))} \in F^{p_1}$. Since $n_s \in \mathcal{L}(\vp{p'})$ if $n_s$ is parsed starting the initial state of $p'$ or an equivalent state, the last state will be the final state of $p'$. Therefore, it can be shown that eventually $q_{\nu(\idx(a_i))-1} \in F^{p_1}$. Since $\theta_{l-1} = \bot q_{beg}$, $q_{l} = q_{end}$.
  
  \item $\idx(a_i) > 1$, \textit{i.e.,} $n = (a_1 \ldots a_i \ldots a_{\nu(\idx(a_i))} \ldots a_l,  \idx, \nu)$\\
Proof is similar to the above case \ref{pt3}.

  \end{enumerate}
  
  \item Case $p = \textsf{\textbf{match}}~ reg \existschild p_1 ~\textsf{\textbf{then}}\ldots \textsf{\textbf{then}}~p_k$:\\
  Suppose $\mathcal{A}_1$ is the automaton for $reg$ and  $Q^1$ is the set of all its states, $q_{init}^1$ is its initial state, and $F^1$ is the set of final states. $F^{p_i}$ is the set of final states for $\vp{p_i}$ and $q_{beg}^{p_i}$ is its initial state. 
  Existence of the path $\pi(n, a_i) = a_1 \ldots a_i$ implies that configuration state after reading $a_i$ will be a final state of $reg$ \DFA. 
  Suppose $\pi(n, a_i)$ is the first path such that $Calls(\pi(n, a_i)) \in FirstMatch(n, reg)$ and also there exists a set of $k \ge 1$ subtrees $Subtree(n[\idx(a_i), \nu(\idx(a_i))])$ that satisfy sub-policies $p_i$ for all $k$ in the order of the index of their respective first symbols. We need to show that the above conditions are sufficient for the \VPA to accept $n$. Note $Subtree(n[\idx(a_i), \nu(\idx(a_i))])$ cannot be empty.
  
  Let's consider possible cases for the index of symbol $a_i$:
  \begin{enumerate}
  \item  $\idx(a_i) = 1$ and $a_i$ has a subtree, \textit{i.e.,} $n = (a_1 a_2 \ldots a_{l-1} a_l,  \idx, \nu)$:\\
  Here, $q_1 = q_{final}^1 \in F^1$. Consider there are $m$ subtrees in $Subtree(n)$. Let $\{n_1, \ldots, n_k\}$ be the set of $k$ subtrees of $n$ such that $n_j \in \tree{p_j}$. By induction hypothesis of this theorem, $n_j \in \vp{p_j}$. We can show by induction on the position of the subtrees (if ordered by the index of their first symbols) that if subtree $n_j = n[\idx(a_m), \nu(\idx(a_m))]$ satisfies  policy $p_j$ then in the run of $\vp{p}$ on $n$, $q_{\nu(\idx(a_m))} \in F^{p_j}$. Using this, we can show that $q_{l-1} \in \{q_{exists}\} \cup F^{p_k}$, and $q_{l} = q_{end}$.

  \item $\idx(a_i) > 1$, and $a_i$ has subtrees, \textit{i.e.,} $n = (a_1 \ldots a_i \ldots a_{\nu(\idx(a_i))} \ldots a_l,  \idx, \nu)$, where $\idx(a_i) + 1 \neq \nu(\idx(a_i))$:\\
  \rules having a configuration with state $q_{sat}$ is sufficient to be accepted by the \VPA. Requiring $Calls(\pi(n, a_i)) \in FirstMatch(n, reg)$ is  necessary for getting a configuration with state $q_{sat}$. Therefore, $q_{sat}$ cannot occur before  $q_i$.  
  As $Calls(\pi(n, a_i))$ is accepted by \DFA~ for $reg$, the final state will be some $q_{final}^1 \in F^1$. Similar to the above case, $q_{i} = q_{final}^1$. \rules, $q_{\nu(\idx(a_i))} = q_{sat}$, and eventually $q_{l}$ will be $q_{end}$. Therefore, $n$ is accepted by the \VPA.
  \end{enumerate}
  \end{enumerate}
  \end{proof}

\begin{theorem}\label{thm:seq}
Let $s$ be a sequence policy and $\mathcal{L}(\vp{s})$ be the set of \rmnws accepted by its visibly pushdown automaton $\vp{s}$. Then, the set of \rmnws accepted by the policy, \textit{i.e.,} $\Seq{s}$ is equivalent those accepted by the visibly pushdown automaton $\vp{s}$, \textit{i.e.,} $\Seq{s} = \mathcal{L}(\vp{s})$.
\end{theorem}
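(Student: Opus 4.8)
The plan is to prove both inclusions of $\Seq{s} = \mathcal{L}(\vp{s})$ at once by establishing a single invariant that ties the run of $\mathcal{M}_s = \vp{s}$ on a rooted well-matched nested word $n = (a_1 \ldots a_l, \nu)$ to the run of the DFA $\mathcal{A}_1$ for $reg$ on the projected call sequence. Write $w_k$ for $Calls$ of the call symbols occurring among $a_1 \ldots a_k$, and let $\hat\delta^1$ denote the extended transition function of $\mathcal{A}_1$. The central claim is: so long as the run has not entered $q_{rej}$, the configuration $(q_k, \theta_k)$ reached after reading the prefix $a_1 \ldots a_k$ satisfies $q_k = \hat\delta^1(q_{init}^1, w_k)$, and moreover the stack symbol $q_{beg}$, while present, appears on the stack exactly once, immediately above $\bot$, having been pushed by the root call $a_1$. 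The whole argument reduces to showing that the VPA's state component faithfully simulates $\mathcal{A}_1$ on calls while treating returns as inert, and that the stack is only used to locate the root's matched return.

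First I would prove this invariant by induction on $k$. The base case is the root call $a_1 = \cs$: from $(q_{beg}, \bot)$ the rule $\{(q_{beg}, \cs, q, q_{beg}) \mid (q_{init}^1, s, q) \in \delta^1\}$ fires, pushing $q_{beg}$ and moving to $\hat\delta^1(q_{init}^1, s) = \hat\delta^1(q_{init}^1, w_1)$. For the inductive step I split on whether $a_{k+1}$ is a call or a return. On a call, the second call rule advances the state exactly as $\delta^1$ would and pushes the source state, which lies in $Q^1$; this preserves the state-invariant and does not push $q_{beg}$ again. On a non-root return, the crucial sub-step is to argue that the top of stack lies in $Q^1$ rather than being $q_{beg}$: since the matched call of $a_{k+1}$ is not the root, its source state was in $Q^1$ (by the state-invariant applied at that earlier call), so the rule $\{(q', q, \rs, q') \mid q', q \in Q^1\}$ fires, leaving the state unchanged, which is consistent with the fact that returns contribute no symbols to $w_k$.

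Next I would pin down acceptance. Because $n$ is rooted and well-matched, $a_l$ is precisely the matched return of the root $a_1$, so it is the unique return read with $q_{beg}$ on top of the stack, and no symbols remain after it. By the invariant, the state just before reading $a_l$ is $\hat\delta^1(q_{init}^1, w)$, where $w = Calls((a_x)_{x \in I})$ with $I = \{\idx(a_j) \mid C(a_j),\, a_j \in n\}$ is the full call sequence of $n$. The only transition that can reach the accepting state $q_{end}$ is $\{(q^1_{final}, q_{beg}, \rs, q_{end}) \mid q^1_{final} \in F^1\}$; every other situation with $q_{beg}$ on top is completed to $q_{rej}$. Hence $\mathcal{M}_s$ accepts iff $\hat\delta^1(q_{init}^1, w) \in F^1$, i.e.\ iff $w \in \mathcal{L}(reg)$, which is exactly the membership condition defining $\Seq{s}$; this yields both inclusions simultaneously.

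I expect the main obstacle to be the stack bookkeeping in the inductive step, specifically verifying that $q_{beg}$ is pushed only by the root and therefore sits at the very bottom, so that non-root returns always see a $Q^1$ symbol on top (ensuring the state-preserving return rule applies) while the root return alone triggers the acceptance check. The remainder is a routine simulation-equivalence between the VPA's state and the DFA, since returns are designed to be inert; care is needed only to confirm that the completion transitions to $q_{rej}$ never permit a spurious acceptance and that, for rooted words, the run terminates exactly at $a_l$.
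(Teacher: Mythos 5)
Your proposal is correct and follows essentially the same route as the paper: the paper also proves (in its two lemmas for the two inclusions) that the VPA state after any prefix equals the DFA state on the call projection of that prefix, that returns preserve the state, and that acceptance is decided at the root's matched return. Your version merely packages the two inclusions into a single biconditional and makes the stack bookkeeping (that $q_{beg}$ sits alone at the bottom and is seen only by the root's return) more explicit than the paper does.
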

\begin{proof}
The proof goes by \cref{lem:seq-dir1} and \cref{lem:seq-dir2}.
\end{proof}

\begin{lemma}\label[lemma]{lem:seq-dir1}
Let $s$ be a sequence policy of the form $\callseq ~reg$ and $n$ be a \rmnw. If $n \in \mathcal{L}(\vp{s})$ then $n \in \Seq{s}$, \textit{i.e.,} $\mathcal{L}(\vp{s}) \subseteq \Seq{s}$. Here, $\mathcal{L}(\vp{s})$ is the set of \rmnws accepted by the visibly pushdown automaton $\vp{s}$.
\end{lemma}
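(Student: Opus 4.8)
The plan is to exploit the fact that $\mathcal{M}_s = \vp{\callseq ~reg}$ is constructed so that its state component faithfully simulates the run of the DFA $\mathcal{A}_1$ for $reg$ on the \emph{subsequence of call symbols}, while return symbols merely pop the stack without advancing the simulation. Concretely, I would fix an accepting run $\rho(\mathcal{M}_s, n) = (q_0, \theta_0) \dots (q_l, \theta_l)$ with $q_0 = q_{beg}$, $\theta_0 = \bot$, and $q_l = q_{end}$, and aim to show $Calls((a_x)_{x \in I}) \in \mathcal{L}(reg)$, where $I$ collects the indices of the call symbols of $n$. Throughout I write $c_m$ for the projection $Calls$ of the call symbols among $a_1, \dots, a_m$.

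First I would record two structural facts about the run. (i) The state $q_{rej}$ is a sink: every completion edge leads back to $q_{rej}$ and nothing in $\delta_{c'} \cup \delta_{r'}$ leaves it, so since the run ends in $q_{end} \neq q_{rej}$ it never visits $q_{rej}$. (ii) The state $q_{beg}$ is never re-entered (no transition lists it as a destination), and $q_{end}$ is reached only at the final step, because the only transition into $q_{end}$, namely $(q^1_{final}, q_{beg}, \rs, q_{end})$, pops $q_{beg}$, and $q_{beg}$ is pushed solely by the root call $a_1$ (the rule $(q_{beg}, \cs, q, q_{beg})$ can fire only from $q_{beg}$, i.e. at step $0$), whose matching return is $a_l$ since $n$ is rooted. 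Hence every intermediate state $q_m$ with $1 \le m \le l-1$ lies in $Q^1$.

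The core of the argument is the invariant, proved by induction on $m$: for $1 \le m \le l-1$, the state $q_m$ equals $\delta^1_*(q^1_{init}, c_m)$, the DFA state reached by $\mathcal{A}_1$ on $c_m$ (here $\delta^1_*$ is the extended transition function of $\mathcal{A}_1$). The base case reads the root call via $(q_{beg}, \cs, q, q_{beg})$, so $q_1 = \delta^1(q^1_{init}, s) = \delta^1_*(q^1_{init}, c_1)$. For the step I case on the tag of $a_m$. If $a_m$ is a call, the applicable rule $(q_{m-1}, \cs, q_{dst}, q_{m-1})$ advances the state exactly as $\delta^1$ would while appending $s$ to $c$, giving $q_m = \delta^1_*(q^1_{init}, c_m)$. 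If $a_m$ is a return, the rule $(q_{m-1}, q, \rs, q_{m-1})$ keeps the state fixed while $c_m = c_{m-1}$ is unchanged. To justify that this DFA-preserving return rule (rather than a completion edge to $q_{rej}$) applies, I would invoke the auxiliary result established in the proof of \cref{lem:tree-dir1}: the top of the stack at a return is the value pushed by its matching call. Since $a_m$ with $m \le l-1$ is not the root's matching return $a_l$, its matching call is a non-root call, which was processed from a state in $Q^1$ and therefore pushed a genuine DFA state $q \in Q^1$; this is exactly the stack symbol required by the rule $(q_{m-1}, q, \rs, q_{m-1}) \in \delta_{r'}$.

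Finally, because $a_l$ is the root's return, $c_{l-1}$ coincides with the full call sequence $Calls((a_x)_{x \in I})$; the invariant at $m = l-1$ then yields $q_{l-1} = \delta^1_*(q^1_{init}, c_{l-1})$, and the accepting transition forces $q_{l-1} = q^1_{final} \in F^1$, so $\mathcal{A}_1$ accepts $c_{l-1}$ and hence $n \in \Seq{s}$. The main obstacle I anticipate is precisely the bookkeeping in the return case of the induction: one must argue that the popped symbol is a DFA state (so the state-preserving return rule fires), which forces a careful appeal to the matching-call stack discipline for well-matched nested words and to the fact (ii) above isolating $a_l$ as the unique return that pops $q_{beg}$.
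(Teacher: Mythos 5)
Your proposal is correct and follows essentially the same route as the paper's proof: the paper also fixes the accepting run, observes that the final transition forces $q_{l-1}$ to be a final DFA state, and establishes by induction on the prefix length that the VPA state after $a_1\ldots a_i$ equals the DFA state reached on the call subsequence. Your write-up simply supplies the stack-discipline and sink-state details that the paper leaves implicit.
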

\begin{proof}
Given a \rmnw ~$n = (a_1 \dots a_l , \nu)$, let the accepting run of the automaton 
\[\vp{s} = \mathcal{M}_{s} = (Q,~ q_{beg},~ \{q_{end}\},~ \Sigma,~ \Gamma,~ \bot,~ \delta_{c}, ~ \delta_{r})\] 
be 
\[\rho(\mathcal{M}_s, n) = (q_{beg}, \theta_0 = \bot) \dots (q_{end}, \bot).\]

This implies $q_{l-1}$ is some final state of $reg$'s DFA. By induction on $i$, where $1 \le i \le l$, we can show that the state $q_i$ in $\vp{s}$'s run on the prefix $a_1 \ldots a_i$ is same as the final state of the run of $reg$'s DFA on the sequence of just the call symbols between $a_1 \ldots a_i$. Therefore, $q_{l-1}$ being the accepting state of $reg$ DFA implies the sequence of calls in $n$ is accepted by $reg$'s DFA.
%

\end{proof}
\begin{lemma}\label[lemma]{lem:seq-dir2}
Let $s$ be a sequence policy of the form $\callseq ~reg$ and $n$ be a \rmnw. If $n \in \Seq{s}$ then $n \in \mathcal{L}(\vp{s})$, \textit{i.e.,} $\Seq{s} \subseteq \mathcal{L}(\vp{s})$. Here, $\mathcal{L}(\vp{s})$ is the set of \rmnws accepted by the visibly pushdown automaton $\vp{s}$.
\end{lemma}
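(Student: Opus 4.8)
The plan is to establish the converse of the state-correspondence used in \cref{lem:seq-dir1} and then check that the single accepting transition fires. Since $\mathcal{M}_s = \vp{\callseq~reg}$ is deterministic and complete, the word $n = (a_1 \ldots a_l, \nu)$ admits a unique run $\rho(\mathcal{M}_s, n) = (q_{beg}, \bot)(q_1, \theta_1)\ldots(q_l, \theta_l)$, so it suffices to show that this run ends in $q_l = q_{end}$.

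First I would re-establish the key invariant, by induction on the prefix length $i$: as long as the run has not entered $q_{rej}$, the VPA state $q_i$ reached after reading $a_1 \ldots a_i$ equals the state reached by $\mathcal{A}_1$ on $Calls$ of the call-subsequence of $a_1 \ldots a_i$, and every stack entry lies in $Q^1 \cup \{q_{beg}\}$ with $q_{beg}$ occurring only at the bottom, pushed by the root call $a_1$. The call transitions in $\delta_{c'}$ advance the simulated DFA and push the current $\mathcal{A}_1$-state (or $q_{beg}$ when reading $a_1$), while the return transitions $(q', q, \rs, q')$ in $\delta_{r'}$ leave the state unchanged and merely pop; since a return contributes no call symbol, this exactly mirrors the DFA ignoring returns, so the correspondence is preserved in both cases. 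The point to verify here is that the run never falls into $q_{rej}$ before the final symbol: because $n$ is rooted and well-matched, every intermediate return $a_i$ with $i < l$ matches some call $a_j$ with $j > 1$, so the stack top it sees was pushed at $a_j$ and is therefore a state in $Q^1$; hence every intermediate configuration is covered by a genuine transition in $\delta_{c'} \cup \delta_{r'}$, not by a completion transition to $q_{rej}$.

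With the invariant in hand, I would finish by analyzing the last symbol $a_l$. Since $n$ is rooted and well-matched, $a_l$ is the return matching the root call $a_1$, so by the stack invariant the top of $\theta_{l-1}$ is exactly $q_{beg}$. Because $a_l$ is a return, the call-subsequence of $a_1 \ldots a_{l-1}$ coincides with the full call-subsequence of $n$; the hypothesis $n \in \Seq{s}$ states that $Calls$ of this sequence lies in $\mathcal{L}(reg)$, so the $\mathcal{A}_1$-run on it ends in some $q^1_{final} \in F^1$, whence $q_{l-1} = q^1_{final}$ by the invariant. The only applicable return transition at a configuration with state $q^1_{final} \in F^1$ and stack top $q_{beg}$ is $(q^1_{final}, q_{beg}, \rs, q_{end})$, which yields $q_l = q_{end}$. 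As $q_{end}$ is the unique accepting state, $n \in \mathcal{L}(\vp{s})$.

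I expect the main obstacle to be the bookkeeping in the second paragraph: carefully arguing that the deterministic run stays within the DFA-simulating fragment $\delta_{c'} \cup \delta_{r'}$ and never triggers a completion transition to $q_{rej}$ at an intermediate step. This hinges entirely on the rooted, well-matched structure of $n$ guaranteeing that intermediate returns always see a $Q^1$-valued stack top while only the root's matching return sees $q_{beg}$, which is precisely the property that couples the nested-word structure to the otherwise purely sequential DFA simulation.
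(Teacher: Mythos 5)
Your proposal is correct and follows essentially the same route as the paper's (very terse) proof: establish that the VPA state tracks the DFA of $reg$ on the call symbols while returns preserve the state, conclude $q_{l-1} \in F^1$, and fire the accepting transition $(q^1_{final}, q_{beg}, \rs, q_{end})$ on the root's matching return. Your additional bookkeeping---the stack invariant locating $q_{beg}$ only at the bottom and the argument that intermediate configurations never fall into the completion transitions to $q_{rej}$---is a welcome elaboration of details the paper leaves implicit, not a different argument.
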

\begin{proof}
Given a \rmnw ~$n = (a_1 \dots a_l , \nu)$, if the sequence of all call symbols is accepted by $reg$'s DFA then after reading the last call symbol, say $a_i$, the state $q_{i} \in F^1$. Based on the transition function, return symbols preserve the source state. This implies, the state $q_{l-1}$ will be the final state of $reg$'s DFA (meaning the VPA will accept the word).
\end{proof}

\begin{theorem}\label{thm:start}
Let $p$ be a policy of the form $\start S: inner$ and 
$\mathcal{L}(\vp{p})$ be the set of \rmnws accepted by its visibly pushdown automaton $\vp{p}$. Then, 
the set of \rmnws accepted by the policy, \textit{i.e.,} $\Policy{p}$ is equivalent those accepted by the visibly pushdown automaton $\vp{p}$, \textit{i.e.,} 
$\Policy{p} = \mathcal{L}(\vp{p})$.
\end{theorem}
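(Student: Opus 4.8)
The plan is to mirror the two-inclusion strategy used for \cref{thm:tree} and \cref{thm:seq}: I would split \cref{thm:start} into a pair of lemmas establishing $\mathcal{L}(\vp{p}) \subseteq \Policy{p}$ and $\Policy{p} \subseteq \mathcal{L}(\vp{p})$, and argue each by reasoning about the accepting run of $\vp{\start S: inner}$ against its construction. Since $inner$ is either a hierarchical policy or a linear sequence policy, I would treat $inner$'s automaton $\mathcal{M}_{in}$ as a black box and invoke \cref{thm:tree} or \cref{thm:seq} as the induction hypothesis: acceptance of a subtree by $\mathcal{M}_{in}$ is equivalent to membership of that subtree in $\tree{inner}$ (respectively $\Seq{inner}$). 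Throughout I would reuse the stack-discipline invariant already exploited in \cref{lem:tree-dir1}, namely that in the run on a \rmnw the top-of-stack symbol read at any return symbol is exactly the value pushed by its matching call; this is what lets the VPA resume the $\mathcal{A}_1$ search cleanly after each inner subtree has been checked.

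For the forward direction $\mathcal{L}(\vp{p}) \subseteq \Policy{p}$, given a \rmnw $n$ with an accepting run ending in $q_{end}$, I would show that for every path $\pi(n, a_i) \in FirstMatch(n, \basealpha^* S)$ the subtree rooted at $a_i$ lies in $\tree{inner}$. The DFA $\mathcal{A}_1$ for $\basealpha^* S$, together with the restriction in $\delta_{c1}$ that keeps both source and destination states outside $F^1$ while descending, detects exactly the first-match prefixes: control is handed to $\mathcal{M}_{in}$ precisely when the path first reaches a symbol of $S$. Because the run avoids $q_{rej}$, and because $\delta_{r4}$ is the only way to leave such a subtree when $\mathcal{M}_{in}$ fails to return in its final state, I would conclude that each first-match subtree is accepted by $\mathcal{M}_{in}$ and hence, by the induction hypothesis, belongs to $\tree{inner}$. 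This is precisely the condition required for $n \in \Policy{p}$; the case where a branch contains no $S$-symbol is handled vacuously, with the run staying in $\mathcal{A}_1$ states and accepting via an acceptance rule of $\delta_{r1}$.

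For the reverse direction $\Policy{p} \subseteq \mathcal{L}(\vp{p})$, given $n \in \Policy{p}$, every first-match subtree satisfies $inner$, so by the induction hypothesis each is accepted by $\mathcal{M}_{in}$. I would then prove, by induction over the first-match paths ordered by the index of their endpoint $a_i$, an invariant stating that immediately after the matching return of each such $a_i$ the run has returned to the appropriate $\mathcal{A}_1$ state (via the backtracking transitions of $\delta_{r2}$) and has never entered $q_{rej}$. Applying the acceptance transitions of $\delta_{r1}$ at the root's matching return then yields $q_{end}$, so $n \in \mathcal{L}(\vp{p})$.

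The main obstacle, and the point where this proof departs most from \cref{thm:tree}, is the \emph{universal} quantification in the semantics of $\start S: inner$: the word must be rejected if even one first-match subtree fails $inner$. Unlike the existential hierarchical policies, whose VPA accepts as soon as it reaches the marker state $q_{sat}$, the $\start S$ automaton has no such marker and instead accepts by default unless it is driven into $q_{rej}$. The delicate part of the argument is therefore to establish a precise configuration invariant showing that the construction enumerates \emph{all and only} the first-match paths and preserves this across the retry transitions: I must verify that the ``no ancestor in $S$'' requirement is respected --- once $\mathcal{M}_{in}$ is entered on an $S$-rooted subtree the automaton does not restart the $\mathcal{A}_1$ search inside it --- and that backtracking correctly re-arms the $\mathcal{A}_1$ simulation when passing from one sibling subtree to the next. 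Getting this invariant right, and checking it is maintained by every clause of $\delta_{c1}, \delta_{c2}$ and $\delta_{r1}$--$\delta_{r4}$, is the crux of the proof.
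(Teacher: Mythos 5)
Your proposal matches the paper's proof in structure and substance: the paper likewise splits \cref{thm:start} into the two inclusions (\cref{lem:start-dir1} and \cref{lem:start-dir2}), treats $\vp{inner}$ as a black box whose correctness is supplied by \cref{thm:tree}/\cref{thm:seq}, argues the forward direction by showing that a failing first-match subtree would force the run into $q_{rej}$ (the paper phrases this as a contradiction, you phrase it directly), and argues the reverse direction by induction over the first-match paths with the same invariant that the run returns to the correct $\mathcal{A}_1$ state after each inner subtree. The stack-discipline observation and the identification of the universal quantification as the crux are also consistent with the paper's treatment, so this is essentially the same proof.
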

\begin{proof}
    The proof goes by \cref{lem:start-dir1} and \cref{lem:start-dir2}.
\end{proof}

\begin{lemma}\label[lemma]{lem:start-dir1}
Let $p$ be a policy of the form $\start S: inner$ and $n$ be a \rmnw. If $n \in \mathcal{L}(\vp{p})$ then $n \in \Policy{p}$, \textit{i.e.,} $\mathcal{L}(\vp{p}) \subseteq \Policy{p}$. Here, $\mathcal{L}(\vp{p})$ is the set of \rmnws accepted by the visibly pushdown automaton $\vp{p}$.
\end{lemma}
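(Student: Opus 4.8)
The plan is to take a rooted well-matched nested word $n = (a_1 \ldots a_l, \nu) \in \mathcal{L}(\vp{p})$ with an accepting run $\rho(\mathcal{M}_p, n) = (q_{beg}, \bot) \ldots (q_{end}, \bot)$ of $\vp{p} = \mathcal{M}_p$, and show directly that $n$ satisfies the defining condition of $\Policy{p}$: for every path $\pi(n, a_i) \in FirstMatch(n, \basealpha^* S)$ there is an $x$ with $\mrel{\idx(a_i)}{x}$ and $n[\idx(a_i), x] \in \tree{inner}$ (when $inner$ is hierarchical) or $\in \Seq{inner}$ (when $inner$ is sequential). As in the other cases, I would first invoke the stack invariant used throughout this appendix---in a run on a well-matched word the stack at each configuration records exactly the values pushed by calls with pending returns, and the top of stack at a return equals the value pushed by its matching call---so that the automaton's behaviour on any subtree is determined entirely by the configuration at the subtree's opening call.

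The first key step is to relate $\mathcal{M}_p$'s configuration states to the run of the DFA $\mathcal{A}_1$ for $\basealpha^* S$. By induction on the length of a path prefix I would show that, as long as no symbol of $S$ has yet been read along the current path, the VPA state after a call symbol coincides with the $\mathcal{A}_1$-state on the corresponding call projection: the two simulation rules of $\delta_{c1}$ mirror $\delta^1$ while remaining in $Q^1 \setminus F^1$, and on return symbols $\delta_{r2}$ restores the stacked DFA state (the same backtracking bookkeeping as in the $\allchildren$ case, \cref{lem:tree-dir1}). The last two rules of $\delta_{c1}$ then force that the instant a path first reaches a symbol of $S$---the instant $\mathcal{A}_1$ would enter $F^1$---the VPA jumps to the initial state of $\mathcal{M}_{in}$ and begins simulating $inner$ on the subtree rooted at that $S$-symbol. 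This is exactly what pins the checked subtrees to \emph{first} (shortest) matches of $\basealpha^* S$, since once inside the $\mathcal{M}_{in}$ simulation the automaton no longer runs $\mathcal{A}_1$ and so never treats a deeper $S$-symbol (necessarily having an $S$-ancestor) as a new root.

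The heart of the proof, and the step I expect to be the main obstacle, is turning the universal quantifier of the semantics into a property of the single linear accepting run. Here I would argue via the rejection rule $\delta_{r4}$: when $\mathcal{M}_p$ reads the return $a_x$ closing a first-match subtree, the configuration witnesses whether the embedded $\mathcal{M}_{in}$ finished in a state of $F^{in}$; if it did not, $\delta_{r4}$ sends the run to the absorbing sink $q_{rej}$, from which $q_{end}$ is unreachable (the construction makes $q_{rej}$ absorbing on both calls and returns). Since the run is accepting, $q_{rej}$ is never entered, so \emph{every} first-match subtree was accepted by $\mathcal{M}_{in}$. The delicate point to pin down is that the depth-first, backtracking traversal visits each first-match subtree exactly once with $\mathcal{M}_{in}$ started from its initial state (using the restart rules of $\delta_{c2}$ together with the switch rules of $\delta_{c1}$), and correctly restores the $\mathcal{A}_1$-state between consecutive first-match subtrees through the stack via the successful-return rules of $\delta_{r2}$---the same correctness argument as in \cref{lem:tree-dir1}, now applied to \emph{all} matching paths rather than a single witness.

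Finally, I would close by appealing to the already-established soundness of the inner automaton: by \cref{thm:tree} when $inner$ is hierarchical, or \cref{thm:seq} when $inner$ is a sequence policy, acceptance of $n[\idx(a_i), x]$ by $\mathcal{M}_{in} = \vp{inner}$ yields $n[\idx(a_i), x] \in \tree{inner}$, respectively $\in \Seq{inner}$. Ranging over all $\pi(n, a_i) \in FirstMatch(n, \basealpha^* S)$ then reproduces exactly the defining condition of $\Policy{p}$, so $n \in \Policy{p}$, establishing $\mathcal{L}(\vp{p}) \subseteq \Policy{p}$.
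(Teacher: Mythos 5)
Your proposal is correct and follows essentially the same route as the paper's proof: relate the VPA state to the DFA state of $\basealpha^* S$ by induction on path prefixes, observe that each first-match subtree is processed by an embedded simulation of $\vp{inner}$, and use the fact that an accepting run never enters the absorbing state $q_{rej}$ (which $\delta_{r4}$ would force if any such subtree were rejected) to conclude that every first-match subtree is accepted by $\vp{inner}$. The paper phrases this as a proof by contradiction and leaves the final appeal to the inner soundness results implicit, whereas you state it directly and cite them explicitly, but the argument is the same.
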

\begin{proof}
By definition of $\vp{p}$, $\mathcal{A}_1$ is the automaton for $(\basealpha^*S)$ and  $Q^1$ is the set of all its states, $q_{init}^1$ is its initial state, and $F^1$ is the set of final states. $F^{p_i}$ is the set of final states for the VPA of the inner policy $p_i$ and $q_{beg}^{i}$ is its initial state.\\
Given a \rmnw ~$n = (a_1 \dots a_l, \nu)$, let the accepting run of 
\[\vp{p} = \mathcal{M}_{p} = (Q,~ q_{beg},~ \{q_{end}\},~ \Sigma,~ \Gamma,~ \bot,~ \delta_{c},~  ~ \delta_{r})\] 
be 
\[\rho(\mathcal{M}_{p}, n) = (q_{beg}, \theta_0 = \bot) \dots (q_{end}, \bot).\]
We prove the above lemma by contradiction. Since $n$ was accepted by $\vp{p}$, no configuration can have state $q_{rej}$. Suppose there exists a path $\pi(n, a_i)$ such that $Calls(\pi(n, a_i)) \in \mathcal{L}(\basealpha^*S)$, but $n[\idx(a_i), \nu(\idx(a_i))] \notin \mathcal{L}(\vp{inner})$. Since $Calls(\pi(n, a_i)) \in FirstMatch(n, \basealpha^*S)$, the state in the last configuration of $\basealpha^*S$'s DFA will be its final state. By induction on the length of the prefix of the path $\pi(n, a_i)$, we can show that the state $q_{i-1}$ in the VPA's run is the same as the state of the second to last configuration in the run of the DFA on $Calls(\pi(n, a_i))$. Therefore, $q_{i-1} \in Q^1$ such that on reading $a_i$ the VPA will transition to a state in $q^{in} \in Q^{in}$, where $(q_{beg}^{in}, \cs, q^{in}, q_{beg}) \in \delta^{in}_{c}$. The state $q_{i-1}$ will be pushed on the stack. In the run $\rho(\mathcal{M}_p, n)$, the call transitions that will be applied on the symbols between $a_i, \ldots, a_{\nu(\idx(a_i))}$ will be the call transitions of $\vp{inner}$ and similarly the return transitions will be the same as the return transitions of $\vp{inner}$. If $n[\idx(a_i), \nu(\idx(a_i))] \notin \mathcal{L}(\vp{inner})$, the state $q'=  q_{\nu(\idx(a_i))-1}$ will not satisfy $(q', q_{i-1}, \rs, q_{i-1})$, and instead $q_{\nu(\idx(a_i))} = q_{rej}$. This contradicts that no configuration has $q_{rej}$ as its state. Therefore, the subtree of every path matching $\basealpha^*S$ will satisfy the $inner$ policy. 
\end{proof}
\begin{lemma}\label[lemma]{lem:start-dir2}
Let $p$ be a policy of the form $\start S: inner$ and $n$ be a \rmnw. If $n \in \Policy{p}$ then $n \in \mathcal{L}(\vp{p})$, \textit{i.e.,} $\Policy{p} \subseteq \mathcal{L}(\vp{p})$. Here, $\mathcal{L}(\vp{p})$ is the set of \rmnws accepted by the visibly pushdown automaton $\vp{p}$.
\end{lemma}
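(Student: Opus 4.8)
The plan is to prove the reverse inclusion $\Policy{p} \subseteq \mathcal{L}(\vp{p})$ as the second half of the structural induction behind \cref{thm:soundness}, so I may assume the full soundness equivalence for the inner policy, i.e. that $\mathcal{L}(\vp{inner})$ coincides with $\tree{inner}$ (or $\Seq{inner}$, as appropriate). Fix a \rmnw\ $n = (a_1 \dots a_l, \nu) \in \Policy{p}$. By the definition of $\Policy{p}$ in \cref{fig:intp}, for every path $\pi(n, a_i) \in Seq(n)$ with $\pi(n, a_i) \in FirstMatch(n, \basealpha^* S)$ there is an $x$ with $\mrel{\idx(a_i)}{x}$ and $n[\idx(a_i), x] \in \tree{inner}$; by the inductive hypothesis each such subtree is therefore accepted by $\mathcal{M}^{in} = \vp{inner}$. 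The goal is to show that the run of $\vp{p}$ on $n$ ends in $q_{end}$, which—since $\vp{p}$ is deterministic and complete—amounts to showing it never enters the sink state $q_{rej}$.

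The central step is an invariant on the run, proved by induction on the prefix length and mirroring the state-tracking argument already used in \cref{lem:start-dir1}, but run forward. While the automaton is in its \emph{search phase} (reading along a root-path with no first-match ancestor yet), its control state equals the state that $\mathcal{A}_1$, the DFA for $\basealpha^* S$, reaches on $Calls$ of the current root-path, with the stack recording the DFA states needed to backtrack on returns; these are exactly the transitions of $\delta_{c1}$ and the DFA-backtracking part of $\delta_{r2}$. When a call $a_i$ completes a first-match path—$\mathcal{A}_1$ enters $F^1$ for the first time along that branch—the rules in $\delta_{c1}$ (including the two special cases, one for $a_i = a_1$ and one for longer paths) push the saved DFA state $q_{i-1}$ and hand control to the initial configuration of $\mathcal{M}^{in}$ on the subtree rooted at $a_i$.

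Inside such a first-match subtree the transitions of $\vp{p}$ are precisely those of $\mathcal{M}^{in}$, embedded via $\delta_{c2}$ and $\delta_{r3}$, so the sub-run on $n[\idx(a_i), x]$ coincides with the run of $\mathcal{M}^{in}$ on that subtree. Because that subtree lies in $\tree{inner} = \mathcal{L}(\mathcal{M}^{in})$, the sub-run reaches a state in $F^{in}$ just before the closing return $a_x$; the resumption rule in $\delta_{r2}$ then pops the stored DFA state and continues the search, whereas the diversion to $q_{rej}$ in $\delta_{r4}$ fires only when the inner sub-run fails. Since every first-match subtree is accepted by $\mathcal{M}^{in}$, the run never takes a $\delta_{r4}$ transition into $q_{rej}$, and on non-first-match portions the pure DFA simulation likewise avoids $q_{rej}$. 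Finally, reading the outermost return $a_l$ (matched to $a_1$, so the stack top is $q_{beg}$) triggers the acceptance rule $(q^1, q_{beg}, \rs, q_{end}) \in \delta_{r1}$, leaving $\vp{p}$ in $q_{end}$; hence $n \in \mathcal{L}(\vp{p})$.

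The main obstacle is the bookkeeping that glues the embedded $\mathcal{M}^{in}$ sub-runs to the DFA search: I must verify that the stack discipline respects the well-matched nesting, so the DFA state stored on entering a first-match subtree is exactly the state popped on the matching return, and that the search then resumes in the correct sibling branch. A prerequisite is the observation that first-match occurrences of $S$ are pairwise non-nested—any $S$-symbol strictly inside another's subtree has an $S$-ancestor and so is not a first match—which guarantees that the automaton alternates cleanly between search mode and inner-checking mode and never interleaves the two simulations. Establishing this invariant across the descent and backtrack transitions, together with the base case where the root $a_1$ itself lies in $S$, is where the bulk of the routine-but-delicate case analysis lives.
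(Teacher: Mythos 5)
Your proposal is correct and follows essentially the same route as the paper's proof: track the DFA for $\basealpha^*S$ during the search phase, observe that on each first-match subtree the run of $\vp{p}$ coincides with that of $\vp{inner}$, invoke the structural induction hypothesis to conclude each such subtree is accepted, and use the stacked DFA state to resume the search and reach $q_{end}$ on the outermost return. The only cosmetic difference is that the paper phrases the argument as induction on the number of first-match paths while you establish a run invariant by induction on prefix length; your added observation that first-match occurrences of $S$ are pairwise non-nested is implicit in the paper but worth making explicit.
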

\begin{proof}
We prove this by induction on the number of paths that match $FirstPath(n, \basealpha^*S)$. The key idea is that in any path $\pi(n, a_i)$ in the first match set, the state at $q_{i}$ is some state in $Q^{in}$ which is a  state from the initial state of $\vp{inner}$. The run of the $\vp{p}$ between the symbols of the subtree $n[\idx(a_i), \nu(\idx(a_i))]$ is equivalent to the run of the $\vp{inner}$. Therefore, by induction hypothesis, since the subtree is in the language of $\Policy{inner}$, the subtree is accepted by $\vp{inner}$. Based on the rules, upon reading the symbol $a_{\nu(\idx(a_i))}$, $\vp{p}$ transitions to the state $q_{i-1}$. We can show that eventually on the last symbol $a_l$, either the source state will be certain state in $Q^1$ or $Q^{in}$ that goes to the $q_{end}$ state.
\end{proof}

\begin{theorem} \label{thm:ap-distmon}
Given a nested word $w$ over some call-return augmented alphabet $\Sigma$ and a VPA $\mathcal{M}$ whose run on $w$  is  $\rho(w) = (q_1, \theta_1) \ldots (q_n, \theta_n)$ then:
\begin{enumerate}
\item the run of $\mathcal{M}$'s centralized monitor is $\llbracket \mathcal{M}\rrbracket_{central} (q_1, \theta_1)~ w =  (q_n, \theta_n)$, and 
\item the run of $\mathcal{M}$'s distributed monitor $\llbracket \mathcal{D}\rrbracket_{dist} (q_1, \theta_1)~w =  (q_n, \theta_n)$.
\end{enumerate}
\end{theorem}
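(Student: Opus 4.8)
\section*{Proof plan for \texorpdfstring{\cref{thm:ap-distmon}}{Theorem}}

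The plan is to prove both parts by induction on the length of the nested word $w$, exploiting the fact that $\llbracket \mathcal{M}\rrbracket_{central}$, $\llbracket \mathcal{D}\rrbracket_{dist}$, and the run $\rho(w)$ are all generated by folding single-step transitions over the symbols of $w$ from left to right. Writing $w = (a_1 \ldots a_{n-1}, \nu)$ so that $\rho(w) = (q_1, \theta_1) \ldots (q_n, \theta_n)$, I would actually prove the stronger statement that for every prefix length $0 \le k \le n-1$ the monitor applied to $(a_1 \ldots a_k, \nu[\idx(a_1), \idx(a_k)])$ from $(q_1, \theta_1)$ returns exactly the configuration $(q_{k+1}, \theta_{k+1})$ of the run; the theorem is then the case $k = n-1$.

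Part (1) is essentially definitional. The recursive clause for $\llbracket \mathcal{M}\rrbracket_{central}$ computes the result on $a_1 \ldots a_{k+1}$ by taking the result on $a_1 \ldots a_k$ and applying the VPA single-step relation $\xrightarrow{e}$ to the last symbol $a_{k+1} = (e, i)$. This is precisely the rule by which $\rho(w)$ generates $(q_{k+2}, \theta_{k+2})$ from $(q_{k+1}, \theta_{k+1})$. So the base case $k = 0$ returns $(q_1, \theta_1)$ by the empty-word clause, and the inductive step matches the run step for step; taking $k = n-1$ gives $\llbracket \mathcal{M}\rrbracket_{central}(q_1,\theta_1)\,w = (q_n, \theta_n)$.

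For part (2) the one substantive step is a single-step equivalence lemma: for every symbol $x \in \Sigma$ and configuration $(q, \theta)$, we have $(q, \theta) \xrightarrow{x} (q', \theta')$ if and only if $(q, \theta) \xrightarrow{x}_{dist} (q', \theta')$. This is immediate from the definition of $\extractmonitor$. If $x = \vpacall{e}$, the sub-monitor $\mathcal{D}(e)$ defines $\shortcall(q) = (q', s)$ exactly when $\delta_c(q, \vpacall{e}) = (q', s)$, so both relations push the same $s$ and move to the same $q'$. If $x = \vparet{e}$, then $\shortret(q, s) = q'$ exactly when $\delta_r(q, s, \vparet{e}) = q'$, and both relations pop the same top-of-stack symbol $s$ off the identical stack $\theta$. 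Since the compiled VPAs are deterministic and complete, these sub-functions are well-defined and total, so $\xrightarrow{x}$ and $\xrightarrow{x}_{dist}$ agree as functions. Because $\llbracket \mathcal{D}\rrbracket_{dist}$ has the same recursive shape as $\llbracket \mathcal{M}\rrbracket_{central}$ with $\xrightarrow{e}_{dist}$ in place of $\xrightarrow{e}$, the identical induction then carries through and yields $\llbracket \mathcal{D}\rrbracket_{dist}(q_1, \theta_1)\,w = (q_n, \theta_n)$.

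I expect no deep obstacle; the work is in the bookkeeping. In the inductive step I must verify that the restricted matching relation $\nu[\idx(a_1), \idx(a_k)]$ on the prefix agrees with the one underlying the run, so that the configurations named by the recursion are literally those of $\rho(w)$. I must also ensure that popping on a return symbol is unambiguous, which relies on the standard invariant that in any run the top-of-stack symbol at a return equals the symbol pushed by its matched call; this guarantees the stack $\theta$ handed to $\shortret$ carries the correct $s$. Both points are routine given the determinism of the VPA, so all the conceptual content is concentrated in the single-step equivalence lemma that links the fragmented distributed transition functions back to the monolithic VPA transitions.
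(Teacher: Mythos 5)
Your proposal is correct and matches the paper's own proof: both argue by induction on the length of the nested word, with the base case given by the empty-word clause and the inductive step reduced to the observation that $\xrightarrow{x}_{dist}$ coincides with the VPA's single-step relation because $\extractmonitor$ defines each sub-monitor's $\shortcall$ and $\shortret$ directly from $\delta_c$ and $\delta_r$. Your version merely makes the single-step equivalence explicit as a lemma and is slightly more careful about the prefix indexing, but the substance is identical.
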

\begin{proof}
The proof in case of both the centralized and decentralized monitor goes by induction on the length of the nested word $w$. We will show the more interesting result (2) of the above theorem.
\begin{itemize}
\item Base case $w = (\epsilon, \phi)$ or the empty string:
The run of a VPA on an empty string keeps it in the initial configuration just like the distributed monitor.
\item Let the inductive hypothesis be that for a nested word $w = (a_1 \ldots a_{k}, \nu)$ the following holds:
\[\rho(w) = (q_1, \theta_1) \ldots (q_k, \theta_k)\]
\[\llbracket \mathcal{D}\rrbracket_{dist} (q_1, \theta_1)~w =  (q_k, \theta_k)\]

Now we need to show that the run of VPA and the distributed monitor are consistent for a nested word $w' = (a_1 \ldots a_{k} a_{k+1}, \nu')$, where $\nu' = \nu \cup \{(\idx(a_{k+1}), +\infty)\}$ for some natural number $j$ if $a_{k+1}$ is a call symbol, and if $a_{k+1}$ is a return symbol then $\nu' = \nu \cup \{(i, \idx(a_{k+1}))\}$ for some natural number $i$ or $-\infty$.
We prove the above for the cases when:
\begin{enumerate}
\item Case $a_{k+1} = \langle e$ is a call:
The distributed monitor will pick the local monitor $\mathcal{D}(e)$ and by definition of the single step transition operator, it will go from state $(q_k, \theta_k)$ to the state $(q_{k+1}, \theta_{k+1})$---same as that of the VPA.
\item Case $a_{k+1} = e \rangle$:
The proof in this case is similar to the above case.
\end{enumerate}
\end{itemize}
\end{proof}

\section{Complexity Results} \label{app:complexity}
\begin{definition}[Depth of a policy]
Given a policy, the depth is inductively defined as:
\begin{align*}
& \depth{\start S: inner} = 1 + \depth{inner}&\\
&\depth{\match~\regexone \allpath \regextwo} = 1  &\\
&\depth{\match~ \regex \allchildren p} = 1 + \depth{p} & \\
&\depth{\match ~\regex \existschild p_1 ~\textsf{\textbf{then}}\ldots \textsf{\textbf{then}}~p_k} = 1 + \max \{\depth{p_1}, \ldots, \depth{p_k}\}&\\
&\depth{\callseq~\regex} = 1&
\end{align*}
\end{definition}

\begin{definition}[Maximum fan-out of a policy]
The maximum fan-out for a policy is inductively defined as:
\begin{align*}
& \fanout{\start S: inner} = 1 + \fanout{t_{inner}}&\\
&\fanout{\match~\regexone \allpath \regextwo} = 2  &\\
&\fanout{\match~ \regex \allchildren p} = 1 + \fanout{p} & \\
&\fanout{\match ~\regex \existschild p_1 ~\textsf{\textbf{then}}\ldots \textsf{\textbf{then}}~p_k} = 1+\max \{k, \fanout{p_1}, \dots, \fanout{p_k}\}&\\
&\fanout{\callseq~\regex} = 1&
\end{align*}
\end{definition}
\begin{definition}[Greatest DFA states]
Given a policy, the maximum number of states across all the DFAs of regular expressions present in the policy can be inductively defined as below. Here, $\autsize{r}$ is the number of states in the DFA of $r$.

\begin{align*}
& \lreg{\start S: inner} = \lreg{inner}&\\
&\lreg{\match~\regexone \allpath \regextwo} = \max \{\autsize{\regexone}, \autsize{\regextwo}\} &\\
&\lreg{\match~ \regex \allchildren p} = \max \{\autsize{\regex}, \lreg{p}\} & \\
&\lreg{\match ~\regex \existschild p_1 ~\textsf{\textbf{then}}\ldots \textsf{\textbf{then}}~p_k} = \max \{\autsize{\regex}, \lreg{p_1}, \dots, \lreg{p_k}\}&\\
&\lreg{\callseq ~\regex} = \autsize{\regex}&
\end{align*}
\end{definition}

\begin{theorem}

Suppose the DFA of every regular expression in a policy $p$ has at most $R$ states; each sub-policy has at most $k$ immediate sub-expressions, \textit{i.e.}, fan-out at most $k$; and (nesting) depth $d$. Then the VPA $\mathcal{M}(p)$ such that $\nw{p} = \mathcal{L}(\mathcal{M})$ has $\mathcal{O}((k+1)^d R)$ states. 
\end{theorem}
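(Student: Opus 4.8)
The plan is to prove the bound by structural induction on the policy $p$, reading the state counts directly off the explicit state sets in the compilation rules. Let $S(p)$ denote the number of states of $\vp{p}$, i.e. $S(p) = |Q|$ where $Q$ is the state set in the definition of $\vp{p}$. The crux of the argument is a single observation extracted from the five compilation rules: \emph{each policy constructor contributes only $\mathcal{O}(R)$ fresh states on top of the states of its immediate sub-policies' VPAs.} Concretely, for $\match~\regexone \allpath \regextwo$ we have $Q = \{q_{beg}, q_{end}, q_{sat}, q_{rej}^2\} \cup Q^1 \cup Q^2 \cup \tilde{Q^2}$, so $S = 4 + |Q^1| + 2|Q^2| \le 4 + 3R$; for $\callseq~\regex$ we have $Q = \{q_{beg}, q_{end}, q_{rej}\} \cup Q^1$, so $S \le 3 + R$; for $\match~\regex \allchildren p_1$ we have $S \le 4 + R + S(p_1)$; for $\match~\regex \existschild p_1~\textsf{\textbf{then}}\ldots\textsf{\textbf{then}}~p_k$ we have $S \le 5 + R + \sum_{i=1}^k S(p_i)$; and for $\start S: inner$ we have $S \le 3 + R + S(inner)$. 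Since $R \ge 1$, in every case the fresh (non-sub-policy) states number at most $cR$ for a fixed constant $c$ independent of $R$, $k$, $d$ (e.g. $c = 8$ absorbs all additive constants and the single doubled copy $\tilde{Q^2}$).

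With this per-level bound in hand, I would set up the governing recurrence. Let $f(d)$ be the maximum of $S(p)$ over all policies $p$ of depth at most $d$ and fan-out at most $k$ whose regexes have DFAs of size at most $R$. The two base-case constructors ($\allpath$, $\callseq$) have depth $1$ and give $f(1) \le cR$. For the inductive constructors, every immediate sub-policy has depth at most $d-1$ and there are at most $k$ of them; the $\existschild$ case is the one that forces the branching factor $k$, while $\allchildren$ and $\start$ have a single sub-policy (and $1 \le k$). Combining with the per-level bound yields $f(d) \le k\, f(d-1) + cR$.

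Solving the recurrence gives the stated bound. I claim $f(d) \le cR\,(k+1)^d$, proved by induction on $d$. The base case $d=1$ holds since $cR \le cR(k+1)$. For the step ($d \ge 2$), $f(d) \le k\, f(d-1) + cR \le k\,cR(k+1)^{d-1} + cR$, and since $cR \le cR(k+1)^{d-1}$ this is at most $cR(k+1)^{d-1}(k+1) = cR(k+1)^d$. Hence every policy of depth $d$ and fan-out $k$ satisfies $S(p) \le cR(k+1)^d = \mathcal{O}((k+1)^d R)$, and in particular this holds for the top-level service-tree policies for which $\nw{p} = \mathcal{L}(\vp{p})$. Using $(k+1)$ rather than $k$ as the base of the exponential is exactly what lets the additive $cR$ term be absorbed uniformly, including in the degenerate case $k=1$ where the true growth is the linear-in-$d$ bound $f(d) = \mathcal{O}(dR) \subseteq \mathcal{O}(2^d R)$.

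The main obstacle I anticipate is not the recurrence but the bookkeeping needed to justify the per-level $\mathcal{O}(R)$ bound faithfully from the compilation rules: one must verify that no constructor silently introduces more than a constant number of copies of a DFA's state set (the only duplication is the single augmented copy $\tilde{Q^2}$ in the $\allpath$ rule), and that the auxiliary control states added to complete each VPA (the various $q_{sat}$, $q_{rej}$, $q_{exists}$, $q_{beg}$, $q_{end}$) are fixed in number and independent of $R$, $k$, and $d$. Once that inspection is carried out, the remaining induction is routine.
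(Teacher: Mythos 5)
Your proposal is correct and follows essentially the same route as the paper: a structural induction that reads the state counts directly off the explicit state sets in the compilation rules, observes that each constructor adds only $\mathcal{O}(R)$ fresh states (with the one doubled copy $\tilde{Q^2}$ in the $\allpath$ rule), and uses the base $(k+1)$ to absorb the additive term. The only difference is presentational—you factor the argument through the recurrence $f(d) \le k\,f(d-1) + cR$, whereas the paper carries the concrete invariant $(k+1)^d(R+5)$ directly through the induction—so the two arguments are interchangeable.
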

\begin{proof}\label{proof:complexity}
We show this by proving that the number of states for any policy $p$ is bounded by\\ $(k+1)^d(R+5)$.
This proof goes by structural induction on the AST of the policy $p$: 
\paragraph*{Base Cases:}
\begin{itemize}
\item Linear policy, $\callseq ~\regex$: 
By definition, $d = 1$, $k=1$, and $R = \autsize{reg}$. The number of states in $\mathcal{M}$ is $3+\autsize{reg}$, which is less than $(k+1)^d(\autsize{reg}+5)$.
\item Hierarchical policy, $\match~\regexone \allpath
 \regextwo$: By definition, $d = 1$, $k=2$, $\autsize{reg_1}, \autsize{reg_2} \le R$. The number of states in $\mathcal{M}$ is $\autsize{reg_1} + 2\autsize{reg_2} + 4$, which is less than $(k+1)^d(R+5) = 3(R+ 5)$.
\end{itemize}
\paragraph*{Inductive Cases:}
The proof for all inductive cases go similarly:
\begin{itemize}
\item All child policy, $\match~ \regex \allchildren p$:
Let $\mathcal{M}_p$ be the VPA of $p$ with $m$ states. 
By definition, $d = 1 + \depth{p}$, $k=1+\fanout{p}$, $\autsize{reg} \le R$. The number of states in $\mathcal{M}$ is $\autsize{reg_1} + m + 4$. We can show that the number of states in $\mathcal{M}$ satisfy the theorem statement by using the IH $m \le (\fanout{p} + 1)^{\depth{p}}(\lreg{p} + 5)$.
\item Exists child policy, $\match ~\regex \existschild p_1 ~\textsf{\textbf{then}}\ldots \textsf{\textbf{then}}~p_k$:
The proof goes similarly using the IH about the bound on number of states in the VPA of each sub-policy $p_i$.
\item Start policy, $\start S:~ inner$:
Similar to the above case using the IH on the number of states in $inner$'s VPA.
\end{itemize}
Since the constructed VPAs are total, the number of transitions will be quadratic in the order of number of states because the number of stack symbols is atmost number of states.
\end{proof}
\else
\fi
\end{document}